\documentclass[format=acmsmall, review=false]{acmart}
\usepackage{acm-ec-26}
\usepackage{booktabs} % For formal tables
\usepackage[ruled]{algorithm2e} % For algorithms

\usepackage{dsfont}
\usepackage{mathtools}
\usepackage{amsmath}
\usepackage{acronym}
\usepackage{cleveref}
\newacro{iid}[i.i.d.]{Independent and Identically Distributed}
\newacro{bdp}[BDP]{Backward Dynamic Programming}
\newacro{ldp}[LDP]{Local Differential Privacy}

\newcommand{\hr}[1]{\textcolor{blue}{[Hugo: #1]}}

\newcommand{\mathieu}[1]{\textcolor{teal}{[Mathieu: #1]}}

\newcommand{\ind}[1]{\mathds{1}\left\{#1\right\}}
\newcommand{\dham}{d_{\mathrm{Ham}}}
\newcommand{\Dcal}{\mathcal{D}}
\newcommand{\NN}{\mathbb{N}}

\newcommand{\EE}{\mathbb{E}}
\newcommand{\mech}{\mathcal{Q}}
\newcommand{\RR}{\mathbb{R}}
\newcommand{\PP}{\mathbb{P}}
\newcommand{\Vn}{\mathcal{V}}
\newcommand{\Mn}{\mathcal{M}}
\newcommand{\Wn}{\mathcal{W}_\varepsilon}
\newcommand{\Ln}{\mathcal{L}_\varepsilon}

\DeclareMathOperator*{\argmax}{arg\,max}

\newtheorem{remark}{Remark}

\newtheorem{assumption}{Assumption}

\SetAlFnt{\small}
\SetAlCapFnt{\small}
\SetAlCapNameFnt{\small}
\SetAlCapHSkip{0pt}
\IncMargin{-\parindent}

\usepackage{nicefrac}
\setcitestyle{authoryear}

\title[Prophet Inequalities under LDP]{Prophet Inequalities under Local Differential Privacy}

\author{Achraf Azize\textsuperscript{*}}
\affiliation{%
  \institution{CREST, ENSAE, FairPlay Joint Team, IP Paris}
  \country{France}
}

\author{Mathieu Molina\textsuperscript{*}}
\affiliation{%
  \institution{Tel Aviv University}
  \country{Israel}
}

\author{Hugo Richard\textsuperscript{*}}
\affiliation{%
  \institution{Criteo AI Lab, FairPlay Joint Team}
  \country{France}
}

\author{Vianney Perchet}
\affiliation{%
  \institution{CREST, ENSAE, Criteo AI Lab, FairPlay Joint Team}
  \country{France}
}

\renewcommand{\shortauthors}{Azize, Molina, Richard, and Perchet}

\begin{abstract}

Many online decision platforms, from hiring marketplaces to auctions, face a tension between efficient decision-making and the protection of participants' privacy. Personal information, such as a candidate’s test score or a bidder’s valuation linked to protected data, is sensitive, and fear of data resale or reputational harm can make participants reluctant to share it. Furthermore, platforms can be untrusted or even incentivized to resell data, making local privacy guarantees that do not rely on a trusted centralized curator preferable. We initiate the study of optimal stopping and prophet inequalities under local differential privacy (LDP). Each of $n$ independent arriving values is observed only through reports generated by an $\varepsilon$-LDP mechanism. The decision maker must design the $\varepsilon$-LDP mechanisms, and choose an irrevocable stopping time to maximise the expected selected \emph{true} value. We characterize the optimal online stopping rule under LDP and show that simple binary mechanisms suffice: an optimal LDP stopping rule can be implemented via a randomized-response-type report and a dynamic-programming threshold rule. We then quantify performance via tight competitive ratios against two benchmarks. Relative to the \emph{optimal non-private online} policy, we prove a tight worst-case competitive ratio of $e^{\varepsilon}/(n - 1 + e^{\varepsilon})$, interpolating between $1/n$ (full privacy) and $1$ (no privacy). Relative to an \emph{LDP prophet}, who designs $\varepsilon$-LDP mechanisms but observes the full privatized sequence before deciding what to select, we prove a tight competitive ratio of $(1 + e^{-\varepsilon})/2$, interpolating between $1$ (full privacy) and the classical $1/2$ bound (no privacy). Notably, increasing privacy shrinks the LDP prophet's advantage faster than it degrades online performance, closing the performance gap.

\end{abstract}

\begin{document}

% Title page for title and abstract only.
% \begin{titlepage}

\maketitle
\begingroup
\renewcommand\thefootnote{*}
\footnotetext{Equal contribution.}
\endgroup

% Optionally include a table of contents
\vspace{1cm}
\setcounter{tocdepth}{1} % adjust to 1 if desired
\tableofcontents

% \end{titlepage}

% Paper body
\section{Introduction}

% Optimal stopping intro
Online stopping problems lie at the core of decision-making under uncertainty. In their simplest form, they model situations in which a decision maker observes a sequence of values $X_1, X_2, \dots$ and must decide, at each step in time $t \in\{1, 2, \dots \}$, whether to stop and irrevocably select the current option gaining $X_t$ or to continue in the hope of encountering a better one in the future. This framework has a long and influential history in economics, probability, and operations research, and underpins a wide range of applications, from job search and hiring to asset sales, procurement, and dynamic allocation problems.

% Hiring / ads / markets example
A canonical example of online stopping is hiring. A firm interviews candidates sequentially, each with an unknown quality, and must decide immediately whether to make an offer or move on. Once a candidate is rejected, they are no longer available. The firm’s objective is to hire a strong candidate despite not knowing the quality of future candidates and being unable to recall past applicants. Similar stopping decisions arise in online advertising: an advertiser observes a stream of user visits on a website, each associated with an uncertain instantaneous value (e.g., expected revenue or engagement from showing a premium ad), and must decide whether to buy the ad-slot now or wait for potentially higher-value users later. Similarly, in online markets and allocation problems, a seller faces buyers arriving over time and must decide whether to allocate a limited resource (a single item, a scarce service slot, or a budgeted promotion) to the current buyer or to keep it for future arrivals. Optimal stopping theory provides principled ways to reason about such problems, and prophet inequalities, in particular, offer a powerful performance benchmark: they quantify how well an online decision-maker can perform relative to a hypothetical ``prophet" who observes all past and future values.

% Privacy concerns in hiring / ads/ markets
Modern decision-making increasingly takes place in environments where information is not only valuable but also sensitive. In the hiring example, candidate quality is assessed through signals such as test scores, interviews, recommendation letters, or salary expectations. These signals may reveal personal characteristics or strategic information that candidates are reluctant to disclose fully. Candidates may worry that detailed evaluations could be reused for future hiring decisions, shared across subsidiaries, or otherwise exploited in ways that affect their bargaining power or career prospects. 
Similar concerns arise in online advertising, where a user’s ``value'' is inferred from rich, fine-grained signals (browsing history, purchases, location, device fingerprints, inferred interests), and where revealing or storing such information can enable cross-site tracking, sensitive attribute inference, or discriminatory targeting. Users may therefore be unwilling to share accurate signals if they anticipate profiling, resale to third parties, or long-term linkage across services, even when doing so would improve relevance or efficiency.
Auctions and market platforms raise privacy concerns that closely parallel those in hiring.
A buyer’s valuation is highly sensitive information: revealing it can expose willingness-to-pay, affect future personalised pricing, or weaken bargaining positions in other transactions.
As emphasised in recent work on privacy in auctions~\citep{Eilat2025}, even limited disclosure of valuation information can have significant strategic and welfare consequences.
Buyers may therefore be unwilling to report exact values to a platform, even when doing so would improve allocation efficiency.
More broadly, regulatory constraints and data protection norms increasingly limit the extent to which platforms or firms can collect, store, and process fine-grained personal data. This introduces a fundamental friction that is absent from classical stopping models: even when candidates are willing to participate in the hiring process, they may not trust the decision-maker with their exact private information. The act of revealing one’s value or quality becomes costly in itself.

To understand the tension between privacy and utility in online stopping problems, we focus on characterising how much performance must be sacrificed to protect the privacy of individual agents. We measure utility using the standard competitive-ratio framework from prophet inequalities, i.e. comparing the expected outcome of an online stopping rule to appropriate benchmarks. Privacy is formalised using Differential Privacy (DP)~\citep{Dwork_Calibration}, which has emerged as the gold standard for rigorous privacy guarantees in data analysis.

%DP ensures the protection of an individual’s sensitive information when their data is used for analysis. Specifically, 
DP is a worst-case constraint on the class of randomised mechanisms: a mechanism satisfies the DP constraint if the mechanism’s outputs are ``essentially” equally likely to occur, for any two input datasets that only differ in one individual’s data. 
% \hr{we should focus on Local DP} \achraf{My goal here is sell DP as a good framework to define privacy for these type of problems in general. LDP, for me, is more of a ``trust'' model inside DP. } 
By designing a mechanism that satisfies the DP constraint, the decision-maker honours the following ``privacy promise": whatever would have happened to any user due to their participation in the decision process, would likely have happened if they did not participate. The worst-case nature of DP ensures that this promise is honoured, even if the user has a very unlikely data point (e.g. an outlier), and for any coalition of the other users. The effectiveness of the DP definition lies in its information-theoretic nature, in the sense that DP protects against any adversary, with unlimited amounts of computational power and auxiliary information, trying to infer private information about users by observing the output of the mechanism. 

Our primary focus in this paper is on the Local Differential Privacy (LDP) trust model: under LDP, agents randomise their values before they are observed by the decision-maker. The platform never sees raw values, even temporarily. This model is particularly compelling in environments where trust in centralised data collection is limited, or where regulatory and institutional constraints require strict privacy protection. Importantly, LDP imposes stronger restrictions than standard (global) DP and therefore presents a more stringent test of what can be achieved under privacy constraints. All of our motivating examples fit naturally into the LDP framework.
In hiring, candidates can privately randomize their evaluation scores before submitting them to the firm, so that the firm never observes the exact score.
In online advertising, users (or their devices) can locally privatize signals derived from sensitive features before they are used for targeting or allocation, so that the platform only sees a noisy report rather than raw browsing or purchase history.
In auctions and online markets, bidders can locally privatize signals about their valuations before interacting with the seller or platform, preventing precise willingness-to-pay from being revealed.
In all these settings, the decision maker must act online and irrevocably based solely on these privatised reports.
Motivated by these examples, we study a unified problem: \emph{online optimal stopping under local differential privacy (LDP)}. There are $n$ sequentially arriving agents, where agent $i \in \{1, \dots, n\}$ holds a private value $X_i$ drawn from a known distribution. The decision maker must choose an irrevocable stopping time, and hence which agent (if any) is selected, but, crucially, never gets to observe the raw values. Instead, each agent sends a \emph{locally privatized} report $Z_i$ produced by an $\varepsilon$-LDP mechanism applied to $X_i$. Thus, for any two possible values $x,x'$ of an agent, the distribution of the reported message is nearly indistinguishable:
no single agent’s report can strongly reveal whether their value was $x$ or $x'$.
The decision maker observes the sequence of reports $(Z_1,\dots,Z_n)$ online and must decide when to stop based solely on these noisy messages. The algorithmic objective is to design both a stopping rule and the LDP reporting scheme that respects the $\varepsilon$-LDP constraint for every agent while achieving strong performance guarantees in the prophet-inequality sense.
In particular, we seek to characterise the optimal privacy-utility trade-off: how much competitive performance is lost when the decision maker is restricted to act on locally privatized information. We evaluate performance through competitive ratios against three natural benchmarks:
(a) the optimal non-private online stopping value,
(b) the standard non-private prophet value,
and (c) a \emph{private prophet} benchmark that is also constrained to observe only $\varepsilon$-LDP messages.

% An online decision-maker observes sequentially some values, where each value is associated with an agent's private information, and must make an irrevocable ``differentially private" stopping decision, i.e. the stopping decision itself should ``essentially” equally likely to occur, for any two sequences of input values that only differ in one agent’s value. The objective is to design stopping rules that ensure differential privacy while achieving optimal competitive guarantees.

% Specify that we are interested in local DP, and that both our examples from before (hiring, auctions) fall in the local DP setting

% Both of our motivating examples fit naturally into the local DP framework. In hiring, candidates can privately randomise their evaluation scores before submitting them to the firm. In auctions, bidders can locally privatise signals about their valuations before interacting with the seller. In both cases, the platform must make online, irreversible decisions based solely on these privatised ``noisy" reports. 

This leads to our central questions:

\begin{center}
    \textit{What is the optimal stopping algorithm under local differential privacy?\\
How does its performance compare to that of a non-private online decision-maker, a non-private prophet, and a privacy-preserving prophet?}
\end{center}

\paragraph{Contributions} Answering these questions led to the following contributions.

1. \textbf{Optimal stopping algorithm under LDP.} Our first contribution is an algorithmic characterization of an \emph{optimal} $\varepsilon$-LDP stopping strategy, jointly over the reporting mechanisms and the stopping rule.
We show that the infinite-dimensional design space of $\varepsilon$-LDP channels can be reduced \emph{without loss} to a two-message family: an optimal policy exists in which, at each time step, the arriving agent releases a \emph{binary} ($1$-bit) privatized report.
This yields a sharp simplification of the problem and leads to an explicit backward-induction characterization:
the optimal policy is described by a dynamic program over continuation values, where each stage consists of (i) choosing an optimal binary $\varepsilon$-LDP mechanism for that stage and (ii) applying the corresponding optimal acceptance rule on the received bit.
In particular, optimality can be implemented by a clean ``privatize-then-decide'' procedure with time-dependent thresholds determined by the continuation value recursion.
Beyond providing an implementable description, this result pinpoints precisely \emph{how} the privacy constraint limits the information revealed by each arrival and how the optimal policy exploits the remaining information.

2. \textbf{Competitive ratios against non-private benchmarks.} Our second contribution quantifies the \emph{price of LDP} by comparing the optimal $\varepsilon$-LDP stopping value $\Wn$ to two \emph{non-private} benchmarks: (a) We show a tight competitive ratio against the optimal non-private online stopping value $\Vn$. LDP induces a worst-case degradation  of $e^{\varepsilon} / (n-1 + e^{\varepsilon})$ , compared to $\Vn$. This formula makes the dependence on both the horizon $n$ and the privacy level $\varepsilon$ explicit,
interpolating between $1/n$ in the full-privacy regime and $1$ in the no-privacy regime.
(b) We then compare $\Wn$ to the classical prophet benchmark $\Mn=\mathbb{E}[\max_{i\in[n]}X_i]$.
Here we provide matching-order bounds: we prove an upper bound and a constructive lower bound on the worst-case ratio, pinning down its dependence on $(n,\varepsilon)$ up to a constant factor (\Cref{prop:wagainstm}).
In addition, beyond the optimal dynamic policy, we exhibit a substantially simpler mechanism-policy pair: a \emph{single-threshold} stopping rule coupled with a time-homogeneous Randomised Response $\varepsilon$-LDP binary reporting scheme. We show that this simple policy already achieves the tight worst-case guarantee against the non-private online benchmark,
and simultaneously attains the lower bound in our comparison to the non-private prophet
(\Cref{prop: private T / M}).
This provides a direct private analogue of classical threshold rules, and demonstrates that extremely simple stopping strategies can achieve strong privacy-utility trade-offs. 

3. \textbf{Competitive ratios against private benchmarks.} Our third contribution studies performance relative to \emph{privacy-preserving} offline benchmarks, where the prophet is itself constrained by $\varepsilon$-LDP.
Let $\Ln$ denote the value of an optimal \emph{private prophet} offline strategy under $\varepsilon$-LDP.
We show that the worst-case competitive ratio of $\Wn$ against this private prophet $\Ln$ admits a \emph{tight} closed form $(1 + e^{-\epsilon})/2$, recovering the classical $1/2$ bound as $\varepsilon\to\infty$ and approaching $1$ in the full-privacy regime
(\Cref{thm:w/l}).
This characterizes the \emph{price of adaptivity under LDP}: once both online and offline decision makers are restricted to locally privatized information, the prophet’s advantage shrinks exactly to $(1+e^{-\varepsilon})/2$. Finally, we strengthen this benchmark comparison by showing that the private prophet does not benefit from richer message alphabets:
even if the offline private benchmark is additionally restricted to use \emph{binary} ($1$-bit) LDP reports, its value remains sufficient to attain the same worst-case ratio. This result mirrors our structural reduction on the online side and shows that, for worst-case competitive analysis, \emph{one-bit local privacy is essentially without loss} for both the optimal online rule and the private offline benchmark.

Together, these results provide a sharp understanding of the trade-offs between privacy and performance in prophets inequalities.

\begin{table}[t!]
\centering
\caption{\textbf{Bounds on various competitive ratios.} $C_n$ is the set of non-negative random variables $X_1, \dots, X_n$ such that $X_1, \dots, X_n$ are independent. $\Vn$ is the value of the optimal non-private online algorithm, $\Mn$ is the value of the optimal non-private offline algorithm, $\Wn$ is the value of the optimal $\varepsilon$-LDP online algorithm, $\Ln$ is the value of the optimal $\varepsilon$-LDP offline algorithm. Novel results are in \textcolor{blue}{blue}.}
\begin{tabular}{lccc}
\toprule
\textbf{Competitive ratio} & \textbf{Lower bound} & \textbf{Upper bound} & \textbf{Reference} \\
\midrule
$\inf_{\mathbf{X} \in C_n} \Vn(\mathbf{X}) / \Mn(\mathbf{X})$ & $1 / 2$ & $1 / 2$ & \citeauthor{KrengelSucheston1977Semiamarts} \\[4pt]
$\inf_{\mathbf{X} \in C_n} \Wn(\mathbf{X}) /\Vn(\mathbf{X})$ & \textcolor{blue}{$e^{\varepsilon} / (e^{\varepsilon} - 1 + n)$} & \textcolor{blue}{$e^{\varepsilon} / (e^{\varepsilon} - 1 + n)$} & \Cref{thm:cr_stopping_ldp} \\[4pt]
$\inf_{\mathbf{X} \in C_n} \Wn(\mathbf{X}) /\Mn(\mathbf{X}) $ & \textcolor{blue}{$e^{\varepsilon} / (2e^{\varepsilon} - 2 + 2n) \vee 1/n $} & \textcolor{blue}{$e^{\varepsilon}/(2e^{\varepsilon} - 2 + n)$} & \Cref{prop:wagainstm} \\[4pt]
$\inf_{\mathbf{X} \in C_n} \Wn(\mathbf{X}) /\Ln(\mathbf{X})$ & \textcolor{blue}{$(1 + e^{-\varepsilon}) / 2 $} & \textcolor{blue}{$(1 + e^{-\varepsilon}) / 2 $} & \Cref{thm:w/l} \\
\bottomrule
\end{tabular}
\label{tab:bornes}
\end{table}

\section{Related Work}

Classical prophet inequalities compare the optimal online stopping value to the offline maximum, with a tight worst-case ratio $1/2$ for independent nonnegative variables \citep{KrengelSucheston1977Semiamarts,SamuelCahn1984,hill1992survey}.
Several works subsequently explored richer information models.
Most closely related, \citet{Assaf1998ASV} study stopping with partial observations, where the decision-maker observes a signal correlated with the underlying value, and show that when the joint model is known, prophet-type guarantees (including a $1/2$ bound) can still be achieved.
The model we study can be viewed as a \emph{design} version of partial-observation stopping: the observation channel is not fixed exogenously, but is induced by an $\varepsilon$-LDP mechanism chosen by the decision-maker, and must be optimized jointly with the stopping rule.
A complementary line of work assumes that distributions are not known a priori but are accessed through a small number of accurate samples \citep{Azar2014limited}; in particular, \citet{Rubinstein2019OptimalSP} show that a single sample per distribution suffices to recover the $1/2$ competitive ratio.
This is, in a sense, dual to the privacy setting studied here: we assume full prior knowledge of the distributions, but each realized observation is systematically corrupted because of the privacy constraints.

Differential privacy (DP) \citep{dwork2014algorithmic} is a framework that provides rigorous privacy guarantees for randomized algorithms; in the central model, a curator has access to raw data and must ensure that published outputs do not reveal any individual's information. When such trust is not appropriate,
%, as is the case we are interested in, 
local Differential Privacy (LDP) requires each agent to privatize their value before it is observed, often via randomized response \citep{warner1965randomized,dinur2003revealing}.
A central theme in the LDP literature is to characterize and compute optimal mechanisms under various utility criteria: \citet{Duchi2016MinimaxOP} focus on statistical tasks such as mean estimation, while \citet{kairouz2014extremal} provides structural results for sub-linear utility objectives (including testing losses).
The problem we study differs from canonical estimation or testing objectives: utility is induced by an online stopping objective, and the mechanism is chosen to maximize a sequential value criterion under known priors.
Conceptually, locally private optimal stopping is also related to DP primitives for selecting large elements (e.g., noisy-max and sparse-vector-type ideas in central DP \citep{dwork2014algorithmic,Lyu2016UnderstandingTS} or the localisation phase of mean and histogram estimation algorithms in user-level LDP \cite{kent2024rate,acharya2023discrete,pla25}). However, in locally private optimal stopping, the mechanism is optimized for a Bayesian stopping objective, rather than a pure frequentist worst-case approach.

Privacy constraints have been studied extensively in online decision-making, with emphasis on characterizing privacy-utility trade-offs typically expressed in regret, sample complexity, or success probability. In multi-armed bandits, a number of works derive near-optimal privacy-dependent regret bounds and propose private algorithms under central DP~\citep{dpseOrSheffet, azize2022privacy, azizeoptimal} and LDP~\citep{localDP, ren2020multi}. Privacy has also been investigated in pure exploration problems closely related to stopping, such as best-arm identification, with nearly optimal fixed-confidence algorithms under central DP~\citep{azize2023complexity, jourdan2025optimal}, as well as for LDP variants \citep{azize2024differentially}.
Closer to prophet inequalities models, \citet{Grining2020WhatDO} study the secretary problem under central DP and quantify the impact of privacy on selection guarantees.
Compared to these works, our focus is on LDP, motivated by settings where raw values are never entrusted to the platform due, for instance, to concerns of data resale, and on the Bayesian prophet model, where prior information is available and must be used to design optimal LDP mechanisms and stopping rules.

Prophet inequalities are naturally related to market design: they have been used to analyze simple sequential posted-pricing mechanisms~\citep{Hajiaghayi2007AutomatedOM,Chawla2010multiparam,Lucier2017AnEV}. A line of equivalence results relates approximation guarantees for revenue-maximizing auctions via posted prices to competitive analysis in prophet inequalities \citep{Feldman2014CombinatorialAV,CORREA201925}.
In auctions, privacy is a natural concern since valuations are sensitive and can be strategically exploited if revealed.
A first line of work focuses on cryptographic methods for privacy-preserving auctions \citet{Naor1999Privacyauction,Brandt2005privacymultiunit}, aiming to reveal only what is implied by the outcome, that is to say hiding losing bids; see~\citet{ALVAREZ2020101502} for a survey.
A second, more recent direction, 
incorporates \emph{statistical} privacy notions directly into auction formats, including Differential Privacy in call auctions~\citep{Diana2020callauction}, double auctions with DP guarantees~\citep{JIANG202213}, and double auctions under local DP~\citep{Huang2025doubleauction}. Complementarily, \citet{Eilat2025} study a Bayesian notion of privacy based on mutual information and analyze which auction formats minimize information leakage under additional assumptions.
Our work is aligned with the Differential-Privacy-based line of research, but addresses a different primitive than the auction formats above: we study the underlying \emph{prophet} stopping problem under \emph{local DP}.
In our model, privacy is enforced at the observation level (agents locally privatize their realized values), and the decision maker must act online on privatized reports.

\section{Model}
We first provide the essential background in local differential privacy and optimal stopping in~\Cref{subsec:privacybackground} and \Cref{subsec:stoppingbackground} respectively. Then, in~\Cref{subsec:online stopping}, we formally describe the problem of \emph{online optimal stopping under LDP}, our main object of study.

\paragraph{Notations} For $i < j$, we denote $a_{i:j}$ the sequence $(a_k)_{k \in [i, j]}$. If $X$ is a random variable of distribution $F$ and $Y$ is a random variable such that $Y \mid X=x$ is given by the transition kernel $G(\cdot \mid x)$, the joint distribution of $(Y, X)$ is $G \otimes F$. Given $\mathbf{X} = X_{1:n}$ a sequence of independent random variable with joint distribution $F_1 \otimes \dots \otimes F_n$, the notations $\Mn(\mathbf{X}), \Ln(\mathbf{X}), \Vn(\mathbf{X}), \Wn(\mathbf{X})$ represent, respectively, the expected value of the non-private prophet, private prophet, non-private optimal online algorithm and private optimal online algorithm on the instance $X_{1:n}$. It is an abuse of notation as these quantities are deterministic and only depend on the \emph{distribution} of $X_{1:n}$.

\subsection{Local Differential Privacy}\label{subsec:privacybackground}

Differential Privacy~\cite{Dwork_Calibration} is a constraint on the class of randomised mechanisms. In general, a randomised mechanism $M$ takes as input a dataset $D$, which is a collection of $n$ users data points $D \coloneqq \{x_1, \dots, x_n\} \in \mathcal{X}^n$, from an input universe $\mathcal{X}$. The mechanism $M$ outputs a distribution $M_D \in \mathcal{P}(\mathcal{O})$, where $\mathcal{P}(\mathcal{O})$ is the set of distributions over the output space $\mathcal{O}$. The probability space is over the coin flips of the mechanism $M$. Given some measurable event $E$ in  $\mathcal{O}$, we note $M_D(E) \coloneqq M(E | D)$ the probability of observing the event $E$ given that the input of the mechanism is $D$. Given two datasets $D \coloneqq \{ x_1, \dots, x_n\}$ and $D' \coloneqq \{ x'_1, \dots, x'_n\}$ in $\mathcal{X}^n$, let $\dham(D, D') \coloneqq \sum_{i=1}^n \ind{D_i \neq D'_i}$ denote the Hamming distance between $D$ and $D'$, i.e. the number of different data points between $D$ and $D'$. We say that $D$ and $D'$ are neighbouring datasets, that we note $D \sim D'$, if and only if $\dham(D, D') \leq 1$, i.e. $D$ and $D'$ differ by at most one record.

Now, we formally define Differential Privacy (DP). A mechanism $M$ satisfies $\varepsilon$-DP~\cite{Dwork_Calibration} for a given $\varepsilon \geq 0$ if,
\begin{align}\label{eq:DP}
    \forall D \sim D', \, \forall E \subset \mathcal{O}\footnote{When we write $E \subset \mathcal{O}$, we mean $E$ is a measurable subset of $\mathcal{O}$.}, \, M(E | D) \leq e^\varepsilon M(E | D')
\end{align}

A mechanism $M$ satisfies DP if the mechanism behaves "similarly" on \emph{all} neighbouring datasets $D$ and $D'$, and on all realisations of the mechanism $M$. By designing a mechanism $M$ that satisfies the DP constraint, the data analyst honours the privacy "promise": whatever would have happened to any user due to their participation in a DP analysis, i.e. the world where $o \sim M_D$, would likely have happened if they did not participate, i.e. the world where $o \sim M_{D'}$, for $D' \sim D$. 
The worst-case nature of DP ensures that the promise is honoured, even if the user has a very unlikely data point (e.g. an outlier), and for any coalition of the other users. 
The effectiveness of the DP definition lies in its information-theoretic nature: DP protects against any adversary with unlimited amounts of computational power and auxiliary information.

The adversary can formulate a privacy attack as a binary hypothesis test. Specifically, based only on the observed output $o$ of the mechanism $M$, the adversary needs to determine whether $H_0$: ``The output was generated from the dataset $D$", vs $H_1$: The output was generated from a neighbouring dataset $D'$. DP imposes a trade-off between the Type I and Type II errors of any adversary trying to conduct this hypothesis test~\cite{kairouz2015composition}. Specifically, it is impossible for any adversary to get both Type I and Type II errors to be small when the mechanism verifies DP. This also provides an operational interpretation of the DP constraint.

In this paper, we study the setting of \emph{local} Differential Privacy, where users do not trust the data analyst, i.e. the entity collecting the data and running the mechanism, with their raw data. Thus, in this setting, a mechanism $Q$ is applied locally to each granular data point $x \in \mathcal{X}$. The data analyst only has access to the noisy data points $z \sim Q(\cdot | x)$. Local DP is one of the oldest formulations of privacy, dating back to~\citet{warner1965randomized}, who advocated it as a solution to what he called ``evasive answer bias" in survey sampling. A mechanism $Q$ satisfies $\varepsilon$-local DP if, seen as a mechanism applied to a dataset of size $1$, it verifies $\varepsilon$-DP. Specifically,  A randomised mechanism $Q$ satisfies $\varepsilon$- LDP~\citep{dinur2003revealing, evfimievski2003limiting} if
\begin{equation}\label{eq:ldp}
     \forall x, x' \in \mathcal{X}, \,
     \forall S \subset \mathrm{Range}(Q), \, Q(S \mid x)  \leq e^{\varepsilon} Q(S \mid x'),
\end{equation}
where the probability space is over the coin flips of the mechanism $Q$. We denote $\mathcal{Q}_\varepsilon$ the set of $\varepsilon$-LDP mechanisms. Now, under LDP, and given a dataset $D=\{x_1,\dots,x_n\}$, the analyst observes only the privatized dataset $\tilde D=\{z_1,\dots,z_n\}$, where $z_i \sim Q(. \mid x_i)$, and may apply \emph{any} (possibly randomized) function $f$ to $\tilde D$, without weakening privacy: by the post-processing property of DP, the composed mechanism $D \mapsto \tilde D \mapsto f(\tilde D)$ remains $\varepsilon$-DP.

For binary attributes, the Randomised Response (RR) mechanism~\citep{warner1965randomized} is a popular way to achieve $\varepsilon$-local DP. The idea is to output the true value of a user's response with probability $e^\varepsilon/(e^\varepsilon+1)$ and output the opposite value with probability $1/(e^\varepsilon + 1)$.
To make it suitable for larger discrete domains, a Generalised Randomised Response (GRR) is proposed in~\citet{kairouz2016discrete}. 
For continuous numerical data statistics, adding Laplace noise to each data record achieves local DP~\cite{dwork2014algorithmic}. Although local DP is a relatively stringent requirement, we view this
setting as a natural first step in understanding optimal stopping problem under privacy constraints. 

\subsection{Bayesian Optimal Stopping and Prophet Inequalities}\label{subsec:stoppingbackground}

Here, we briefly recall some results on the standard optimal stopping problem, as well as the classic prophet inequalities result. They relate the performance of the optimal stopping value to that of the omniscient decision-maker. While this material is standard, we recall it in detail here in order to best highlight the difference when privacy is introduced as an additional requirement.

Let $n \in \mathbb{N}^*$, $\mathcal{C}_n$ be the set of $n$ positive independent random variables with finite expectation and consider $\mathbf{X} = X_{1:n} \in \mathcal{C}_n$ with joint distribution $F_1 \otimes \dots \otimes F_n$. At time $i \in [n]$, the decision maker observes $X_i$ and must decide whether to stop and gain $X_i$ or continue. The goal of an optimal stopping problem is to find a stopping rule that maximizes the expected gains. Formally, we define by $\mathcal{F}=(\mathcal{F}_i)_{i \in [n]}$ the natural filtration of the $X_i$, that is to say $\mathcal{F}_i=\sigma(X_{1:i})$. We denote by $\mathcal{T}^{(n)}$ the set of stopping rules taking values in $[n] \cup \{ \infty \}$ adapted to $\mathcal{F}$, so that $\tau \in \mathcal{T}^{(n)}$ means that for $i \in [n]$ the event $\{ \tau = i\}$ is $\mathcal{F}_i$ measurable. For a given stopping rule $\tau \in \mathcal{T}^{(n)}$, the realized value at stopping time is $X_{\tau} \coloneqq \sum_{ i \in [n]} X_i \mathds{1}[\tau=i]$, hence if $\tau=i$ then $X_{\tau}=X_i$, with the convention that $X_{\infty}=0$. The optimal strategy for the decision maker is given by the stopping rule $\tau^* \in \mathcal{T}^{(n)}$ that reaches the maximum expected value $\Vn(\mathbf{X})$ at stopping time,
\begin{equation}
\Vn(\mathbf{X})=\max_{\tau \in \mathcal{T}^{(n)}} \mathbb{E}[X_{\tau}].
\end{equation}
\citet{Chow1971GreatET} shows that there exists an optimal stopping rule that can be obtained via a sequence of deterministic thresholds computed through backward induction. More precisely, it holds that $\Vn(\mathbf{X}) = V_1(\mathbf{X})$ where $(V_i(\mathbf{X}))_{i \in [n+1]}$, the \ac{bdp} sequence, is defined as $V_{n+1}(\mathbf{X})=0$ and for any $i \in [n]$,
\begin{align}
\label{eq:bdp}
V_i(\mathbf{X})&=\mathbb{E}\left[\max\left(X_i,V_{i+1}(\mathbf{X}) \right)\right].
\end{align}
The threshold $V_i(\mathbf{X})$ represents the expected value that can be obtained starting from time $i$. The optimal stopping rule $\tau^*$ is then given by 
 \[
 \tau^* = \min \{i \in [n], X_i \geq V_{i+1}(\mathbf{X}) \}.
 \] 

In prophet inequalities, the goal is to compare this optimal online stopping value $\Vn(\mathbf{X})$, with the performance that can be achieved by an omniscient prophet having observed all the values. If the prophet knows all the values in advance, then the best it can do is select the maximum when it arrives, achieving the performance of
\begin{equation}
\Mn(\mathbf{X})=\mathbb{E}\left[\max_{i \in [n]} \left( X_i \right)\right].
\end{equation}
One of the main questions in prophet inequalities, is to compare $\Vn$ and $\Mn$ in the worst case. \citet{KrengelSucheston1977Semiamarts} show that the worst case ratio is exactly $1/2$, i.e. for all $n \in \mathbb{N}^*$,
\begin{equation}\label{eq:class_proph}
\inf_{\mathbf{X} \in \mathcal{C}_n} \frac{\Vn(\mathbf{X})}{\Mn(\mathbf{X})}= \frac{1}{2}.
\end{equation}
% \achraf{Let's mention in this equation that $X_1 \dots X_n$ are $\geq 0$ and ind. like in the other CRs we present later. Also, maybe let us just call it infinimumm over $\mathbf{X} \in \mathcal{C}_{\geq 0}^{ \text{ind.}}?$}
In this paper, we investigate the relative value, compared to different private and non-private benchmarks, of an online decision maker who must satisfy the LDP constraint. 

\subsection{Online Stopping under Local Differential Privacy}
\label{subsec:online stopping}
We now introduce the core aspect of online stopping under \ac{ldp}. Let $n \in \mathbb{N}^*$ and $\mathbf{X} \in \mathcal{C}_n$  with joint distribution $F_1 \otimes \cdots \otimes F_n$. In addition, fix $\varepsilon \geq 0$, a privacy parameter. At time $i \in [n]$, the online decision maker does not observe $X_i$ but instead observes a noisy version $Z_i$ of $X_i$ generated via a $\varepsilon$-LDP mechanism $Q_i$ and must decide whether to stop and gain $X_i$ or continue. The goal of an optimal stopping problem under \ac{ldp} is to find a set of mechanisms $Q_{1:n}$ and a stopping rule to maximize the expected $\emph{true}$ gains. 

Formally, given $Q_{1:n} \in \mathcal{Q}_\varepsilon^n$, for any $x \in \mathcal{X}$ and any $i \in [n]$,
\[
(Z_i  \mid X_i=x) \sim Q_i( \cdot \mid x).
\]

\begin{comment}
Consider $Q_1, \dots, Q_n \in \mathcal{Q}_\varepsilon$ a set of  We call $\mathcal{Q}_\varepsilon$ and distribution. $\mathcal{Z}_i$ some (to be chosen) message output space. Let $\mathcal{P}(\mathcal{Z}_i)$ be the set of probability distributions over $\mathcal{Z}_i$, we define the set of $\varepsilon$-\ac{ldp} mechanisms  (or $\varepsilon$-\ac{ldp} Markov kernels) as
\begin{equation}\label{eq:ldp_mechs}
\mathcal{Q}_{i, \varepsilon}\coloneqq \left\{ Q_i : \mathcal{X}_i \rightarrow \mathcal{P}(\mathcal{Z}_i), \text{ such that }\forall S \subset \mathcal{Z}_i,\ \forall x,x'\in\mathcal{X}_i,
Q_i(S|x)\le e^{\varepsilon}\,Q_i(S|x') \right\}.
\end{equation}
In the setting studied in this paper, the online decision maker is unable to observe the $X_i$ directly. Instead, she observes some messages $Z_1,\dots,Z_n$ in an online fashion where 
\begin{equation}
Z_i \sim Q_i( \, \cdot \, \mid X_i).
\end{equation}
\end{comment}

Given that the $X_i$ are independent, so are the $Z_i$. The observable information therefore corresponds to the filtration $\mathcal{O}=(\mathcal{O}_i)_{i \in [n]}$ where $\mathcal{O}_i=\sigma(Z_{1:i})$. We denote $\mathcal{T}(Q_{1:n})$ the set of stopping rules adapted to $\mathcal{O}$. The goal of an optimal stopping problem under LDP is then to find the set of mechanisms $Q_{1:n}^* \in \mathcal{Q}_\varepsilon^n$ and the stopping rule $\tau^* \in \mathcal{T}(Q_{1:n})$ that reach the maximum expected value $\Wn(\mathbf{X})$ at stopping time, for the \emph{original data} $\mathbf{X}$,
\begin{equation}
\label{eq:Wn}
\Wn(\mathbf{X}) \coloneqq \max_{Q_{1:n} \in \mathcal{Q}_\varepsilon^n} \max_{\tau \in \mathcal{T}(Q_{1:n})} \mathbb{E}\left[ X_{\tau} \right].
\end{equation}

For $\varepsilon=\infty$, we recover $\mathcal{W}_{\infty}(\mathbf{X})=\Vn(\mathbf{X})$, and for $\varepsilon=0$, we have $\tau^* \in \argmax_{i \in [n]} \mathbb{E}[X_i]$, which yields $\mathcal{W}_0(\mathbf{X})=\max_{i \in [n]} \mathbb{E}[X_i]$. Notice that in general the mechanisms and the decision rule depend on the distribution $F_{1:n}$.
%This does not lead to a breach of privacy as $(F_i)_{i=1}^n$ are public knowledge.
\begin{comment}
At first glance, one may expect that in the high privacy regime, all the users $X_i$ would be treated similarly; this is not the case at all. In fact, what differential privacy entails is that an adversary may not change its output too much based on observed data. But if the adversary, here the online decision maker, already has access to some prior data, here the distributions $F_i$, then it can freely use it, regardless of the additional information observed $Z_i$. This contrast comes from the fact that we are dealing with a Bayesian decision problem, hence, there is no reason to treat the users identically, unlike what happens when no prior information is available.
\end{comment}
Furthermore, the quantity $\Wn(\mathbf{X})$ is non-decreasing in $\varepsilon$. Indeed, as $\varepsilon$ increases, the set of LDP mechanism increases: $\varepsilon \leq \varepsilon'$ implies $\mathcal{Q}_{\varepsilon}^{(n)} \subset \mathcal{Q}_{\varepsilon'}^{(n)}$, hence $\Wn(\mathbf{X}) \leq \mathcal{W}_{\varepsilon'}(\mathbf{X})$.
In particular, $\Wn(\mathbf{X}) \leq W_{\infty}(\mathbf{X}) = \Vn(\mathbf{X})$. In the next section, we give an efficient procedure to compute the optimal \ac{ldp}-stopping rule, as well as $\Wn(\mathbf{X})$.

\section{Optimal Backward Dynamic Programming under Local DP}
\label{sec:optimalbackward}

In this section, we compute the optimal \ac{ldp}-stopping rule and its value.
\begin{comment}
From a decision-making point of view, a simple way to construct an admissible $\varepsilon$-\ac{ldp} stopping rule is the following: 
\begin{enumerate}
    %\item Choose the message space $\mathcal{Z}_i$.
    \item Select $\varepsilon$-\ac{ldp} mechanisms $Q_i \in \mathcal{Q}_{\varepsilon}$ which generate messages $Z_i$.
    \item Select a stopping rule $\tau$ adapted to the $\mathfrak{O}_i=\sigma(Z_1,\dots,Z_i)$. 
\end{enumerate}

Hence, we want to maximise $\mathbb{E}[X_{\tau}]$ over these decision steps. 
\end{comment}
The optimal value is the solution to the joint maximization problem of~\Cref{eq:Wn}. Given $Q_{1:n} \in \mathcal{Q}_\epsilon$, we derive the optimal stopping rule $\tau(Q_{1:n})$ reaching the value
\[
\mathcal{V}_{Q_{1:n}}(\mathbf{X}) = \max_{\tau \in \mathcal{T}(Q_{1:n})} \EE[X_\tau].
\]
Then, we find the mechanisms $Q_{1:n}^*$ that maximize $\mathcal{V}_{Q_{1:n}}(\mathbf{X})$. It holds that 
\[
\Wn(\mathbf{X}) = \max_{Q_{1:n} \in \mathcal{Q}_\varepsilon^n} \mathcal{V}_{Q_{1:n}}(\mathbf{X}).
\]

\begin{comment}
Hence, we want to maximise $\mathbb{E}[X_{\tau}]$ over these three decision steps. This is a joint optimisation problem over these three decisions, and given that we can always exchange the maximums, we treat it as first choosing $\mathcal{Z}_i$, then $Q_i$, and finally the stopping rule on the $Z_i$. Regarding the choice of $\mathcal{Z}_i$, given that it does not affect the values $X_i$, it is simpler to think of this space as some kind of arbitrary alphabet, where only $\vert \mathcal{Z}_i \vert$, the alphabet size of $\mathcal{Z}_i$, matters. 
\end{comment}

\paragraph{Choosing $\tau$ given $Q_{1:n}$: a sufficient statistic for online stopping under partial observations.}
Suppose that the mechanisms $Q_{1:n}$ are already fixed. They induce a joint distribution $Q_i \otimes F_i$ on $(Z_i, X_i)$ for any $i \in [n]$. \citet{Assaf1998ASV} study the prophet inequality problem with partial observations, where similarly the online agent observes $Z_i$ and aims to maximise the stopped value with respect to $X_i$, where the $X_i$ and $Z_i$ are drawn from a given known joint distribution. They show that $\mathbb{E}[X_i \mid Z_i]$, the posterior mean of $X_i$ given $Z_i$, 
is a sufficient statistic to obtain the optimal stopping value. Specifically, Proposition $2.3$ of \citet{Assaf1998ASV} shows that an optimal stopping rule runs a \ac{bdp} algorithm on the sequence $\mathbf{Y} = (Y_i = \mathbb{E}[X_i\mid Z_i])_{i = 1}^n$. %In particular, once the mechanisms $Q_{1:n}$ are fixed, we can deduce the joint law, and the partially observed stopping problem with observable states $Z_{1:n}$ reduces to a stopping problem on $\mathbf{Y}$. 
Denoting by $(V_i(\mathbf{Y}))_{i \in [n]}$ the sequence of dynamic programming values as defined in \Cref{eq:bdp}, the above proposition gives that $\mathcal{V}_{Q_{1:n}}(\mathbf{X}) = V_1(\mathbf{Y})$ and that this value is achieved by the decision rule
\begin{equation}
    \label{eq:optstop}
\tau = \min \{i \in [n]: Y_i \geq V_{i+1}(\mathbf{Y}) \}.
\end{equation}

The optimal mechanisms $Q^*_{1:n}$ are therefore the maximizers of $V_1(\mathbf{Y})$,
\begin{equation} \label{eq:W_eps}
\Wn(\mathbf{X}) =\max_{Q_1, \dots ,Q_n \in \mathcal{Q}_\varepsilon} V_1(\mathbf{Y}).
\end{equation}
One stark difference with the setting studied in ~\citep{Assaf1998ASV}, and which is at the core of the differential privacy literature \citep{kairouz2016extremal}, is the selection of the mechanisms $Q_{1:n}$ that produce observations $Z_{1:n}$.  In~\citep{Assaf1998ASV}, the process that generates observations $Z_{1:n}$ is known but cannot be optimized over. 

In \Cref{subsec:J to W,subsec:a reduction to,subsec:optimal single step}, we give a closed-form solution to the maximization problem of \Cref{eq:W_eps}, while \Cref{subsec:on interactive} shows that the solution would remain the same if the mechanisms were allowed to be interactive, i.e. the choice of $Q_i$ can depend on $Z_{1:i-1}$.

\subsection{Locally Differentially Private Backward Dynamic Programming}
\label{subsec:J to W}

\Cref{eq:W_eps} is a maximization problem over the private mechanisms $Q_{1:n} \in \mathcal{Q}_\varepsilon^n$. The next proposition shows that the optimal mechanisms can be found greedily, by backward induction.

\begin{proposition} \label{prop:private dp}
Let $\varepsilon \geq0$, and define the sequence $(\mathscr{W}_i(\mathbf{X}))_{i \in [n]}$ as $\mathscr{W}_{n+1}(\mathbf{X})=0$ and for $i \in [n]$,
\begin{equation}\label{eq:recursionwi}
 \mathscr{W}_i(\mathbf{X})=\max_{Q_i \in \mathcal{Q}_\varepsilon} \mathbb{E}\left[\max(\mathbb{E}[X_i \mid Z_i],\mathscr{W}_{i+1}(\mathbf{X}))\right].
\end{equation}
Then, 
\begin{equation*}
\Wn(\mathbf{X})=\mathscr{W}_1(\mathbf{X}).
\end{equation*}
\end{proposition}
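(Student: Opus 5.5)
The plan is to combine the reduction of \Cref{eq:W_eps} with an induction on the horizon, using one structural fact: the BDP value $V_i(\mathbf{Y})$ depends on $Q_{1:n}$ only through $Q_{i:n}$, and it is non-decreasing in the continuation value.

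First, recall that $Y_j = \mathbb{E}[X_j \mid Z_j]$ is a function of $Z_j$ alone, and its law depends only on $(F_j, Q_j)$. By independence of the pairs $(X_j, Z_j)$, the recursion \Cref{eq:bdp} applied to $\mathbf{Y}$ reads
\[
V_i(\mathbf{Y}) = \mathbb{E}\left[\max\left(Y_i, V_{i+1}(\mathbf{Y})\right)\right],
\]
where $V_{i+1}(\mathbf{Y})$ is a deterministic number depending only on $Q_{i+1:n}$. Write $V_i(Q_{i:n})$ to make this explicit. The claim to prove, by backward induction on $i$ from $n+1$ down to $1$, is
\[
\sup_{Q_{i:n}\in\mathcal{Q}_\varepsilon^{n-i+1}} V_i(Q_{i:n}) = \mathscr{W}_i(\mathbf{X}).
\]
For $i=1$, combined with \Cref{eq:W_eps}, this gives $\Wn(\mathbf{X})=\mathscr{W}_1(\mathbf{X})$. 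The base case $i=n+1$ is trivial, since both sides equal $0$.

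For the inductive step, define $g_Q(w) = \mathbb{E}_{Q}[\max(\mathbb{E}[X_i\mid Z_i], w)]$ for a fixed $Q_i=Q$. This function is non-decreasing in $w$, and it is $1$-Lipschitz, because $|\max(a,w)-\max(a,w')|\le |w-w'|$.

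\textbf{Upper bound.} For any $Q_{i:n}$, the induction hypothesis gives $V_{i+1}(Q_{i+1:n}) \le \mathscr{W}_{i+1}$. Monotonicity then yields
\[
V_i(Q_{i:n}) = g_{Q_i}\big(V_{i+1}(Q_{i+1:n})\big) \le g_{Q_i}(\mathscr{W}_{i+1}) \le \mathscr{W}_i .
\]

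\textbf{Lower bound.} Take near-optimal $Q_{i+1:n}$ with $V_{i+1}(Q_{i+1:n}) \ge \mathscr{W}_{i+1}-\delta$, and a near-optimal $Q_i$ for the outer maximum in \Cref{eq:recursionwi}. The Lipschitz property then gives $V_i(Q_{i:n}) \ge \mathscr{W}_i - 2\delta$. Letting $\delta\to 0$ closes the induction.

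The decisive point is that the choice of $Q_i$ does not affect the future value $V_{i+1}$. This decoupling holds because the mechanisms are non-interactive and the $Z_j$ are independent. It is exactly what makes the greedy choice optimal.

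The main obstacle is attainment. The statement writes $\max$, both in \Cref{eq:recursionwi} and in \Cref{eq:Wn}, and over arbitrary output spaces the set $\mathcal{Q}_\varepsilon$ is not obviously compact. I would first prove the identity with suprema, as above. Existence of maximizers would then be deferred to the later reduction to binary mechanisms (\Cref{subsec:a reduction to,subsec:optimal single step}). There, for fixed $w$, the problem $\max_{Q} g_Q(w)$ becomes a finite-dimensional problem over a compact family, so the supremum is attained. Once every stage maximum is attained, the greedy mechanisms $Q^*_{n},Q^*_{n-1},\dots,Q^*_1$, chosen backward, achieve the value exactly, and the stopping rule \Cref{eq:optstop} realizes it.

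Finiteness of all quantities is routine. It follows from $\mathbb{E}[X_i]<\infty$ and from $\mathscr{W}_i \le \sum_{j\ge i}\mathbb{E}[X_j]$.
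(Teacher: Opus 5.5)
Your proposal is correct and follows essentially the same route as the paper. Both run a backward induction on $\max_{Q_{i:n}} V_i(\mathbf{Y})$, use that $V_{i+1}(\mathbf{Y})$ depends only on $Q_{i+1:n}$, and use that $w \mapsto \mathbb{E}[\max(\mathbb{E}[X_i \mid Z_i], w)]$ is non-decreasing, so the maximum over future mechanisms can be pushed inside. The only difference is your $\delta$-approximation via the $1$-Lipschitz property: it makes rigorous the interchange of the supremum with the monotone map, which the paper performs directly by writing $\max$ and implicitly assuming the maxima are attained (attainment is only justified later, through the binary reduction).
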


\begin{proof} 
%Fix $(Q_i)_{i \in [n]} \in \mathcal{Q}_{\varepsilon}^n$. Because the mechanisms are non-interactive, for any $i\in [n]$, it holds that $Y_i$ depends only on $X_i$ and $Q_i$.  By induction, it follows that for any $i \in [n]$, $V_i(\mathbf{Y})$ depends only on $(X_{i:n}, Q_{i:n})$ and that $W_i(\mathbf{X})$ depends only on $X_{i:n}$.

Call $U_i(\mathbf{X}) = \max_{Q_{i:n} \in \mathcal{Q}_\varepsilon^{n-i+1}} V_i(\mathbf{Y})$ where $\mathbf{Y} = (Y_i = \mathbb{E}[X_i\mid Z_i])_{i = 1}^n$ and $(V_i(\mathbf{Y}))_{i \in [n]}$ is the sequence of dynamic programming values as defined in \Cref{eq:bdp}. We show by induction, that for any $i \in [n]$, $U_i(\mathbf{X}) = \mathscr{W}_i(\mathbf{X})$. First, we have that $U_{n+1}(\mathbf{X}) = \mathscr{W}_{n+1}(\mathbf{X}) = 0$.
Then, fix $k < n$ and assume that $U_{k+1}(\mathbf{X}) = \mathscr{W}_{k+1}(\mathbf{X})$. We have
\begin{align*}
  U_k(\mathbf{X}) &= \max_{Q_{k:n} \in \mathcal{Q}_\varepsilon^{n-k+1}} \EE[\max(\EE[X_k | Z_k], V_{k+1}(\mathbf{Y}))] \\
  &= \max_{Q_k \in \mathcal{Q}_\varepsilon} \EE[\max(\EE[X_k | Z_k], \max_{Q_{k+1:n} \in \mathcal{Q}_\varepsilon^{n-k}} V_{k+1}(\mathbf{Y}))] \\
  &= \max_{Q_k \in \mathcal{Q}_\varepsilon} \EE[\max(\EE[X_k | Z_k], U_{k+1}(\mathbf{X}))] \\
  &= \mathscr{W}_k(\mathbf{X}).
\end{align*}
where the second equality holds because $y \mapsto \EE[\max(\EE[X_k | Z_k], y)]$ is non-decreasing. In particular, we get $\mathscr{W}_1(\mathbf{X}) = U_1(\mathbf{X}) = \Wn(\mathbf{X})$.
\end{proof}

\Cref{prop:private dp} shows that to solve \Cref{eq:W_eps}, it is enough to solve for $w \geq 0, X \sim F$ and $(Z \mid X=x) \sim Q(\cdot \mid x)$ the optimization problem
\begin{equation}
\label{eq:jw-opt}
\max_{Q \in \mathcal{Q}_\varepsilon} J_w(Q),
\end{equation}
where
\begin{equation}
\label{eq:Jw(Q)}
J_w(Q) \coloneqq \mathbb{E}\left[ \max \left(\mathbb{E}[X\mid Z],w \right) \right],
\end{equation}
which is the focus of the next two sections.

\subsection{A reduction to simple privacy preserving mechanisms}
\label{subsec:a reduction to}

\Cref{eq:jw-opt} is an optimization problem over the set of $\varepsilon$-LDP mechanisms. In this section, we show that it suffices to consider the set of binary $\varepsilon$-LDP mechanisms $\mathcal{Q}_{\varepsilon, B}$ where
\[
\mathcal{Q}_{\varepsilon, B} = \{Q \in \mathcal{Q}_\varepsilon:\mathrm{Range}(Q) = \{0, 1\} \}.
\]

\begin{proposition}[Two-point mechanisms suffice]\label{prop:two-points}
For any $\varepsilon$-\ac{ldp} mechanism $Q \in \mathcal{Q}_\varepsilon$, there exists a binary $\varepsilon$-LDP mechanism $\tilde{Q} \in \mathcal{Q}_{\varepsilon, B}$ such that $J_w(Q)=J_w(\tilde{Q})$.
\end{proposition}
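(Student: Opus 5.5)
Given $Q \in \mathcal{Q}_\varepsilon$ with output $Z$, I will build $\tilde Q$ by \emph{post-processing} $Q$: partition the message space according to which argument of the max is attained in $J_w(Q)$. Let $\mu(z) \coloneqq \mathbb{E}[X \mid Z=z]$ and
\[
A \coloneqq \{z : \mu(z) \geq w\}.
\]
Then set $\tilde Z \coloneqq \ind{Z \in A}$, i.e. $\tilde Q(1 \mid x) = Q(A \mid x)$ and $\tilde Q(0\mid x) = Q(A^c \mid x)$.

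Privacy is immediate. For every $x,x'$ we have $\tilde Q(1\mid x) = Q(A\mid x) \le e^\varepsilon Q(A \mid x') = e^\varepsilon \tilde Q(1\mid x')$, and the same holds for $\{0\}$. This is just the post-processing property recalled in \Cref{subsec:privacybackground}. If $A$ or $A^c$ is null, I will still declare the range to be $\{0,1\}$. Formally one of the two messages then has probability zero for every $x$, which is harmless for the LDP inequality.

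For the value, I will compute $J_w(Q)$ by splitting on $A$:
\[
J_w(Q) = \mathbb{E}[\mu(Z)\ind{Z\in A}] + w\,\PP(Z\notin A).
\]
Since $\tilde Z$ is a function of $Z$, the tower property gives $\mathbb{E}[X\mid \tilde Z=1] = \mathbb{E}[\mu(Z)\mid Z\in A] \ge w$, because $\mu \ge w$ on $A$. Likewise $\mathbb{E}[X\mid\tilde Z=0] = \mathbb{E}[\mu(Z) \mid Z\notin A] < w$. Hence
\[
J_w(\tilde Q) = \PP(Z\in A)\,\mathbb{E}[\mu(Z)\mid Z\in A] + w\,\PP(Z\notin A),
\]
which equals the expression above for $J_w(Q)$. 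This is exactly the step where choosing $A$ as the region where the max is attained matters. For a generic coarsening, averaging inside each cell could only decrease $\mathbb{E}[\max(\cdot,w)]$ by Jensen, since the max is convex. Splitting along the sign of $\mu - w$ makes the max affine on each cell, so no loss occurs.

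The main obstacle is measure-theoretic rather than conceptual. I must check that $\mu$ is a measurable version of the conditional expectation, so that $A$ is a measurable subset of $\mathrm{Range}(Q)$ and $Q(A\mid x)$ is well defined. I must also handle cells of zero probability, where the conditional expectation is undefined and can be set arbitrarily without changing $J_w$. Finite expectation of $X$ ensures all the quantities above are finite.
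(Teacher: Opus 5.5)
Your proposal is correct and follows the paper's proof essentially verbatim: the paper defines $\tilde Z = \mathds{1}[Z \in \mathcal{Z}^+]$ with $\mathcal{Z}^+ = \{z : \mathbb{E}[X \mid Z=z] \ge w\}$ and gets privacy from post-processing. It then checks by direct computation that both values equal $\mathbb{E}[X\,\mathds{1}[Z\in\mathcal{Z}^+]] + w\Pr(Z\in\mathcal{Z}^-)$, using Fubini--Tonelli where you use the tower property; your care about measurability of $A$ and null cells is a reasonable refinement the paper leaves implicit.
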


\begin{proof}[Proof sketch (full proof in \Cref{app:two-point})]
    Call $\mathcal{Z}$ the range of $Q$ and consider 
    \[
     \mathcal{Z}^+=\{ z \in \mathcal{Z} : \mathbb{E}[X \mid Z=z] \geq w\}
    \]
    and 
    \[
    \mathcal{Z}^-=\{ z \in  \mathcal{Z} : \mathbb{E}[X \mid Z=z] < w\},
    \]
    where $Z, X \sim Q \otimes F$. Define $\tilde{Q}$ by $\forall x \in \mathcal{X}$,
    \begin{align*}
    &\tilde{Q}(\{1\} \mid x) = Q(\mathcal{Z}^+ \mid x) \\
    &\tilde{Q}(\{0\} \mid x) = Q(\mathcal{Z}^- \mid x).
    \end{align*}
    By direct calculation, it is then shown that $J_w(Q)$ and $J_w(\tilde{Q})$ are equal.
\end{proof}

From~\Cref{prop:two-points}, the maximum in \Cref{eq:jw-opt} can be taken over the set of binary $\varepsilon$-LDP mechanisms. In the next section, we give the maximizer in closed form.

\subsection{Solving the One-Step Problem and Completing the Optimal Policy}
\label{subsec:optimal single step}

The main result of this section is \Cref{th:optimal LDP stopping rule} that gives the optimal $\varepsilon$-\ac{ldp} stopping rule. This theorem builds on the following proposition that solves \Cref{eq:Jw(Q)} in closed form.

\begin{proposition}\label{prop:optimalchannel}
Define 
\begin{equation}
\label{eq:phi}
\phi(x,w)\coloneqq \frac{e^{\varepsilon}}{1+e^{\varepsilon}} \max(x,w)+\frac{1}{1+e^{\varepsilon}}\min(x,w), \text{ for $x,w \geq0$}.
\end{equation}

Let $X \sim F$, and $Q \mapsto J_w(Q)$ the objective in \Cref{eq:Jw(Q)}. 

The mechanism
\begin{equation}
\label{eq:Rwe}
R_{w,\varepsilon}(\, \cdot  \mid x) \coloneqq \begin{cases}
\mathrm{Ber}\left(\frac{e^{\varepsilon}}{1+e^{\varepsilon}}\mathds{1}[x \geq w]+ \frac{1}{1+e^{\varepsilon}} \mathds{1}[x<w]\right),& \text{ if } \mathbb{E}[\phi(X,w)]\geq \max(\mathbb{E}[X],w),\\
1, & \text{ if } \mathbb{E}[X] > \max(w,\mathbb{E}[\phi(X,w)]),\\
0, & \text{ if } w \geq \max(\EE[X],\mathbb{E}[\phi(X,w)]),
\end{cases}
\end{equation}
achieves the maximum of $J_w$ and
\begin{equation}
\label{eq:Jw}
J_w(R_{w, \varepsilon}) = \max(\EE[X], w, \EE[\phi(X, w)]).
\end{equation}
\end{proposition}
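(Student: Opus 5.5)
By \Cref{prop:two-points}, it suffices to maximize $J_w$ over binary mechanisms $Q \in \mathcal{Q}_{\varepsilon,B}$. Such a $Q$ is described by a measurable function $q(x) = Q(\{1\}\mid x) \in [0,1]$. The LDP constraint requires, for all $x,x'$, that $q(x) \le e^\varepsilon q(x')$ and $1-q(x) \le e^\varepsilon (1-q(x'))$.

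For a binary $Z$ we have $\EE[X\mid Z=z]\PP(Z=z) = \EE[X \cdot \PP(Z=z\mid X)]$, and $\PP(Z=z)\,w = \EE[w\,\PP(Z=z\mid X)]$. Hence
\begin{equation*}
J_w(Q) = \max\bigl(\EE[Xq(X)], \EE[wq(X)]\bigr) + \max\bigl(\EE[X(1-q(X))], \EE[w(1-q(X))]\bigr).
\end{equation*}
Up to relabeling the two messages, there are four sign patterns. Two of them collapse: if both maxima pick $X$ the value is at most $\EE[X]$, and if both pick $w$ it is at most $w$, and both values are achieved by constant mechanisms. In the remaining pattern, message $1$ is "accept" and message $0$ is "reject". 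Since a maximum dominates each of its arguments, $J_w(Q) \ge \EE[Xq + w(1-q)] =: G(q)$, with equality for the maximizing labeling. So $\sup_Q J_w = \max\bigl(\EE[X], w, \sup_q G(q)\bigr)$, where the last supremum runs over LDP-feasible $q$.

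Next I maximize the linear functional $G(q) = w + \EE[(X-w)q(X)]$ over feasible $q$. Let $a = \inf q$ and $b = \sup q$. Feasibility is equivalent to $b \le e^\varepsilon a$ and $1-a \le e^\varepsilon(1-b)$. For fixed $a,b$, the pointwise optimum is $q = b$ when $x \ge w$ and $q = a$ when $x < w$, giving
\begin{equation*}
w + b\,\EE[(X-w)^+] - a\,\EE[(w-X)^+].
\end{equation*}
This is linear in $(a,b)$ over a polygon, so the maximum is at a vertex: $(0,0)$, $(1,1)$, or the intersection $a = 1/(1+e^\varepsilon)$, $b = e^\varepsilon/(1+e^\varepsilon)$. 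The first two give $w$ and $\EE[X]$. The third gives
\begin{equation*}
w + \tfrac{e^\varepsilon}{1+e^\varepsilon}\EE[(X-w)^+] - \tfrac{1}{1+e^\varepsilon}\EE[(w-X)^+],
\end{equation*}
and expanding $\max(x,w)$ and $\min(x,w)$ shows this equals $\EE[\phi(X,w)]$.

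Combining, $\sup_Q J_w = \max(\EE[X], w, \EE[\phi(X,w)])$. Each value is attained by the corresponding branch of $R_{w,\varepsilon}$: the constant mechanisms give $\EE[X]$ and $w$, and the randomized-response branch is exactly $q = b\,\mathds{1}[x \ge w] + a\,\mathds{1}[x<w]$. It remains to check that $J_w(R_{w,\varepsilon})$ is at least $G$ of it, which is immediate from $J_w \ge G$; this proves \eqref{eq:Jw}.

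The main obstacle is the reduction step. I need the four-pattern case analysis to hold cleanly, including when $\PP(Z=z)=0$, and I need the reduction to the two extreme values $a,b$ to be rigorous. For the latter I will argue directly: every feasible $q$ takes values in $[a,b]$, so $G(q) \le w + b\,\EE[(X-w)^+] - a\,\EE[(w-X)^+]$.
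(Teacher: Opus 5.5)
Your proposal is correct and follows the paper's proof: reduce to binary mechanisms via \Cref{prop:two-points}, rewrite the objective in the accept/reject configuration as the linear functional $w+\EE[(X-w)q(X)]$, bound it using $a=\inf q$ and $b=\sup q$, and evaluate at the three vertices $(0,0)$, $(1,1)$ and $\bigl(\tfrac{1}{1+e^{\varepsilon}},\tfrac{e^{\varepsilon}}{1+e^{\varepsilon}}\bigr)$ of the LDP triangle. Your identity $J_w(Q)=\max(\EE[X],w,G(q),G(1-q))$ and the pointwise bound $(x-w)q(x)\le b(x-w)^+-a(w-x)^+$ are tidier, fully rigorous versions of the paper's ``drop the posterior-mean constraint'' and ``the objective is linear'' steps.
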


\begin{proof}[Proof sketch (full proof in \Cref{app:optimalchannel})] By \Cref{prop:two-points}, it is sufficient to consider two-point mechanisms, which are entirely described by the conditional law $\varphi(x)=\Pr(Z=1 \mid X=x)$ since $\Pr(Z=0 \mid X=x)=1-\varphi(x) $.
The privacy constraints are equivalent to the constraints $(\inf \phi, \sup \phi) \in \mathfrak{C}$ where
\[
\mathfrak{C} = \{a, b \in [0, 1]: (1-a) \leq e^{\varepsilon}(1-b), b \leq e^{\varepsilon} a, a \leq b \}
\]
Then, we show that $\varphi^*$ reaching $\max_{Q \in \mathcal{Q}_\varepsilon} J_w(Q)$ is solution to
\[
\max_{\varphi: (\inf \varphi, \sup \varphi) \in \mathfrak{C}} w + \int_{\mathbb{R}_+} (x-w) \varphi(x) dF(x),
\]
which implies that $\varphi^*$ is of the form
\[
\varphi^*(x) =
\begin{cases}
    a & \text{if } x < w \\
    b & \text{otherwise}.
\end{cases}
\]
where $(a, b) \in \mathfrak{C}$. 
The problem therefore, reduces to finding the optimal values for $a$ and $b$.
The set of constraints $\mathfrak{C}$ forms a system  of linear inequalities in $a$ and $b$ that has $3$ extreme points, either $(a,b)=(0,0)$, $(a,b)=(1,1)$, or $(a, b) = (\frac{1}{1+e^{\varepsilon}}, \frac{e^{\varepsilon}}{1+e^{\varepsilon}})$.
So the maximum is reached for one of these three points. Choosing $a=b=0$ or $a=b=1$ recovers the non-informative mechanism $Q( \{ 0 \} \mid x) = 1$ or $Q( \{ 1 \} \mid x) = 1$ and gives objective value 
$
J_{w}(Q)=\max(\mathbb{E}[X],w)
$.
The other extreme point corresponds to the binary randomized response  
\[
Q( \cdot \mid x) = \mathrm{Ber}\left(\frac{e^{\varepsilon}}{1+e^{\varepsilon}}\mathds{1}[x \geq w]+ \frac{1}{1+e^{\varepsilon}} \mathds{1}[x<w]\right),
\]
and gives the value  $J_{w}(Q)=\mathbb{E}[\phi(X,w)]$.
\end{proof}

% The mechanism $R_{w, \varepsilon}$, highlighted in~\Cref{prop:optimalchannel}, is a randomized response mechanism applied to the input $\ind{x > w}$. Note that when the condition $\EE[\phi(X, w)] \geq \max(\EE[X], w)$ is not satisfied, the mechanism can be chosen arbitrarily since its output will be ignored. The convention chosen in~\Cref{eq:Rwe} is designed such that the output $Z = 1$ indicates that $\EE[X | Z=1] \geq \max(\EE[X], w)$.

% The mechanism $R_{w,\varepsilon}$ in \Cref{prop:optimalchannel} admits a simple interpretation.

% The one-step objective $J_w(Q)=\EE[\max(\EE[X\mid Z],w)]$ depends on $Q$ only through which reports lead the decision maker to ``accept'' (i.e., have posterior mean at least $w$) versus ``reject''.
% Thus, it is optimal to use the report $Z$ only to convey coarse information about whether $X$ lies above or below the continuation threshold $w$.

The mechanism $R_{w,\varepsilon}$ is exactly randomized response applied to the binary predicate $\ind{x\ge w}$:
values above $w$ are more likely to trigger the ``accept'' message, and values below $w$ are more likely to trigger ``reject'',
with likelihood ratio bounded by $e^{\varepsilon}$ as required by $\varepsilon$-LDP.
The third quantity in \Cref{eq:Jw}, namely $\EE[\phi(X,w)]$, is precisely the one-step value obtained when the decision maker uses this randomised response signal.
When $\EE[\phi(X,w)]$ fails to beat the trivial baseline $\max(\EE[X],w)$, privacy makes the threshold signal too weak to be useful,
and an optimal mechanism can be chosen to be non-informative (a constant report), in which case the message is ignored and the decision reduces to comparing $\EE[X]$ to $w$.
The convention in \Cref{eq:Rwe} is chosen so that $Z=1$ always corresponds to the ``accept'' outcome in the subsequent stopping rule, i.e. $\EE[X | Z=1] \geq \max(\EE[X], w)$.

We conclude this section by giving, in the next theorem, the optimal $\varepsilon$-LDP stopping rule.

\begin{theorem}
    \label{th:optimal LDP stopping rule}
    Let $\mathbf{X} \in \mathcal{C}_n$ with joint distribution $F_1\otimes \dots \otimes F_n$ and let $(W_i(\mathbf{X}))_{i \in [n+1]}$ be a sequence such that $W_{n+1}(\mathbf{X}) = 0$ and for $i \in [n]$,
    \begin{equation}
    \label{eq:wphi}
    W_i(\mathbf{X}) = \max(\EE[X_i], W_{i+1}(\mathbf{X}), \EE[\phi(X_i, W_{i+1}(\mathbf{X}))])
    \end{equation}
    where $\phi$ is defined in \Cref{eq:phi}.
    
    The mechanisms $(Q_i^* = R_{W_{i+1}(\mathbf{X}), \varepsilon})_{i \in[n]}$ where for any $w \geq 0$, $R_{w, \varepsilon}$ is defined in \Cref{eq:Rwe} together with the stopping rule 
    \[
    \tau^* = \min \{i \in [n]: Z_i = 1 \}
    \]
    where $(Z_i \mid X_i = x) \sim Q_i^*( \cdot \mid x)$  are solution to the maximization problem \Cref{eq:Wn}. Furthermore $\Wn(\mathbf{X}) = W_1(\mathbf{X})$.
\end{theorem}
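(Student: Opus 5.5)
The plan is to combine \Cref{prop:private dp} with \Cref{prop:optimalchannel} to get the value identity, and then to check that the proposed policy actually attains it.

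\textbf{Step 1: the value identity.} By induction on $i$ from $n+1$ down to $1$, I will show that $\mathscr{W}_i(\mathbf{X}) = W_i(\mathbf{X})$.
\begin{itemize}
\item The base case is immediate, since $\mathscr{W}_{n+1} = W_{n+1} = 0$.
\item For the inductive step, assume $\mathscr{W}_{i+1} = W_{i+1}$. Then \Cref{eq:recursionwi} reads $\mathscr{W}_i = \max_{Q_i\in\mathcal{Q}_\varepsilon} J_{W_{i+1}}(Q_i)$, with $X = X_i \sim F_i$.
\item \Cref{prop:optimalchannel}, applied with $w = W_{i+1}(\mathbf{X})$, gives this maximum in closed form as $\max(\EE[X_i], W_{i+1}, \EE[\phi(X_i, W_{i+1})])$. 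This is exactly \Cref{eq:wphi}, so $\mathscr{W}_i = W_i$.
\item The maximum is attained by $R_{W_{i+1},\varepsilon}$.
\end{itemize}
\Cref{prop:private dp} then yields $\Wn(\mathbf{X}) = \mathscr{W}_1(\mathbf{X}) = W_1(\mathbf{X})$.

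\textbf{Step 2: optimality of the mechanisms $Q^*_{1:n}$.} Fix $Q_i^* = R_{W_{i+1},\varepsilon}$ for every $i$, and let $Y_i = \EE[X_i \mid Z_i]$. I will show by backward induction that the BDP values of \Cref{eq:bdp} on $\mathbf{Y}$ satisfy $V_i(\mathbf{Y}) = W_i(\mathbf{X})$.
\begin{itemize}
\item Assuming $V_{i+1}(\mathbf{Y}) = W_{i+1}$, we get $V_i(\mathbf{Y}) = \EE[\max(Y_i, W_{i+1})] = J_{W_{i+1}}(Q_i^*) = W_i$, again by \Cref{prop:optimalchannel}.
\item Hence $\mathcal{V}_{Q^*_{1:n}}(\mathbf{X}) = V_1(\mathbf{Y}) = W_1 = \Wn(\mathbf{X})$, so these mechanisms solve the outer maximization in \Cref{eq:Wn}.
\end{itemize}
Note that Step 2 does not need the independence of the $Z_i$ beyond what \Cref{eq:optstop} already uses, since each $Q_i^*$ depends only on the deterministic number $W_{i+1}$.

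\textbf{Step 3: the stopping rule.} By \Cref{eq:optstop} (the result of \citet{Assaf1998ASV}), the rule $\min\{i : Y_i \ge V_{i+1}(\mathbf{Y})\} = \min\{i : Y_i \ge W_{i+1}\}$ is optimal. It remains to show that this coincides with $\tau^* = \min\{i : Z_i = 1\}$, or at least achieves the same value. I will check the three cases of \Cref{eq:Rwe}.
\begin{itemize}
\item \emph{Constant report $1$.} Here $Z_i = 1$ almost surely and $Y_i = \EE[X_i] > W_{i+1}$, so both rules stop at $i$.
\item \emph{Constant report $0$.} Here $Y_i = \EE[X_i] \le W_{i+1}$. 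The rule $\tau^*$ continues. The BDP rule may stop only if $\EE[X_i] = W_{i+1}$, a tie in which stopping and continuing give the same value, so continuing is also optimal.
\item \emph{Randomized response.} Here $\EE[\phi(X_i,W_{i+1})] \ge \max(\EE[X_i], W_{i+1})$. I need to show $\EE[X_i \mid Z_i = 1] \ge W_{i+1} \ge \EE[X_i \mid Z_i = 0]$, up to ties.
\end{itemize}

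The randomized-response case is the main obstacle. My approach is to write $J_w(Q) = w + \EE[(Y-w)^+]$ and compare it with the value of the specific policy that stops exactly when $Z=1$. Conditioning on $Z$, the latter equals
\[
\PP(Z=1)\,\EE[X\mid Z=1] + \PP(Z=0)\,w = w + \EE[(X-w)\varphi(X)],
\]
where $\varphi(x) = \PP(Z=1 \mid X=x)$ is the randomized-response acceptance probability. This is precisely the linear objective from the proof of \Cref{prop:optimalchannel}, and at the randomized-response extreme point it equals $\EE[\phi(X,w)]$. Therefore the policy ``stop at $i$ iff $Z_i = 1$, else continue with value $W_{i+1}$'' attains $J_{W_{i+1}}(Q_i^*) = W_i$ directly, without needing to identify $\tau^*$ with the BDP rule.

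This gives the cleanest route: a direct backward induction showing that $\EE[X_{\tau^*} \mathds{1}\{\tau^* \ge i\} \mid \tau^* \ge i] = W_i$, handling the two constant cases as above. This proves that $\tau^*$ attains $W_1 = \Wn(\mathbf{X})$.
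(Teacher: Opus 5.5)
Your proof is correct. Step 1 is exactly the paper's argument: induct to identify $W_i$ with $\mathscr{W}_i$ via \Cref{prop:optimalchannel}, then conclude with \Cref{prop:private dp}.

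The difference is in how the stopping rule is certified. The paper invokes \Cref{eq:optstop} and simply asserts that $\{\EE[X_i\mid Z_i]\ge W_{i+1}\}=\{Z_i=1\}$ ``by definition of $R_{W_{i+1},\varepsilon}$''. You instead compute the value of $\tau^*$ directly, through $G_i=\EE[X_i\mathds{1}\{Z_i=1\}]+\PP(Z_i=0)\,G_{i+1}$ with $G_{n+1}=0$. You then check $G_i=W_i$ in each of the three regimes of \Cref{eq:Rwe}, using that the linear objective $w+\EE[(X-w)\varphi(X)]$ equals $\EE[\phi(X,w)]$ at the randomized-response vertex.

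This buys a self-contained argument. It never needs the posterior ordering $\EE[X\mid Z=1]\ge w\ge\EE[X\mid Z=0]$, which the appendix proof of \Cref{prop:optimalchannel} assumes rather than proves, and it absorbs ties automatically. It also makes your Step 2 and the case-by-case comparison with the BDP rule redundant: $\EE[X_{\tau^*}]=G_1=W_1=\Wn(\mathbf{X})$ already finishes the proof.

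The paper's route is shorter, and its identification can be justified in one line. Stopping iff $Z=1$ is a feasible policy, so $J_w(R_{w,\varepsilon})\ge\EE[\phi(X,w)]$. In this regime the LP bound gives $J_w(R_{w,\varepsilon})\le\max(\EE[X],w,\EE[\phi(X,w)])=\EE[\phi(X,w)]$. Since $J_w(R_{w,\varepsilon})=\max(\EE[X\mathds{1}\{Z=1\}],\PP(Z=1)w)+\max(\EE[X\mathds{1}\{Z=0\}],\PP(Z=0)w)$ and $\PP(Z=z)\ge 1/(1+e^{\varepsilon})>0$, equality forces exactly that ordering.

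One minor point: phrase your final induction with the unconditional $G_i$, using that $(X_j,Z_j)_{j\ge i}$ is independent of $Z_{1:i-1}$. Conditioning on $\{\tau^*\ge i\}$ fails when an earlier stage uses the constant report $1$, because that event is then null.
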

\begin{proof}
    By induction, the sequence $(W_i)_{i \in [n]}$ of this theorem coincides with the sequence of dynamic programming values  $(\mathscr{W}_i)_{i \in [n]}$ of \Cref{prop:private dp}. First, $W_{n+1} = \mathscr{W}_{n+1} = 0$. Then, for $k \in [n]$, assuming $W_{k+1} = \mathscr{W}_{k+1}$ we have by \Cref{prop:optimalchannel} that the max over $\varepsilon$-LDP mechanisms in \Cref{eq:recursionwi} is reached by $R_{\mathscr{W}_{i+1}, \varepsilon} = R_{W_{i+1}, \varepsilon}$ showing that $W_k = \mathscr{W}_k$.
    Lastly, by \Cref{eq:optstop}, and since $Q_i = R_{W_{i+1}, \varepsilon}$, the optimal stopping rule is then
    $
    \tau^* = \min \{i \in [n], \EE[X_i | Z_i] \geq W_{i+1}, \}$
    which by definition of $R_{W_{i+1}, \varepsilon}$ is the same as $
    \tau^* = \min \{i \in [n], Z_i = 1 \}$.
\end{proof}

Theorem~\ref{th:optimal LDP stopping rule} yields a transparent ``privatize-then-decide'' description of optimal behavior under LDP.
The backward recursion \Cref{eq:wphi} plays the role of a continuation-value dynamic program:
$W_{i+1}(\mathbf{X})$ is the optimal value of continuing past time $i$ under LDP, and therefore acts as the acceptance threshold at stage $i$.
Given this threshold, the optimal reporting mechanism is \emph{binary}: agent $i$ locally runs randomized response on whether $X_i$ exceeds $W_{i+1}(\mathbf{X})$,
producing a bit $Z_i$ that is informative about the event $\{X_i \ge W_{i+1}(\mathbf{X})\}$ but bounded by the privacy budget.
The decision maker then follows the simplest possible stopping rule: stop at the first time $i$ such that $Z_i=1$.
In other words, local privacy collapses each arrival to a one-bit ``recommendation'' about whether accepting is better than continuing,
and the optimal policy is obtained by combining (a) the dynamic computation of thresholds $(W_i)_{i\in[n]}$ with (b) optimal one-step privatization at each threshold.

% The optimal rule in \Cref{th:optimal LDP stopping rule} prescribes the decision maker to stop at step $i$ if the output of the privacy mechanism indicates that $\{X_i \geq W_{i+1}\}$ is more likely than $\{ X_i < W_{i+1} \}$ where the sequence of thresholds $(W_i)_{i \in [n]}$ are dynamically computed.

% We now want to exhibit the optimal binary mechanisms and a simple recurrence relation which yields the optimal $\varepsilon$-\ac{ldp} stopping value $W^{(n)}_\varepsilon$. It is likely that a backward dynamic approach will yield the optimal value, and we show that once the optimal mechanisms are fixed, this is indeed the case. 

% First, for comparison, let us look at the standard optimal stopping value 
% \begin{equation}
% \mathrm{V}^{(n)}(\mathbf{X}) \coloneqq \sup_{\tau \in \mathcal{T}_n} \mathbb{E}[X_{\tau}],
% \end{equation}
% where $\mathcal{T}_n=\mathcal{T}((\sigma(X_1,\dots,X_i))_{i \in [n]})$ is the set of stopping rules with respect to the $X_i$ observations filtration. In \cite{Chow1971GreatET}, it is shown for independent positive $X_i$ that the optimal stopping time $\tau^*$ exists, and is obtained through backward dynamic programming. More specifically, let us denote by $V_i$ the optimal expected value of being at state $i$, we have the following recurrence:
% \begin{equation}
%     V_i=\mathbb{E}[\max(X_i,V_{i+1})], 
% \end{equation}
% with $V_{n+1}=0$ (or equivalently $V_n=\mathbb{E}[X_n]$), and $\mathrm{V}=V_1$. Remark that allowing for randomization in the stopping rule doesn't change the value in this setting. 

\subsection{On Interactive Locally Differentially Private Mechanisms}
\label{subsec:on interactive}

%As \Cref{prop:optimalchannel} we can therefore simply optimize over those binary mechanisms to obtain $W^{(n)}_{\varepsilon}$. As we will see later in \hr{ref missing} [REF HERE], the fact that binary mechanisms are optimal only holds for the online decision maker: if all the data $Z_i$ could be observed at once, then binary mechanisms are not necessarily optimal. We next compute the optimal \ac{ldp}-channel for this functional $J$.

Up until now, we have assumed that the mechanisms $Q_{1:n}$ are chosen a priori, before observing the messages $Z_i$. A closely related notion in differential privacy, when a stream of data is observed, is to consider interactive local differential privacy \citep{joseph2019role}. This allows the decision maker to choose a mechanism for time $i$ by leveraging the messages observed at times $j<i$. Concretely, the kernel $Q_i$ can generate a message $Z_i$ conditionally on $(Z_1,\dots,Z_{i-1},X_i)$.  
The next proposition proves that allowing for interactive \ac{ldp} mechanisms does not improve the optimal stopping value.

\begin{proposition}
\label{prop:interactive}
Suppose that we allow for each mechanism $Q_i$ to depend on the history $H_i=(Z_1,\dots,Z_{i-1})$, and denote by $\mathcal{W}_{\varepsilon}^{(n),I}(\mathbf{X})$ the (weakly larger) optimal stopping value for this enlarged space of mechanisms. This provides no advantage compared to non-interactive mechanisms: 
\begin{equation*}
\mathcal{W}_{\varepsilon}^{(n),I}(\mathbf{X})=\Wn(\mathbf{X}).
\end{equation*}
\end{proposition}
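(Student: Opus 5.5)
The inequality $\mathcal{W}_{\varepsilon}^{(n),I}(\mathbf{X}) \geq \Wn(\mathbf{X})$ is immediate, because a non-interactive family $Q_{1:n}$ is a special case of an interactive one. So the whole task is the reverse inequality. My plan is a backward induction on the state "history $H_i$" that mirrors \Cref{prop:private dp}. The key structural fact is that $X_i$ is independent of $H_i=(Z_1,\dots,Z_{i-1})$. This holds because each $Z_j$ with $j<i$ is generated from $(H_j, X_j)$ and fresh randomness, and $X_i$ is independent of $X_{1:i-1}$.

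I will define the interactive continuation value $U_i(h)$: the supremum, over interactive mechanisms for stages $i,\dots,n$ and stopping rules, of the expected stopped true value given that the decision maker has not stopped before $i$ and the history is $H_i=h$. I aim to show that $U_i(h)=\mathscr{W}_i(\mathbf{X})$ for every $h$, by backward induction on $i$, starting from $U_{n+1}\equiv 0=\mathscr{W}_{n+1}$.

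For the inductive step I fix $h$. The stage-$i$ kernel, once $h$ is plugged in, is a single kernel $Q_i(\cdot\mid h,\cdot)$, and the $\varepsilon$-LDP requirement for interactive mechanisms says exactly that this kernel lies in $\mathcal{Q}_\varepsilon$. Since $X_i$ is independent of $H_i$, conditionally on $H_i=h$ we still have $X_i\sim F_i$. Hence $\EE[X_i\mid Z_i, H_i=h]$ is just the posterior mean for the channel $Q_i(\cdot\mid h,\cdot)$ applied to $X_i\sim F_i$.

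The stage-$i$ dynamic programming equation (the \citet{Assaf1998ASV} argument conditioned on the history) gives
\[
U_i(h)=\sup_{Q\in\mathcal{Q}_\varepsilon}\EE\big[\max\big(\EE[X_i\mid Z_i,H_i=h],\,U_{i+1}((h,Z_i))\big)\big].
\]
By the induction hypothesis $U_{i+1}((h,Z_i))=\mathscr{W}_{i+1}(\mathbf{X})$, which is a constant not depending on $Z_i$. The right-hand side is therefore exactly $\sup_{Q}J_{\mathscr{W}_{i+1}}(Q)=\mathscr{W}_i(\mathbf{X})$ by \Cref{eq:recursionwi}. Taking $i=1$ with the empty history gives $\mathcal{W}_{\varepsilon}^{(n),I}(\mathbf{X})=U_1=\mathscr{W}_1=\Wn(\mathbf{X})$ by \Cref{prop:private dp}.

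The main obstacle is justifying the conditional dynamic programming equation. There are two points to handle. First, the optimal continuation from $i+1$ may depend on $Z_i$ and on the whole history, and one must verify that the sup over future interactive mechanisms can be taken pointwise in $(h,z)$; a measurable selection subtlety arises here. Second, the posterior-mean sufficiency must hold conditionally. I would sidestep the measurability issue by using only upper bounds. For any fixed interactive policy, I will show by backward induction that its conditional value from stage $i$ given $H_i$ is at most $\mathscr{W}_i$ almost surely. At stage $i$, the payoff is $\EE[X_i\mid H_{i+1}]$ when the policy stops and at most $\mathscr{W}_{i+1}$ when it continues. This bounds the conditional value by $\EE[\max(\EE[X_i\mid Z_i,H_i],\mathscr{W}_{i+1})\mid H_i]\le \mathscr{W}_i$, where the last step uses the definition of $\mathscr{W}_i$ as a max over $\mathcal{Q}_\varepsilon$ for the fixed-$h$ kernel. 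This bound avoids needing any attainment of the supremum.
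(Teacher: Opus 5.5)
Your proposal is correct and follows essentially the same route as the paper. Both arguments fix an arbitrary interactive policy and show by backward induction that its conditional continuation value given the history is at most the non-interactive value $W_k=\mathscr{W}_k$. Both rely on the same two facts: $X_k$ is still $F_k$-distributed given $H_k$, and the kernel with the history plugged in lies in $\mathcal{Q}_\varepsilon$, so the one-step bound of \Cref{prop:optimalchannel} (equivalently the recursion \Cref{eq:recursionwi}) applies. Your explicit treatment of $\EE[X_k\mid Z_k,H_k=h]$ as the posterior for the kernel $Q_k(\cdot\mid h,\cdot)$, and your choice to bound every policy's value rather than invoke the dynamic programming equation with equality, are if anything slightly more careful than the paper's independence step.
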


\begin{proof}[Proof sketch (Full proof in \Cref{app:interactive})]
Fix $(Q_i)_{i \in [n]}$ a set of interactive privacy mechanisms and $Z_i$ the associated random variables. The value attained by the optimal stopping rule can be computed by dynamic programming. It is given by $N_1$ where $N_{n+1} = 0$ and for $1 \leq k \leq n$, 
\[
N_{k} =\EE[\max(\EE[X_k | Z_{1:k}], N_{k+1}) \mid Z_{1:k-1}].
\]
Then defining $(W_k)_{k \in [n]}$ as in \Cref{th:optimal LDP stopping rule},
we show by induction that $N_k \leq W_k$ for any $1 \leq k \leq n$.

\end{proof}

\section{Competitive Ratio Against Non-private Benchmarks}
\label{sec:competitive}

\Cref{th:optimal LDP stopping rule} gives a useful characterization of $\Wn(\mathbf{X})$. In this section, this result will be leveraged to obtain worst-case competitive ratio guarantees, when comparing against the optimal value of the non-private online agent $\Vn(\mathbf{X})$ and the optimal value of the prophet $\Mn(\mathbf{X}) \coloneqq \mathbb{E}[\max_{ i \in [n]} X_i]$.

\subsection{Competitive Ratio Against Non-Private Optimal Stopping}

First, we study the relative value loss of an online agent that must respect \ac{ldp} constraints compared to one without such constraints. We derive the exact tight competitive ratio in the worst-case as a function of $n$ and $\varepsilon$.

\begin{theorem}\label{thm:cr_stopping_ldp}
For all $n \in \mathbb{N}^*$ and $\varepsilon\geq0$,
\begin{equation}
\inf_{\mathbf{X} \in \mathcal{C}_n} \frac{W_{\varepsilon}^{(n)}(\mathbf{X})}{\Vn(\mathbf{X})} = \frac{e^{\varepsilon}}{n-1+e^{\varepsilon}}.
\end{equation}
\end{theorem}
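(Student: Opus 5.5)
The lower bound will follow from the recursion of \Cref{th:optimal LDP stopping rule} by a backward induction. Write $r=e^{\varepsilon}$ and $k=n-i+1$ for the number of remaining steps. The inductive claim is
\[
r\,V_i(\mathbf{X})\le (k-1+r)\,W_i(\mathbf{X}), \qquad i\in[n].
\]
At $i=1$ this gives $\Wn(\mathbf{X})/\Vn(\mathbf{X})\ge r/(n-1+r)$. The base case $i=n$ holds since $V_n=W_n=\EE[X_n]$.

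For the inductive step, set $w=W_{i+1}$ and $v=V_{i+1}$. Then $w\le v$ (since $\Wn\le\Vn$ holds stage by stage), and by induction $r v\le (k-2+r)w$. Two elementary identities drive the step. The first is
\[
V_i=v+\EE[(X_i-v)^+]\le v+\EE[(X_i-w)^+].
\]
The second, from $\max(x,w)=w+(x-w)^+$ and $\min(x,w)=x-(x-w)^+$, is
\[
(1+r)\,\phi(x,w)=x+r w+(r-1)(x-w)^+ .
\]
Taking expectations in the second identity and using $\EE[X_i]\ge\EE[(X_i-w)^+]$ gives
\[
(1+r)\,\EE[\phi(X_i,w)]\ge r w+r\,\EE[(X_i-w)^+].
\]
Since $W_i\ge\EE[\phi(X_i,w)]$ and $W_i\ge w$ by \Cref{eq:wphi}, we conclude
\begin{align*}
rV_i &\le rv+r\EE[(X_i-w)^+]\le (k-2+r)w+r\EE[(X_i-w)^+]\\
&= (k-2)w+\bigl(rw+r\EE[(X_i-w)^+]\bigr)\le (k-2)W_i+(1+r)W_i=(k-1+r)W_i .
\end{align*}
This closes the induction. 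Only the informative option $\EE[\phi]$ of \Cref{eq:wphi} is used, together with $W_i\ge W_{i+1}$.

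For tightness I would build an instance that makes every inequality above nearly an equality. This requires three things at each stage:
\begin{itemize}
\item $\EE[X_i]\approx\EE[(X_i-w)^+]$, meaning $X_i$ is a rare large value: $X_i=m_i/\delta$ with probability $\delta$ and $0$ otherwise, with $\delta\to0$;
\item $\EE[(X_i-v)^+]\approx\EE[(X_i-w)^+]$, which again holds for rare large values;
\item the induction hypothesis itself is nearly tight.
\end{itemize}
Take $X_n$ deterministic, equal to $1$. With rare large values, $V_1\approx\sum_i m_i$, because the non-private agent grabs every spike. The private recursion instead becomes, up to $o(1)$ terms, the one-dimensional map
\[
W_i=\max\bigl(m_i,\;W_{i+1},\;\tfrac{r}{1+r}(W_{i+1}+m_i)\bigr).
\]
I will choose the means $m_1,\dots,m_{n-1}$ so that at every stage the third term is active and all slack in the chain vanishes. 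The natural guess is to make $W_i$ grow linearly, with $W_i\propto (k-1+r)/r$ times a constant, and to solve for $m_i$ backwards. One then checks that $W_1/\sum_i m_i\to r/(n-1+r)$ as $\delta\to0$.

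The main obstacle is this construction. The lower-bound chain discards the slack $(k-2)(W_i-w)$ and $\EE[X_i]-\EE[(X_i-w)^+]$, so the instance must make both negligible simultaneously. If a plain rare-spike family does not achieve that, I would try means growing backwards with a large scale separation. I would also try a final variable $X_n$ carrying most of $w$, chosen so that $W_i\approx W_{i+1}$ at early stages while each spike still contributes fully to $\Vn$.
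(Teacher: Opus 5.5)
\paragraph{Lower bound.} Your lower-bound induction is correct, and it takes a different, shorter route than the paper's. The paper proves $V_i/W_i \le V_{i+1}/W_{i+1}+e^{-\varepsilon}$ via a three-case pointwise inequality, with a convex weight tuned to $W_{i+1}/V_{i+1}$. Your identity $(1+r)\phi(x,w)=x+rw+(r-1)(x-w)^+$, combined with $(x-w)^+\le x$, gives directly
\[
V_i\le (V_{i+1}-W_{i+1})+(1+e^{-\varepsilon})W_i ,
\]
from which the same recursion follows with no case analysis.

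\paragraph{The gap: tightness.} The theorem asserts an equality, and you never exhibit an instance attaining $e^{\varepsilon}/(n-1+e^{\varepsilon})$. Moreover, the concrete choice you propose does not work. Making $W_i$ grow linearly contradicts your own slack analysis. The discarded term $(k-2)(W_i-W_{i+1})$ appears at every stage $i\le n-2$, and the slacks accumulate into the final inequality, so tightness forces $W_1,\dots,W_{n-1}$ to be asymptotically equal. Concretely, solve $W_i=\frac{r}{1+r}(W_{i+1}+m_i)$ with $W_i=(n-i+r)/r$. This gives $m_i=(n-i+2r)/r^2$. The true $W_i$ dominate this guessed sequence, and $\Vn(\mathbf{X})\to 1+\sum_{i<n}m_i$. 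Hence the limiting ratio is at least $r(n-1+r)/\bigl(r^2+(n-1)(n/2+2r)\bigr)$. This is strictly larger than $r/(n-1+r)$ as soon as $n\ge 3$. In the extremal instance it is $V_i$, not $W_i$, that grows like $(n-i+r)/r$.

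\paragraph{The missing step.} Calibrate the spikes so that the informative term ties exactly with a constant continuation value. Keep $X_n=1$ and give every earlier spike mean $e^{-\varepsilon}$. For example, take $X_i=e^{-\varepsilon}/\delta$ with probability $\delta$ and $0$ otherwise; the paper uses $X_i=1+e^{-\varepsilon}(1-q)/q$ with probability $q$. Write $p=r/(1+r)$; then $p(1+e^{-\varepsilon})=1$. With $W_{i+1}=1$ this gives $\EE[\phi(X_i,1)]=1-(2p-1)\delta\le 1$ and $\EE[X_i]\le 1$. So $W_i=1$ at every stage and $\Wn(\mathbf{X})=1$ exactly, while $\Vn(\mathbf{X})\to 1+(n-1)e^{-\varepsilon}$ as $\delta\to 0$.

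This is what your fallback idea (a final variable carrying the continuation value so that $W_i\approx W_{i+1}$) needed to be pinned down to. Without it, your argument proves only the inequality $\ge$, not the stated equality.
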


We will prove this theorem by first exhibiting a lower bound that does not depend on the $X_i$, and then showing that the ratio attains this bound for a specific sequence of $X_i$. The following proposition gives a uniform lower bound.
\begin{proposition}
\label{prop:lb-w/v}
For all $n \in \mathbb{N}^*$, $\varepsilon\geq0$, and $\mathbf{X} \in \mathcal{C}_n$, we have
\begin{equation*}
W_{\varepsilon}^{(n)}(\mathbf{X}) \geq \frac{e^{\varepsilon}}{n-1+e^{\varepsilon}} \cdot  \Vn(\mathbf{X}).
\end{equation*}
\end{proposition}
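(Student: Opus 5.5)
The plan is to prove the statement by backward induction on the number of remaining steps, using the recursion of \Cref{th:optimal LDP stopping rule}. Write $W_i=W_i(\mathbf{X})$, $V_i=V_i(\mathbf{X})$ for the private and non-private \ac{bdp} sequences. Set $c=e^{\varepsilon}/(1+e^{\varepsilon})$, and for $k\ge 1$ set $\alpha_k=e^{\varepsilon}/(k-1+e^{\varepsilon})$, so that $\alpha_1=1$ and $\alpha_k\le 1$. The claim to prove is
\[
W_i\ \ge\ \alpha_{n-i+1}\,V_i \qquad \text{for all } i\in[n].
\]
At $i=1$ this gives the proposition, since $\Wn(\mathbf{X})=W_1$ and $\Vn(\mathbf{X})=V_1$. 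The base case $i=n$ is immediate: $W_n\ge \EE[X_n]=V_n$.

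\textbf{Inductive step.} Fix $i<n$, let $k=n-i+1\ge 2$, and introduce $a=\EE[X_i]$ and $w=W_{i+1}$. The induction hypothesis gives $V_{i+1}\le w/\alpha_{k-1}$. Since $\max(x,v)\le x+v$ for nonnegative $x,v$, the non-private value satisfies
\[
V_i=\EE[\max(X_i,V_{i+1})]\le a+V_{i+1}\le a+w\,\frac{k-2+e^{\varepsilon}}{e^{\varepsilon}}.
\]
On the private side I will lower bound the third term of \Cref{eq:wphi} without rescaling $w$. Using $\max+\min=x+w$, one can rewrite $\phi(x,w)=(2c-1)\max(x,w)+(1-c)(x+w)$. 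Then Jensen's inequality applied to $\max(\cdot,w)$, together with $2c-1\ge 0$, gives
\[
\EE[\phi(X_i,w)]\ \ge\ (2c-1)\max(a,w)+(1-c)(a+w).
\]
Hence
\[
W_i\ \ge\ \max\bigl(a,\ w,\ (2c-1)\max(a,w)+(1-c)(a+w)\bigr).
\]
The whole step therefore reduces to the deterministic two-variable inequality
\[
\max\bigl(a,w,(2c-1)\max(a,w)+(1-c)(a+w)\bigr)\ \ge\ \frac{e^{\varepsilon}a+(k-2+e^{\varepsilon})w}{k-1+e^{\varepsilon}}.
\]

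\textbf{The key inequality.} This step is where I expect the main obstacle. The first attempt, namely scaling $w$ through the homogeneity of $\phi$ and then using $\phi\ge c\max$, loses the $\min$ term. It only yields a ratio of order $c^{n-1}$, which is too weak, so the $(1-c)\min$ contribution must be kept as above. Both sides are homogeneous of degree one, so I normalize $a=1$ and $w=x\ge 0$, and split into two cases.
\begin{itemize}
\item Case $x\le 1$. Use the third term, which equals $c+(1-c)x$. Both it and the right-hand side are affine in $x$ and both equal $1$ at $x=1$. At $x=0$ the bound needed is $c\ge e^{\varepsilon}/(k-1+e^{\varepsilon})$, which holds exactly because $k\ge 2$. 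So the inequality holds on all of $[0,1]$.
\item Case $x\ge 1$. Use the term $w=x$. It equals the right-hand side at $x=1$ and has slope $1$, while the right-hand side has slope $(k-2+e^{\varepsilon})/(k-1+e^{\varepsilon})<1$. So the inequality holds for all $x\ge 1$.
\end{itemize}
This closes the induction.

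Equality in both cases occurs at $a=w$, suggesting that tight instances have $\EE[X_i]\approx W_{i+1}$ at every stage, which is consistent with the matching upper bound in \Cref{thm:cr_stopping_ldp}.
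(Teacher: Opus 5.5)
\textbf{There is a genuine gap: the inductive step reduces to a two-variable inequality that is false.} Your induction skeleton and base case are fine. The failure is easiest to see at $a=w$. Since $\phi(a,a)=a$, the left-hand side equals $a$. The right-hand side equals $\frac{k-2+2e^{\varepsilon}}{k-1+e^{\varepsilon}}\,a$, which exceeds $a$ for every $\varepsilon>0$. Your assertion that both sides equal $1$ at $x=1$ is an arithmetic slip. The point $a=w$, which you single out as the equality case, is exactly where the bound breaks.

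This cannot be fixed by a sharper two-variable estimate. Because $\phi(a,w)=c\max(a,w)+(1-c)\min(a,w)\le\max(a,w)$, the third term after Jensen never beats the first two. So your lower bound is just $W_i\ge\max(\EE[X_i],W_{i+1})$, which is the value of ignoring the report altogether. Your reduced inequality then asks $\max(a,w)$ to dominate $\alpha_k a+(\alpha_k/\alpha_{k-1})w$, whose coefficients sum to $(k-2+2e^{\varepsilon})/(k-1+e^{\varepsilon})>1$ unless $\varepsilon=0$.

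The root cause is that Jensen collapses $X_i$ to its mean and discards exactly the information the randomized-response bit exploits. Meanwhile, $V_i\le\EE[X_i]+V_{i+1}$ is nearly tight for long shots ($X_i=M$ with probability $q\to0$, else $0$, with $qM=a$). On such an $X_i$ with $a=w$, the true value is $\EE[\phi(X_i,w)]\to 2cw$, not $w$.

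\textbf{The missing idea is to keep $X_i$ random and compare the two recursions pointwise in $X_i$ before taking expectations.} This is what the paper does. Set $\bar V=w/\alpha_{k-1}\ge V_{i+1}$, so that by monotonicity $V_i\le\EE[\max(X_i,\bar V)]$. Then use $W_i\ge\lambda w+(1-\lambda)\EE[\phi(X_i,w)]$ for a well-chosen $\lambda\in[0,1]$. Apply \Cref{lem:pointwise_bound} with $p=c$, $r=\alpha_{k-1}$, $y=x/\bar V$, $\lambda=1-\alpha/c$, and rescale by $\bar V$ using the homogeneity of $\phi$. This gives
\[
\lambda w+(1-\lambda)\phi(x,w)\ \ge\ \alpha\max(x,\bar V)\quad\text{for all }x\ge0,\qquad \alpha=\frac{c\,\alpha_{k-1}}{c+(1-c)\,\alpha_{k-1}},
\]
which is checked separately on $x\le w$, on $w\le x\le\bar V$, and on $x\ge\bar V$. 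Since $1/\alpha=1/\alpha_{k-1}+e^{-\varepsilon}=1/\alpha_k$, taking expectations gives $W_i\ge\alpha_k V_i$, which closes your induction. In the paper's normalization $r_{i+1}=W_{i+1}/V_{i+1}$, this is the recursion $V_i/W_i\le V_{i+1}/W_{i+1}+e^{-\varepsilon}$.
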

\begin{proof}[Proof sketch (full proof in \Cref{app:lb-w/v})]
The proof considers that the sequence $(W_i(\mathbf{X}))_{i \in [n]}$ defined in \Cref{th:optimal LDP stopping rule} and the sequence $(V_i(\mathbf{X}))_{i \in [n]}$ defined in \Cref{eq:bdp}. It is shown that for any $i \in [n]$,
\[
\frac{V_i(\mathbf{X})}{W_i((\mathbf{X}))} \leq \frac{V_{i+1}(\mathbf{X})}{W_{i+1}((\mathbf{X}))} + \exp(-\varepsilon).
\]
\end{proof}

We next prove the corresponding upper bound. 

\begin{proposition}
\label{prop:ub-w/v}
Consider the following sequence: $X^q_n=1$ a.s. and for $i \in [n-1]$,
\begin{equation*}
X^q_i= \begin{cases}
    1+e^{-\varepsilon}\frac{1-q}{q}, \quad \text{with probability $q$},\\
    0, \quad \text{with probability $1-q$}.
\end{cases}
\end{equation*}
Then,
\begin{equation*}
\lim_{q \rightarrow 0} \frac{\Wn(\mathbf{X}^q)}{\Vn(\mathbf{X}^q)}=\frac{e^{\varepsilon}}{n-1+e^{\varepsilon}}.
\end{equation*}
\end{proposition}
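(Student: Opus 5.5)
The plan is to compute both values explicitly for the instance $\mathbf{X}^q$ by backward induction: the recursion \Cref{eq:bdp} for $\Vn$ and the recursion \Cref{eq:wphi} of \Cref{th:optimal LDP stopping rule} for $\Wn = W_1$. I then pass to the limit $q \to 0$ stage by stage. Write $a_q \coloneqq 1+e^{-\varepsilon}\frac{1-q}{q}$, and note that $\EE[X_i^q] = q a_q = q + e^{-\varepsilon}(1-q)$ for $i \le n-1$. Every quantity below is built from $\max$ and $\min$ applied to affine functions of $q$, $q a_q$ and the next-stage value, so it depends continuously on $q\in(0,1]$ and has a limit as $q \to 0$. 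This lets me compute limits recursively, from stage $n$ down to stage $1$.

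\textbf{Non-private value.} We have $V_n = \EE[X_n^q] = 1$. For $i \le n-1$, the values $V_{i+1}$ stay bounded uniformly in small $q$, by an inductive bound such as $V_{i+1} \le n+1$ for $q \le 1/2$. Since $a_q \to \infty$, we get $a_q \ge V_{i+1}$ for $q$ small enough. Then
\[
V_i = q a_q + (1-q) V_{i+1} = q + e^{-\varepsilon}(1-q) + (1-q)V_{i+1},
\]
which tends to $e^{-\varepsilon} + \lim V_{i+1}$. Hence $\Vn(\mathbf{X}^q) = V_1 \to 1 + (n-1)e^{-\varepsilon}$.

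\textbf{Private value.} We have $W_n = \EE[X_n^q] = 1$ (here $\EE[\phi(1,0)] = e^{\varepsilon}/(1+e^{\varepsilon}) \le 1$). By \Cref{eq:wphi}, the sequence satisfies $W_i \ge W_{i+1}$, so $w \coloneqq W_{i+1} \ge 1$ at every stage. For $q$ small we also have $a_q > w$, so I can evaluate $\phi$ on the two atoms directly:
\[
\EE[\phi(X_i^q,w)] = q\,\frac{e^{\varepsilon}a_q + w}{1+e^{\varepsilon}} + (1-q)\,\frac{e^{\varepsilon} w}{1+e^{\varepsilon}} = \frac{e^{\varepsilon}q + (1-q) + qw + (1-q)e^{\varepsilon}w}{1+e^{\varepsilon}}.
\]
Suppose inductively that $W_{i+1} \to 1$. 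Then this expression tends to $(1+e^{\varepsilon})/(1+e^{\varepsilon}) = 1$. Also $\EE[X_i^q] \to e^{-\varepsilon} \le 1$, and $W_{i+1} \to 1$. The maximum of the three terms therefore tends to $1$, giving $W_i \to 1$, and ultimately $\Wn(\mathbf{X}^q) = W_1 \to 1$.

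The ratio then converges to $1/(1+(n-1)e^{-\varepsilon}) = e^{\varepsilon}/(n-1+e^{\varepsilon})$, as claimed. The only delicate point is making the limit exchange rigorous through the nested maxima. I will handle it by the inductive argument on stage limits above: each $W_i$ and $V_i$ is a finite composition of continuous operations in $(q, W_{i+1})$ or $(q, V_{i+1})$. The explicit forms are also valid once $q$ is below a threshold guaranteeing $a_q > \max(V_{i+1}, W_{i+1})$, which is uniform since both sequences are bounded by roughly $n+1$ for small $q$.

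The intuition is that the private agent cannot gain more than $o(1)$ from the rare large atoms, because randomized response cannot separate a probability-$q$ event with enough confidence. The non-private agent collects $e^{-\varepsilon}$ from each of them.
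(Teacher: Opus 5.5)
Your proof is correct and follows the paper's route: backward induction on both recursions, with $V_i = q a_q + (1-q)V_{i+1}$ once $a_q$ exceeds the bounded continuation values, giving $V_1 \to 1+(n-1)e^{-\varepsilon}$. The only difference is on the private side, where the paper shows $W_i = 1$ exactly for every $q \in (0,1]$. Your own formula already gives this, since plugging $w=1$ yields $\EE[\phi(X_i^q,1)] = 1$ identically (and $a_q \ge 1$ always), so your limit-exchange argument there is unnecessary.
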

\begin{proof}[Proof sketch (Full proof in \Cref{app:ub-w/v})]
At step $i$, the non-private decision maker stops at step $i$ according to a binary signal $\ind{X_i > V_{i+1}}$. The private decision maker also observes a binary signal, but corrupted by the randomized response mechanism. A positive signal can be observed either because of a high value or because of the privacy noise (false positive). The first $n-1$ variables $(X_i^q)_{i \in [n-1]}$ each have high value with low probability and zero value otherwise. 
% However, they are taken in such a way that the private decision maker never chooses them. 
The risk of having zero value because of a false positive signal slightly outweighs the potential gains, and we get \[
\Wn(\mathbf{X}^q) = 1.\]
However, for the non-private decision maker, there are no false positives, which allows her to take a large value if it appears. Computations detailed in appendix show that \[
\lim_{q \rightarrow 0} \Vn(\mathbf{X}^q) = (n-1) \exp(-\varepsilon) + 1.\]
\end{proof}

Together \Cref{prop:ub-w/v} and \Cref{prop:lb-w/v} imply \Cref{thm:cr_stopping_ldp}. A direct corollary of~\Cref{thm:cr_stopping_ldp} is the following characterization of the phase transition of $\varepsilon$ as a function of $n$, that yields non-trivial (not $1$ nor $0$) asymptotic tight bounds.

\begin{corollary}
    Let $\gamma >0$ and $\varepsilon=\log(\gamma n)$, with $n \geq 1/\gamma$. Then
    \begin{equation*}
    \lim_{n \rightarrow \infty} \inf_{\mathbf{X} \in \mathcal{C}_n}  \frac{\Wn(\mathbf{X})}{\Vn(\mathbf{X})} = \frac{\gamma}{\gamma+1}.
    \end{equation*}
\end{corollary}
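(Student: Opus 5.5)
This corollary follows by substituting the choice $\varepsilon=\log(\gamma n)$ into the closed form of \Cref{thm:cr_stopping_ldp} and then letting $n\to\infty$. For each fixed $n \geq 1/\gamma$ we have $\gamma n \geq 1$, so $\varepsilon=\log(\gamma n)\geq 0$. The hypothesis $\varepsilon \geq 0$ of \Cref{thm:cr_stopping_ldp} is therefore satisfied, and the infimum over $\mathcal{C}_n$ is exactly
\[
\inf_{\mathbf{X} \in \mathcal{C}_n} \frac{\Wn(\mathbf{X})}{\Vn(\mathbf{X})} = \frac{e^{\varepsilon}}{n-1+e^{\varepsilon}} = \frac{\gamma n}{n-1+\gamma n}.
\]
The key point is that the theorem gives an equality for every $n$, not merely bounds. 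Hence no separate uniform-in-$n$ lower-bound or upper-bound argument is needed. In particular, we do not need to revisit the extremal instance $\mathbf{X}^q$ of \Cref{prop:ub-w/v} or the recursion behind \Cref{prop:lb-w/v}.

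Next, divide the numerator and denominator by $n$ to get
\[
\frac{\gamma}{1-\tfrac{1}{n}+\gamma}.
\]
As $n\to\infty$ this converges to $\gamma/(1+\gamma)$, which is strictly between $0$ and $1$ for every $\gamma>0$.

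The only point deserving care is that the privacy level $\varepsilon$ varies with $n$. The limit is taken along the sequence of worst-case values, each computed at its own $\varepsilon_n=\log(\gamma n)$. Since \Cref{thm:cr_stopping_ldp} holds for every pair $(n,\varepsilon)$ with $\varepsilon\geq 0$, this causes no difficulty. We could also remark on interpretation: the $1/n$ (full privacy) and $1$ (no privacy) regimes are separated by the scaling $e^{\varepsilon}\asymp n$. If $e^{\varepsilon}=o(n)$ the ratio tends to $0$, and if $e^{\varepsilon}/n\to\infty$ it tends to $1$, both by the same substitution.
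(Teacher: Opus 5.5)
Your proposal is correct and matches the paper's argument, which presents the statement as a direct corollary of \Cref{thm:cr_stopping_ldp}. You substitute $e^{\varepsilon}=\gamma n$ into the exact closed form (admissible since $n\geq 1/\gamma$ gives $\varepsilon\geq 0$) and let $n\to\infty$ in $\gamma n/(n-1+\gamma n)$, which tends to $\gamma/(\gamma+1)$.
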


For specific applications, it might be useful for the decision maker to specify a per-user privacy budget $\varepsilon_i$. We can obtain the same result for this more refined privacy analysis. The proof is presented in~\Cref{app:distinct privacy} and is essentially identical to that of~\Cref{thm:cr_stopping_ldp}. 

\begin{corollary}
\label{cor:distinct privacy}
For all $n \in \mathbb{N}$ and possibly distinct privacy levels $\boldsymbol{\varepsilon}=(\varepsilon_1,\dots,\varepsilon_n)\geq0$,
\begin{equation}
\inf_{\mathbf{X} \in \mathcal{C}_n} \frac{\Wn(\mathbf{X})}{\Vn(\mathbf{X})} = \frac{1}{1+\sum_{i \in [n-1]} e^{-\varepsilon_i}}.
\end{equation}
\end{corollary}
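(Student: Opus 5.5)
We prove \Cref{cor:distinct privacy} by mirroring the two halves of the proof of \Cref{thm:cr_stopping_ldp}, with $e^{-\varepsilon}$ replaced by $e^{-\varepsilon_i}$ at step $i$.

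\textbf{Structural results with a per-step budget.} First note that \Cref{prop:private dp}, \Cref{prop:two-points}, \Cref{prop:optimalchannel} and \Cref{th:optimal LDP stopping rule} never use that the budget is the same across steps. The backward induction of \Cref{prop:private dp} optimizes $Q_i$ over $\mathcal{Q}_{\varepsilon_i}$ separately at each stage. Hence $\Wn(\mathbf{X})=W_1(\mathbf{X})$ with
\[
W_i=\max\bigl(\EE[X_i],W_{i+1},\EE[\phi_{\varepsilon_i}(X_i,W_{i+1})]\bigr),
\]
where $\phi_{\varepsilon_i}$ is \Cref{eq:phi} with $\varepsilon$ replaced by $\varepsilon_i$.

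\textbf{Lower bound.} We reproduce the one-step inequality from the proof of \Cref{prop:lb-w/v} at step $i$ with budget $\varepsilon_i$:
\[
\frac{V_i}{W_i}\le \frac{V_{i+1}}{W_{i+1}}+e^{-\varepsilon_i}.
\]
The argument is purely local. It compares $\EE[\max(X_i,V_{i+1})]$ with $\EE[\phi_{\varepsilon_i}(X_i,W_{i+1})]$ and uses $W_{i+1}\le V_{i+1}$, so it transfers verbatim. At the last step, $V_n=\EE[X_n]=W_n$, so $V_n/W_n=1$ and $\varepsilon_n$ plays no role. Telescoping from $i=1$ to $n-1$ gives
\[
\frac{V_1}{W_1}\le 1+\sum_{i\in[n-1]}e^{-\varepsilon_i}.
\]

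\textbf{Upper bound.} We adapt the instance of \Cref{prop:ub-w/v}. Set $X_n=1$, and for $i\le n-1$ let $X_i=1+e^{-\varepsilon_i}\frac{1-q}{q}$ with probability $q$ and $0$ otherwise.

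\emph{Private value.} We show by backward induction that $W_i=1$ for every $i$. Given $W_{i+1}=1$, the threshold signal fires on $\{X_i\ge 1\}$, which is exactly the high event. Then
\[
\EE[\phi_{\varepsilon_i}(X_i,1)]=\frac{e^{\varepsilon_i}}{1+e^{\varepsilon_i}}\bigl(q\,x_{\mathrm{high}}+1-q\bigr)+\frac{1}{1+e^{\varepsilon_i}}\,q+\dots
\]
This direct computation should give exactly $1$: the extra term $e^{-\varepsilon_i}(1-q)$ is designed to balance the false positives. Also $\EE[X_i]=q+e^{-\varepsilon_i}(1-q)\le 1$. So $W_i=1$, and in particular $\Wn(\mathbf{X})=1$.

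\emph{Non-private value.} As $q\to0$, the high value exceeds every continuation value, so
\[
V_i=q\,x_{\mathrm{high}}+(1-q)V_{i+1}=q+e^{-\varepsilon_i}(1-q)+(1-q)V_{i+1}.
\]
Letting $q\to0$ gives $V_i\to e^{-\varepsilon_i}+V_{i+1}$, hence
\[
V_1\to 1+\sum_{i\in[n-1]}e^{-\varepsilon_i}.
\]
This matches the lower bound.

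\textbf{Main obstacle.} The delicate step is verifying that the one-step inequality in \Cref{prop:lb-w/v} really uses only the budget of the current stage; its full proof is in the appendix and not reproduced above. I would recheck it, in particular the case where $W_i$ is attained by $\EE[X_i]$ or $W_{i+1}$ rather than by $\phi$, with $\varepsilon_i$ in place of $\varepsilon$. The exact-equality check $W_i=1$ in the instance is a routine computation.
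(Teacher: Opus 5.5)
Your proposal is correct and matches the paper's proof essentially verbatim. For the lower bound you replace $p$ by $p_i=e^{\varepsilon_i}/(1+e^{\varepsilon_i})$ in the one-step inequality of \Cref{prop:lb-w/v} and telescope from $V_n/W_n=1$; for the upper bound you use the same instance with $e^{-\varepsilon_i}$, getting $W_i=1$ and $V_1\to 1+\sum_{i\in[n-1]}e^{-\varepsilon_i}$. Your direct computation of $\EE[\phi_{\varepsilon_i}(X_i,1)]$ needs no extra terms: the ``$\dots$'' is zero because $\EE[\min(X_i,1)]=q$, and $p_ie^{-\varepsilon_i}=1-p_i$ then gives exactly $q+(1-q)=1$, which is a cleaner check than the paper's own verification.
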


Remark that this competitive ratio does not depend on the last privacy budget $\varepsilon_n$: it is always optimal to stop at the last time step if reached, regardless of its value.

% \paragraph{Simple Stopping Rules} \textcolor{red}{Add section on simple stopping rules} One classical result of prophet inequalities, is that the competitive ratio remains $1/2$ even when restricting to single-threshold policy, that is to say stopping rules which correspond to $\tau=\min\{ i \in [n] : X_i \geq T\}$ for some threshold $T$. The next proposition shows that the same lower bound can be achieved for the private online agent versus the prophet using a single threshold. However, \ac{ldp}-channels are still distinct. (Could we have a negative result if we enforce same mechanism and single threshold?). \mathieu{on mettra la preuve en appendix je pense} \\

\subsection{Competitive Ratio Against Non-Private Prophet}

Now, we want to compare the value of the optimal \ac{ldp} stopping rule with that of the standard non-private prophet. 

\begin{proposition} \label{prop:wagainstm}
    For $n \geq 2$ and $\varepsilon \geq0$, we have 
    \begin{equation*}
       \frac1{n} \vee \frac{e^{\varepsilon}}{2e^{\varepsilon} - 2 + 2n} \leq \inf_{\mathbf{X} \in \mathcal{C}_n} \frac{\Wn(\mathbf{X})}{\Mn(\mathbf{X})}  \leq \frac{e^{\varepsilon}}{2e^{\varepsilon} - 2 + n}.
    \end{equation*}
\end{proposition}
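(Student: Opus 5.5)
The lower bound is the maximum of two separate bounds, each obtained from results already in the paper. The upper bound comes from an explicit instance that glues the worst case of \Cref{prop:ub-w/v} onto the classical two-variable prophet example.

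\textbf{Lower bound $1/n$.} By monotonicity of $\Wn$ in $\varepsilon$ we have $\Wn(\mathbf{X}) \geq \mathcal{W}_0(\mathbf{X}) = \max_{i\in[n]} \EE[X_i]$. The $\varepsilon=0$ policy simply stops deterministically at an index maximising $\EE[X_i]$. On the other side, nonnegativity gives $\Mn(\mathbf{X}) = \EE[\max_i X_i] \leq \sum_i \EE[X_i] \leq n \max_i \EE[X_i]$. Combining the two gives $\Wn/\Mn \geq 1/n$.

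\textbf{Lower bound $e^{\varepsilon}/(2e^{\varepsilon}-2+2n)$.} I would chain \Cref{prop:lb-w/v} with the classical prophet inequality \eqref{eq:class_proph}:
\[
\Wn(\mathbf{X}) \;\geq\; \frac{e^{\varepsilon}}{n-1+e^{\varepsilon}}\,\Vn(\mathbf{X}) \;\geq\; \frac{e^{\varepsilon}}{n-1+e^{\varepsilon}}\cdot\frac{1}{2}\,\Mn(\mathbf{X}).
\]
Taking the maximum of the two lower bounds gives the stated lower bound.

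\textbf{Upper bound.} The target can be rewritten as $\Mn/\Wn \to 2 + (n-2)e^{-\varepsilon}$. This suggests combining the two extremal instances: $n-2$ "rare" variables, each contributing about $e^{-\varepsilon}$ to the prophet but nothing usable by the private agent, followed by the classical pair contributing $2$. Concretely, I would take:
\begin{itemize}
\item for $i \in [n-2]$, $X_i = 1 + e^{-\varepsilon}\frac{1-q}{q}$ with probability $q$ and $0$ otherwise, exactly as in \Cref{prop:ub-w/v};
\item $X_{n-1} = 1$ almost surely;
\item $X_n = 1/\delta$ with probability $\delta$ and $0$ otherwise.
\end{itemize}

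First I compute $\Wn$ through the recursion of \Cref{th:optimal LDP stopping rule}. We have $W_n = \EE[X_n] = 1$. Since $X_{n-1}\equiv 1$, we get $\phi(1,1) = 1$, hence $W_{n-1} = 1$. From here on the threshold is $1$, exactly as in \Cref{prop:ub-w/v}, where the last variable is identically $1$. So I would reuse that computation: for each rare variable, $\EE[\phi(X_i,1)] \leq 1$. Indeed,
\[
\EE[\phi(X_i,1)] = \frac{q e^{\varepsilon}}{1+e^{\varepsilon}}\Big(1+e^{-\varepsilon}\tfrac{1-q}{q}\Big) + \frac{1-q}{1+e^{\varepsilon}},
\]
which equals exactly $1$ after simplification. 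Also $\EE[X_i] = q + e^{-\varepsilon}(1-q) \leq 1$. Hence $W_i = 1$ for all $i$, and $\Wn = 1$.

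Next I compute $\Mn$. As $q, \delta \to 0$, with high probability at most one of the rare variables is nonzero, and $X_n$ is independent of them. Summing the contributions gives
\[
\Mn \to 1 + 1 + (n-2)e^{-\varepsilon} = 2 + (n-2)e^{-\varepsilon}.
\]
Here the baseline $1$ comes from $X_{n-1}$, the extra $1$ from the event $X_n = 1/\delta$, and each rare variable contributes $\EE[X_i] - q \approx e^{-\varepsilon}$ above the baseline.

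\textbf{Main obstacle.} The one delicate step is making this $\Mn$ limit rigorous. I would use inclusion–exclusion on the at most $n-1$ "high" events, showing that the overlap terms are $O(q + \delta)$ because the high values scale as $1/q$ and $1/\delta$. Then $\Wn/\Mn \to e^{\varepsilon}/(2e^{\varepsilon}-2+n)$, which gives the upper bound on the infimum.
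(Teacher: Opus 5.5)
Your proposal is correct and follows essentially the same route as the paper. The $1/n$ bound comes from the $0$-LDP rule that stops at $\argmax_i \EE[X_i]$. The other lower bound chains \Cref{prop:lb-w/v} with the classical prophet inequality. The upper bound uses $n-2$ rare variables from \Cref{prop:ub-w/v}, a constant $1$, and a mean-one long shot.

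The paper places the constant first ($X_1 = 1$) instead of at position $n-1$. This is immaterial, since $\Mn$ is order-invariant and $\Wn = 1$ either way. The paper also takes the long-shot limit ($c \to 0$) before $q \to 0$, whereas your inclusion--exclusion bound handles the joint limit in $(q,\delta)$ directly. Both give the same value $2 + (n-2)e^{-\varepsilon}$.
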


\begin{proof}[Proof sketch (full proof in \Cref{app:wagainstm})]
First note that by using the $0$-\ac{ldp} deterministic stopping rule $\tau=\argmax_{i \in [n]} \mathbb{E}[X_i]$ achieves at least a $1/n$ competitive ratio against the non-private prophet.  Indeed, 
\begin{align*}
    \mathbb{E}[\max_{i \in [n]} X_i] &\leq \sum_{i \in [n]} \mathbb{E}[X_i] \\
    &\leq n\max_{i \in [n]}\mathbb{E}[X_i].
\end{align*}
The lower bound $\frac{e^{\varepsilon}}{2e^{\varepsilon} - 2 + 2n}$ is immediate from \Cref{thm:cr_stopping_ldp} and the classical prophet lower bound $\Vn(\mathbf{X})/\Mn(\mathbf{X}) \geq \frac12$, by writing out $\Wn(\mathbf{X})/\Mn(\mathbf{X})=(\Wn(\mathbf{X})/\Vn(\mathbf{X}))\cdot (\Vn(\mathbf{X})/\Mn(\mathbf{X}))$.

For the upper bound, we will use a similar counter-example as in \Cref{thm:cr_stopping_ldp}. Use $X_i \sim X^q$ for $i=2,\dots,n-1$, $X_n=1/c$ with probability $c \in [0,1]$ and $X_n=0$ with probability $1-c$, and $X_1=1$ a.s. Computations displayed in the appendix show that \[
\Wn(\mathbf{X}) = 1, \]
and that if $c = q^2$, it holds that 
\[\lim_{q \rightarrow 0} \Mn(\mathbf{X}) = 2 + \exp(-\varepsilon) (n-2).
\] \end{proof}

% While for arbitrary $\varepsilon$ the tight constant $\inf_{\mathbf{X} \in \mathcal{C}_n} \Wn(\mathbf{X})/\Mn(\mathbf{X})$ is unknown, for $\varepsilon=0$ the upper-bound of \Cref{prop:wagainstm} is tight as it matches the $1/n$ lower bound. This may suggest that the tight competitive ratio is in fact $e^{\varepsilon}/(2e^{\varepsilon} - 2 + n)$. For $\varepsilon = +\infty$, the upper bound and lower bound become $\frac12$, matching the classical result in non-private prophets.

The upper and lower bounds of \Cref{prop:wagainstm} match exactly in the asymptotic regimes in $\varepsilon$: (a) for $\varepsilon = 0$, they both reduce to $1/n$, while (b) for $\varepsilon = \infty$, they both reduce to $1/2$. For finite $\varepsilon > 0$, characterizing the tight $\inf_{\mathbf{X} \in \mathcal{C}_n} \Wn(\mathbf{X})/\Mn(\mathbf{X})$ is an interesting open problem. 

\subsection{Simpler Private Stopping Rules}

In the prophet inequality literature, special attention has been given to simple policies when comparing to the prophet benchmark, in particular, single-threshold policies which for some threshold $T \geq 0$ correspond to the stopping rule $\tau(T) \coloneqq \min\{ i \in [n] : X_i \geq T\}$ which accepts the first item with a value $X_i$ above the threshold $T$.  \citet{SamuelCahn1984} famously shows that the classical competitive ratio of $\Vn(\mathbf{X})/\Mn(\mathbf{X}) \geq 1/2$ can be achieved through the single threshold policy $T=\mathrm{median}(\max_{i \in [n]} X_i)$. Hence, this raises the question of whether similar simple policies can achieve a good competitive ratio in the private setting. The stopping rule of \Cref{th:optimal LDP stopping rule} used on reported messages can already be considered a single threshold, as we stop at the first $Z_i \geq 1$. However, to compute the $(Z_i)$, it is required to explicitly compute the backward induction thresholds. The next proposition proves that simpler mechanisms achieve the lower bound of \Cref{prop:wagainstm} by leveraging single threshold policies in the non-private regime.

\begin{proposition} 
\label{prop: private T / M}
Let $T=\mathrm{median}(\max_{i \in [n]} X_i)$ (or any single threshold rule that achieves a $1/2$ competitive ratio) and  consider the sequence of mechanisms $Q_i$ defined as, $Q_n = 1$, and for $i \leq n - 1$ and $x\geq 0$, \[
Q_i(\,\cdot \mid x) =\mathrm{Ber}\left( \frac{e^{\varepsilon}}{n-i+e^{\varepsilon}}\mathds{1}[x\geq T]+\frac{1}{n-i+e^{\varepsilon}}\mathds{1}[x< T] \right).
\]
Let $\tau_{\varepsilon}=\min \{ i\in[n] : Z_i=1\}$ where $Z_i \sim Q_i(\cdot | x)$. Then, $Q_i \in \mech_\varepsilon$ and
\begin{equation*}
\mathbb{E}[X_{\tau_{\varepsilon}}] \geq \frac{e^{\varepsilon}}{n-1+e^{\varepsilon}} \cdot \mathbb{E}[X_{\tau(T)}] \geq \frac{e^{\varepsilon}}{n-1+e^{\varepsilon}} \cdot \frac{1}{2} \cdot \mathbb{E}[\max_{i \in [n]} X_i].
\end{equation*}
\end{proposition}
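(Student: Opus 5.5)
The plan is to compare the two stopping rules term by term through their explicit stopping-probability expansions. A telescoping product in the noise rates will produce exactly the factor $e^{\varepsilon}/(n-1+e^{\varepsilon})$. Throughout, write $p_i=\PP(X_i\ge T)$, $A_i=\EE[X_i\ind{X_i\ge T}]$, $a_i=\frac{1}{n-i+e^{\varepsilon}}$ and $b_i=\frac{e^{\varepsilon}}{n-i+e^{\varepsilon}}$ for $i\le n-1$.

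\textbf{Step 1: privacy.} We check that $Q_i\in\mech_\varepsilon$. First, $0\le a_i\le b_i\le 1$ because $n-i\ge 1$. The ratio of the two probabilities of outputting $1$ is $b_i/a_i=e^{\varepsilon}$. For outputting $0$ we need $(1-a_i)/(1-b_i)=\frac{n-i-1+e^{\varepsilon}}{n-i}\le e^{\varepsilon}$, which is equivalent to $(n-i-1)\le e^{\varepsilon}(n-i-1)$ and hence holds. The constant mechanism $Q_n=1$ is trivially private. This is a binary mechanism, so these two outputs are all the events to check.

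\textbf{Step 2: value expansions.} By independence,
\[
\EE[X_{\tau(T)}]=\sum_{i=1}^n \prod_{j<i}(1-p_j)\,A_i .
\]
For the private rule,
\[
\EE[X_{\tau_\varepsilon}]=\sum_{i=1}^n \prod_{j<i}\PP(Z_j=0)\,\EE[X_i\ind{Z_i=1}].
\]
We lower bound each factor separately:
\begin{itemize}
\item For $i\le n-1$, dropping the contribution of values below $T$ gives $\EE[X_i\ind{Z_i=1}]\ge b_iA_i$.
\item For $i=n$, we have $\EE[X_n\ind{Z_n=1}]=\EE[X_n]\ge A_n$, i.e.\ we may set $b_n=1$.
\item For the survival probabilities, $\PP(Z_j=0)=1-a_j-(b_j-a_j)p_j$. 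This is at least $(1-a_j)(1-p_j)$, since the difference equals $p_j(1-b_j)\ge 0$.
\end{itemize}

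\textbf{Step 3: telescoping.} Combining Step 2, the $i$-th term of $\EE[X_{\tau_\varepsilon}]$ is at least $c_i\prod_{j<i}(1-p_j)A_i$, where $c_i=b_i\prod_{j<i}(1-a_j)$. Since $1-a_j=\frac{n-j-1+e^{\varepsilon}}{n-j+e^{\varepsilon}}$, the product telescopes to $\frac{n-i+e^{\varepsilon}}{n-1+e^{\varepsilon}}$. Therefore
\[
c_i=\frac{e^{\varepsilon}}{n-1+e^{\varepsilon}} \quad\text{for every } i,
\]
including $i=n$, where $b_n=1$ and the product equals $\frac{e^{\varepsilon}}{n-1+e^{\varepsilon}}$. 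Summing over $i$ gives the first inequality. The second inequality is the assumed $1/2$ guarantee of $\tau(T)$, which for the median threshold is \citet{SamuelCahn1984}.

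The main obstacle is Step 3: choosing the noise rates $a_i$ so that the survival loss $\prod_{j<i}(1-a_j)$ and the acceptance probability $b_i$ balance to a constant across all $i$. The given schedule is exactly the one making the product telescope, so once the expansion is set up the remaining work is routine.
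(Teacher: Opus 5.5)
Your proof is correct and matches the paper's argument: the same privacy check, the same two bounds, and the same telescoping identity $b_i\prod_{j<i}(1-a_j)=e^{\varepsilon}/(n-1+e^{\varepsilon})$ (the two bounds drop the $X_i<T$ contribution when accepting and the $X_j\ge T$ contribution when surviving). The only difference is cosmetic: the paper conditions on $X_{1:n}$ and bounds $\PP(\tau_\varepsilon=i\mid X_{1:n})\ge \frac{e^{\varepsilon}}{n-1+e^{\varepsilon}}\ind{\tau(T)=i}$ pointwise, whereas you factor the unconditional expectations using independence of the $X_i$, so the paper's version of the first inequality does not even need that independence.
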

\begin{proof}[Proof sketch (full proof in \Cref{app:private T / M})]
Let $\alpha_i=\frac{1}{n-i+e^{\varepsilon}}$ and $\beta_i=\frac{e^{\varepsilon}}{n-i+e^{\varepsilon}}$.
We have
\[
\EE[X_{\tau_\varepsilon}] = \sum_{i \in [n]} \EE[X_i \ind{\tau_\varepsilon = i}].
\]
Remark that $\ind{\tau_\varepsilon =i}$ is implied by the conjunction of $Z_1 =0 \wedge \dots \wedge Z_{i-1}=0 \wedge Z_i=1$ and $X_1 < T \wedge \dots \wedge X_{i-1} < T \wedge X_i \geq T$. It follows that
\[
\EE[\ind{\tau_\varepsilon = i} \mid X_1, \dots ,X_n] \geq \beta_i \ind{X_i \geq T} \prod_{j=1}^{i-1} (1 - \alpha_j) \ind{X_j < T}.
\]
The sequence $(\alpha_i)_{i \in [n]}$ and $(\beta_i)_{i \in [n]}$ are chosen such that 
$\beta_i \prod_{j=1}^{i-1} (1 - \alpha_j) = \frac{e^\varepsilon}{n-1 + e^\varepsilon}$,
so we get
$
\EE[X_{\tau_\varepsilon}] \geq \sum_{i=1}^n \frac{e^\varepsilon}{e^\varepsilon + n - 1} \EE[X_i \ind{\tau(T) = i}] = \frac{e^\varepsilon}{e^\varepsilon + n - 1} \EE[X_{\tau(T)}]$,
which gives the first inequality. The second inequality comes from the classical prophet inequality.
\end{proof}

The mechanism in \Cref{prop: private T / M} is computationally efficient, but time-varying. Whether good relative performance can be achieved with a single static mechanism across all time steps is left as an open problem.  Comparing the online \ac{ldp} decision maker with a non-private benchmark is informative in terms of utility privacy trade-offs.  However, \Cref{prop:ub-w/v} implies that 
\[
\inf_{n \in \NN, \mathbf{X} \in \mathcal{C}_n} W^{(n)}_\varepsilon(\mathbf{X}) / \Vn(\mathbf{X}) = 0,
\]
which shows that this ratio cannot be used to compare the performance of private online algorithms in the worst-case.  In the next section, we compare the private online decision maker with a private prophet, therefore characterizing the difficulty of online learning under privacy constraints.

\section{Competitive Ratio Against Private Prophet}
\label{sec:competitive ratio}
In this section, we now suppose that both the online decision maker and the prophet can only observe $\varepsilon$-\ac{ldp} messages from some channels. The main question we aim to answer is how the relative performance evolves with the privacy parameter $\varepsilon$.
%Usually, for almost all decision problems, utility tends to decrease as privacy increases. Here, because we are evaluating relative performance between two decision processes, we actually end up with the problem ``becoming easier'' as privacy increases, going against the naive intuition.
%\hr{The fact that the problem becomes trivial in the limit $\varepsilon \rightarrow 0$ is intuitive (both algorithms cannot do anything). What is not trivial is how the competitive ratio evolves with the privacy parameter.}

\subsection{Optimal Mechanisms for the Private Prophet}
For $\mathbf{X} \in \mathcal{C}_n$, we study the value $\Ln(\mathbf{X})$ reached by the optimal private prophet. The decision-making of the prophet consists of choosing $Q_{1:n} \in \mathcal{Q}_\epsilon^n$ that produce the signals $Z_{1:n}$ from $X_{1:n}$ and then choosing an index $s \in [n]$ based on $Z_{1:n}$.
\begin{comment}
Given $X_1 \sim F_1, \dots, X_n \sim F_n$, the decision-making process for the prophet now consists of: 
\begin{enumerate}
 \item Selecting $n$ \ac{ldp} mechanisms $Q_1,\dots ,Q_n \in \mathcal{Q}_{\varepsilon}$, generating $Z_i \mid X_i = x \sim Q_i( \cdot \mid x)$,
 \item Selecting a random index $s \in [n]$ that is $\sigma(Z_1,\dots,Z_n)$ measurable.
 \item The prophet then receives expected utility $\mathbb{E}[X_s]$.
\end{enumerate} 
\end{comment}
By Proposition $2.2$ of \citep{Assaf1998ASV}, the optimal index selection is $s \in  \argmax_{i \in [n]} \mathbb{E}[X_i \mid Z_i]$. Hence, the goal of the prophet is to find the best private channels $Q_1, \dots, Q_n$ that maximize
 \begin{equation}
 \label{eq:Lq}
L(Q_1,\dots,Q_n)\coloneqq \mathbb{E}\left[\max_{i \in [n]} \mathbb{E}[X_i \mid Z_i]\right].
\end{equation}
We define the optimal value of the private prophet as
 \begin{equation}
 \label{eq:Ln}
\Ln(\mathbf{X})\coloneqq\max_{(Q_1,\dots, Q_n) \in \mathcal{Q}_{\varepsilon}^{n}} L(Q_1,\dots,Q_n). 
\end{equation}

We show that under some assumptions, the maximum is reached for staircase mechanisms as defined in~\cite{kairouz2014extremal}. The following assumption is only used in this subsection.

\begin{assumption}[Finite support]
For any $i \in [n]$, $F_i$ is of finite support.
\end{assumption}
This assumption implies that there exists $m \in \NN$ such that $\mathcal{X} = \{v_1, \dots, v_m\}$ where $0 \leq v_1 \leq \dots \leq v_m$. Secondly, we only consider $\epsilon$-LDP mechanisms with a discrete output, i.e. $Q \in \mathcal{Q}_\varepsilon^\textrm{disc}$ where
\[
\mathcal{Q}_\varepsilon^\textrm{disc} \coloneqq \{Q \in \mathcal{Q}_\varepsilon : \textrm{Range}(Q) \subset \NN \}.
\]
These assumptions are made for simplicity and to fit the setting of \cite{kairouz2014extremal}. 

%For the prophet, the decision-making problem boils down to finding the jointly optimal vector of \ac{ldp} mechanisms that maximize a utility function, here the expectation of the maximum of the posterior means. \citet{kairouz2016extremal} provides a characterization of such mechanisms.

%Note that if the supports are distinct, this only multiplies the total support size by $n$, so for simplicity we keep the same support. \hr{It is unclear why you need this assumption. To me, it seems that you can approximate the distribution of all variables by a histogram and obtain an arbitrarily close approximation.}

First, we show in \Cref{lemma:sublinear} that each coordinate function of $L$ defined in \Cref{eq:Lq} can be decomposed as a sum of sub-linear functions. The proof is given in \Cref{app:sublinear}.
\begin{lemma}\label{lemma:sublinear}
For $i \in [n]$, and $Q_1,\dots,Q_{i-1},Q_{i+1},\dots,Q_n \in \mathcal{Q}_\varepsilon^\textrm{disc}$, the functional
\[
L_i: Q_i \in \mathcal{Q}_\varepsilon^\textrm{disc} \mapsto L(Q_1,\dots,Q_n)
\]
satisfies 
\[
L_i(Q_i) = \sum_{z \in \mathrm{Range}(Q_i)} \mu(Q_{i, z})
\]
where $Q_{i, z} = (Q_i(z | v_1), \dots, Q_i(z | v_m))$ and $\mu$ is sub-linear that is $\forall q, q' \in \RR^m$ and $\lambda \in [0,1]$,
\begin{align*}
\mu(\lambda q)&=\lambda \mu(q) && \text{(homogeneity)}\\
\mu(q + q')& \leq \mu(q)+\mu(q') && \text{(sub-additivity)}. 
\end{align*}
\end{lemma}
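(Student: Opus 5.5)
The plan is to condition on the messages of the other agents and write $L$ as an expectation over them. Fix $Q_j$ for $j \neq i$, and let $M_{-i} \coloneqq \max_{j \neq i} \EE[X_j \mid Z_j]$, with the convention $M_{-i} = 0$ if $n=1$. By independence, $M_{-i}$ is a random variable independent of $(X_i, Z_i)$; call its law $G$. It depends only on the fixed mechanisms $Q_{-i}$ and on $F_{-i}$, not on $Q_i$. Then
\[
L_i(Q_i) = \EE\left[ \max\left(\EE[X_i \mid Z_i], M_{-i}\right)\right] = \sum_{z \in \mathrm{Range}(Q_i)} \PP(Z_i = z)\, \EE_{w \sim G}\left[\max\left(\EE[X_i \mid Z_i = z], w\right)\right].
\]
Here I use that the range is discrete, so the expectation splits as a sum over $z$; terms with $\PP(Z_i=z)=0$ are dropped.

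The next step is to express each term through the column $q = Q_{i,z}$. Write $p_k = F_i(\{v_k\})$. Then $\PP(Z_i = z) = \sum_k p_k q_k$ and $\EE[X_i \mid Z_i = z] = \sum_k p_k v_k q_k / \sum_k p_k q_k$. Define, for $q \in \RR_+^m$,
\[
\mu(q) \coloneqq \EE_{w \sim G}\left[\max\left(\textstyle\sum_k p_k v_k q_k,\; w \sum_k p_k q_k\right)\right],
\]
which is the term above after multiplying through by $\PP(Z_i=z)$. Note that $\mu(q)=0$ when $\sum_k p_k q_k = 0$, consistent with dropping null messages. If one insists on $q \in \RR^m$ rather than the nonnegative orthant, the same formula still makes sense. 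With this definition, $L_i(Q_i) = \sum_z \mu(Q_{i,z})$.

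It remains to check sub-linearity, and for each fixed $w$ this is immediate. The map $q \mapsto \max(a^\top q, b_w^\top q)$, with $a = (p_k v_k)_k$ and $b_w = (w p_k)_k$, is a maximum of two linear forms. It is therefore positively homogeneous: $\max(\lambda a^\top q, \lambda b_w^\top q) = \lambda \max(a^\top q, b_w^\top q)$ for $\lambda \geq 0$. It is also sub-additive, since $\max(x+x', y+y') \leq \max(x,y) + \max(x',y')$. Both properties are preserved by taking the expectation over $w \sim G$.

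The only real obstacle is integrability, namely that $\mu$ is finite so that linearity of expectation applies. Since $\max(a^\top q, w\, b^\top q) \leq |a^\top q| + w\,|b^\top q|$ and $\EE[M_{-i}] \leq \sum_j \EE[X_j] < \infty$, $\mu$ is finite. The exchange of sum and expectation over the countable range is justified by nonnegativity and Tonelli's theorem.
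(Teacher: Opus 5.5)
Your proof is correct and follows essentially the paper's argument. Both proofs multiply each posterior mean by $\Pr(Z_i=z)$, so that every message's contribution becomes a maximum of linear forms in the column $Q_{i,z}$. They then sum or average over the other agents' messages, which preserves homogeneity and sub-additivity.

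The only difference is cosmetic. You integrate out the other agents through the law of $M_{-i}$, which gives a maximum of two linear forms for each $w$. The paper instead expands the sum over all tuples $(z_1,\dots,z_n)$, which gives a maximum of $n$ linear forms for each tuple.
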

Combining \Cref{lemma:sublinear} with Theorem $2$ of~\cite{kairouz2014extremal} implies that there exists optimal staircase mechanisms.

\begin{proposition}\label{prop:staircase}
For $\varepsilon \geq 0$, there exists \ac{ldp} mechanisms $Q_i^*$ which maximize $L(Q_1,\dots,Q_n)$ where 
\begin{enumerate}
    \item For all $i \in [n]$, the alphabet size $\vert \textrm{Range}(Q_i) \vert $ can always be taken smaller than $\vert \mathcal{X} \vert $ the original data alphabet size, with no loss of utility.
    \item The prophet can always only use staircase mechanisms with no loss of utility, that is to say, for $i \in [n]$ mechanisms $Q_i \in \mathcal{Q}_{\varepsilon}^\textrm{disc}$ which satisfy the following constraint:
    \begin{equation*}
        \left \vert \log\left( \frac{Q_i(z\mid x)}{Q_i(z \mid x')}  \right) \right \vert \in \{0,\varepsilon\}, \quad \text{for all $x,x' \in \mathcal{X}$ and $z \in \textrm{Range}(Q_i)$}.
    \end{equation*}
\end{enumerate}
\end{proposition}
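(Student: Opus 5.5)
The plan is coordinate-wise: fix all mechanisms but one, apply Theorem~2 of \cite{kairouz2014extremal} to that single coordinate, and use a compactness argument to pass from an optimizer of $L$ to one made of staircase mechanisms. First I would show that the maximum in \Cref{eq:Ln} over $(\mathcal{Q}_\varepsilon^\textrm{disc})^n$ is attained. By \Cref{lemma:sublinear}, for fixed $Q_{-i}$, $L_i(Q_i)=\sum_z \mu(Q_{i,z})$ with $\mu$ sub-linear, hence merging outputs never increases utility. Then, by Carathéodory-type arguments (as in \cite{kairouz2014extremal}), any discrete mechanism can be replaced by one with at most a bounded number of outputs without loss. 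Once alphabets are bounded, each $Q_i$ is a finite matrix in a compact polytope (the $\varepsilon$-LDP constraints are linear in the entries), and $L$ is continuous in these entries: posterior means are ratios, but $\mu$ is continuous, positively homogeneous, and vanishes at $0$. So a maximizer $(Q_1^*,\dots,Q_n^*)$ exists.

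Next I would improve the maximizer one coordinate at a time. Fix $i$ and the other mechanisms. By \Cref{lemma:sublinear}, $L_i$ is exactly a sum-of-sublinear utility of the form treated in Theorem~2 of \cite{kairouz2014extremal}. That theorem gives a staircase mechanism $\tilde Q_i$ with $|\textrm{Range}(\tilde Q_i)|\le|\mathcal{X}|$ such that $L_i(\tilde Q_i)\ge L_i(Q_i^*)$. Replacing $Q_i^*$ by $\tilde Q_i$ keeps optimality. Crucially, the other mechanisms are left untouched, so their staircase property is preserved. Doing this sequentially for $i=1,\dots,n$ yields an optimal tuple in which every mechanism is staircase and has alphabet size at most $|\mathcal{X}|$. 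This proves both items.

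The main obstacle is the existence of the maximizer together with the precise applicability of Kairouz et al.'s theorem. Their result is stated for utilities $\sum_z \mu(Q_z)$ and optimizes over all $\varepsilon$-LDP mechanisms with arbitrary finite output. I would check that $\mu$ here (the expectation of the max of the posterior mean of $X_i$ against the others' posterior maxima, written unnormalized as $\mathbb{E}_{-i}[\max(\sum_k v_k p_k q_k, M_{-i}\sum_k p_k q_k)]$) satisfies their hypotheses, including homogeneity for all $\lambda\ge0$ and not only $\lambda\in[0,1]$. If attainment is delicate, I would instead argue via a supremum. In that version, for any near-optimal tuple, the coordinate replacement gives a staircase tuple at least as good. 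The staircase tuples with at most $|\mathcal{X}|$ outputs form a finite union of compact sets, and continuity of $L$ on them then gives attainment there directly.
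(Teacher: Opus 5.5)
Your proposal is correct and follows the paper's proof. In both, you fix all mechanisms but one, use \Cref{lemma:sublinear} to write $L_i$ as a sum of sub-linear terms, apply Theorem~2 of Kairouz et al.\ to replace $Q_i^*$ by a staircase mechanism with at most $|\mathcal{X}|$ outputs without loss, and iterate over $i$. The only difference is your compactness argument for the existence of a maximizer (first bound the alphabets, then use that $L$ is continuous on a compact polytope), which the paper takes for granted when it writes ``consider the sequence of optimal mechanisms.''
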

\begin{proof}
Consider the sequence of optimal mechanisms $Q_1^*,\dots,Q^*_n$. Let $ i \in [n]$. At optimality, since the joint feasible set of the $(Q_i)_{i \in [n]}$ is the simple product of $\mathcal{Q}_{\varepsilon}^\textrm{disc}$, $Q_i^*$ must maximize the coordinate function $L_i: Q_i \mapsto L(Q_1,\dots,Q_n)$. This function is sub-linear by \Cref{lemma:sublinear}, and applying Theorem $2$ of \citet{kairouz2016extremal}, we can conclude that there exists at least another $\tilde{Q}_i$ which satisfies statement $1)$ and $2)$ of the proposition for this fixed $i$ and which achieves the same value. Iterating over all $i \in [n]$ concludes the proof.
\end{proof}

\begin{comment}
\hr{We should discuss about this next paragraph, if we lack space, let's remove it otherwise, we need to carry experiments.}
While the optimal mechanisms for the prophet are not completely unstructured, they nonetheless can be more complex than that of the online decision maker \Cref{prop:optimalchannel}: some simple numerical experiments \mathieu{[REF HERE]} clearly show that binary mechanisms are not optimal in general. 
% \end{proposition}
\mathieu{I think it might be good to add one or two examples no?}
% We define the functional 
\end{comment}

While \Cref{prop:staircase} gives necessary conditions satisfied by the optimal mechanism, it does not give an efficient algorithm to compute the value of the private prophet. We leave this question for future work. In the next sub-section, we give the ratio of the performance reached by optimal private online decision maker and that of the private prophet in the worst-case. One minor consequence of this result is that the value of the prophet can be approximated efficiently by the optimal private online decision-maker.

\subsection{Tight Competitive Ratio}

In this section, we compute the value of the following worst-case competitive ratio, 
\begin{equation*}
C_{\varepsilon, n} \coloneqq  \inf_{\mathbf{X} \in \mathcal{C}_n} \frac{\Wn(\mathbf{X})}{\Ln(\mathbf{X})}.
\end{equation*}
In \cite{Assaf1998ASV},  they show in Theorem $2.1$ that the optimal competitive ratio of the online agent relative to the prophet for partial observations $(X_i,Z_i)$ remains $1/2$. This immediately yields a lower bound of $C_{\varepsilon, n} \geq 1/2$ in the setting we study: it suffices for the online decision maker to select the same optimal mechanisms as the prophet, and then run \ac{bdp} on the partial observations. In that case, the prophet and the online decision maker have the same information structure, and Theorem $2.1$ of \cite{Assaf1998ASV} applies.

In our case, comparing the performance of the optimal private online decision maker and that of the optimal prophet is a different problem for two reasons. First, the online agent is allowed to use distinct privacy mechanisms, which will, at optimality, provide something at least weakly better than using the same mechanisms as the prophet. Second, not all information structures are admissible. If we suppose for a moment that both the online decision-maker and the prophet use the same mechanisms, then, as highlighted in \citet{Assaf1998ASV} this is equivalent to a standard prophet problem on $\mathbb{E}[X_i \mid Z_i]$. But not all distributions over $\mathbb{R}_+$ can be realized through the posterior mean obtained via an $\varepsilon$-\ac{ldp} mechanism. For instance for $\varepsilon <\infty$, there does not exist a mechanism $Q \in \mathcal{Q}_{\varepsilon}$ and a positive random variable $X$ such that $\mathbb{E}[X \mid Z]=0$ with probability $1-\delta$, and $\mathbb{E}[X \mid Z]=1/\delta$ with probability $\delta$, for $\delta>0$ sufficiently small which is the typical counter-example used to prove the $1/2$ upper bound. In other words, the space of admissible distributions is restricted to those that can be realized via some $\mathbb{E}[X_i \mid Z_i]$. As $\varepsilon$ decreases, the space of admissible distributions becomes more constrained, which suggests that $C_{\varepsilon, n}$ decreases with $\varepsilon$, an intuition that is shown to be true in the following \Cref{thm:w/l}.
\begin{theorem} \label{thm:w/l}
For $\varepsilon \geq 0$, and for all $n \in \mathbb{N}^* $, it holds that 
\begin{equation}
C_{\varepsilon, n}=\frac{1+e^{-\varepsilon}}{2} \coloneqq C_\varepsilon.
\end{equation}
\end{theorem}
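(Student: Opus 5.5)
The plan is to prove the two inequalities $C_{\varepsilon,n}\le C_\varepsilon$ and $C_{\varepsilon,n}\ge C_\varepsilon$ separately. For the upper bound an explicit two-variable instance suffices. For the lower bound, the online agent copies the prophet's mechanisms and we exploit a structural restriction that $\varepsilon$-LDP imposes on posterior means.

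\textbf{Upper bound (explicit instance).} Take $X_1=1$ a.s., $X_j=0$ for the middle indices when $n>2$, and $X_n=1/q$ with probability $q$, $0$ otherwise, so that $\EE[X_n]=1$. Every posterior mean is then a function of the mean and the mechanism only.
\begin{itemize}
\item \emph{Online side.} By \Cref{th:optimal LDP stopping rule}, at time $1$ the continuation value is at most $\max(\EE[X_n],\cdot)=1$ and $X_1$ is deterministic. Hence $\Wn(\mathbf{X})=1$.
\item \emph{Prophet side.} The prophet applies binary randomized response to $\ind{X_n>0}$. A direct computation gives, as $q\to0$,
\[
\EE[X_n\mid Z_n=1]\to e^{\varepsilon} \text{ with probability } \to \frac{1}{1+e^{\varepsilon}}, \qquad \EE[X_n\mid Z_n=0]\to e^{-\varepsilon}.
\]
Therefore
\[
\Ln(\mathbf{X})\ge \EE[\max(1,\EE[X_n\mid Z_n])]\to \frac{e^{\varepsilon}+e^{\varepsilon}}{1+e^{\varepsilon}}=\frac{2e^{\varepsilon}}{1+e^{\varepsilon}}.
\]
\end{itemize}
The ratio therefore tends to $(1+e^{\varepsilon})/(2e^{\varepsilon})=C_\varepsilon$, which gives $C_{\varepsilon,n}\le C_\varepsilon$ for every $n$.

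\textbf{Lower bound.} Fix mechanisms $Q_{1:n}$ that are optimal for the prophet. The online agent uses the same mechanisms, so by \Cref{eq:W_eps} we have $\Wn(\mathbf{X})\ge V_1(\mathbf{Y})$ and $\Ln(\mathbf{X})=\EE[\max_i Y_i]$, where $Y_i=\EE[X_i\mid Z_i]$. The key structural fact is that, writing $\mu_i=\EE[X_i]=\EE[Y_i]$,
\[
e^{-\varepsilon}\mu_i\le Y_i\le e^{\varepsilon}\mu_i \quad \text{a.s.}
\]
This holds because $\EE[X_i\mid Z_i=z]=\int x\,\frac{Q_i(dz\mid x)}{\int Q_i(dz\mid x')dF_i(x')}dF_i(x)$, and the LDP constraint forces the density ratio to lie in $[e^{-\varepsilon},e^{\varepsilon}]$. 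Hence it suffices to prove a prophet inequality with ratio $C_\varepsilon$ for independent nonnegative $Y_i$ satisfying $Y_i\ge e^{-\varepsilon}\EE[Y_i]$; only the lower half of the constraint should be needed.

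To prove this restricted prophet inequality, I would write $Y_i=e^{-\varepsilon}\mu_i+Y_i'$ with $Y_i'\ge0$ and $\EE[Y_i']=(1-e^{-\varepsilon})\mu_i$. I would then run a Samuel-Cahn / Kleinberg-Weinberg threshold argument in which the floor is exploited:
\begin{itemize}
\item On the event that the threshold rule stops at $i$, the agent collects $Y_i\ge T$.
\item The decomposition lets us bound $\EE[\max_i Y_i]\le T+\sum_i\EE[(Y_i-T)^+]$ more sharply, because a $Y_i$ that is large must carry its floor mass $e^{-\varepsilon}\mu_i$ into its mean.
\end{itemize}
Concretely, for a single variable with mean $\mu$ the constraint forces $\PP(Y\ge y)\le (1-e^{-\varepsilon})\mu/(y-e^{-\varepsilon}\mu)$, so the prophet's gain over any threshold is diminished. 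I expect the cleanest route to be a reduction step: for fixed means and a fixed threshold, the worst case is when each $Y_i$ is two-point, supported on $\{e^{-\varepsilon}\mu_i,\,y_i\}$, since extreme points of the constrained set of mean-$\mu_i$ laws are two-point. One would then verify that the ratio is minimized by the two-variable configuration of the upper-bound instance, namely a deterministic first value and an extremal second.

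\textbf{Main obstacle.} The main obstacle is this last step: turning the floor constraint into exactly the additive improvement $\tfrac{e^{-\varepsilon}}{2}\Ln$ over the classical $\tfrac12\Ln$ bound, uniformly in $n$. Simply combining $V_1(\mathbf{Y})\ge\tfrac12\EE[\max_i Y_i]$ with $V_1(\mathbf{Y})\ge\max_i\mu_i\ge e^{-\varepsilon}\Ln$ yields only the maximum of the two bounds, not their sum. I would therefore first attempt the backward-induction analogue of the proof of \Cref{prop:lb-w/v}: show by induction that
\[
V_i(\mathbf{Y})\ge C_\varepsilon\,\EE\Bigl[\max_{j\ge i}Y_j\Bigr]+(\text{a correction term involving } V_{i+1}),
\]
and use the two-point extremal laws to check the one-step inequality.
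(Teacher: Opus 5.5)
\textbf{Verdict: genuine gap in the lower bound.} Your upper bound is correct and matches the paper's extremal instance: a constant followed by a long shot, with the prophet running randomized response on $\{X_n>0\}$, so that $\Ln\to 2p$ where $p=e^{\varepsilon}/(1+e^{\varepsilon})$. Padding with zeros handles general $n$.

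The lower bound is not proved, and the reduction you propose aims at a false statement. A prophet inequality with ratio $C_\varepsilon$ fails for independent nonnegative $Y_i$ that satisfy only the floor $Y_i\ge e^{-\varepsilon}\EE[Y_i]$. Write $c=e^{-\varepsilon}$ and take:
\begin{itemize}
\item $Y_1=1$;
\item $Y_2=c$ with probability $1-\pi$, and $Y_2=(1-c(1-\pi))/\pi$ with probability $\pi$.
\end{itemize}
Then $\EE[Y_2]=1$, the floor holds, and $V_1(\mathbf{Y})=1$. Meanwhile $\EE[\max(Y_1,Y_2)]=1+(1-c)(1-\pi)\to 2-c$. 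The ratio therefore tends to $1/(2-c)$, which is strictly below $(1+c)/2$ for every $\varepsilon\in(0,\infty)$, since $(1+c)(2-c)=2+c(1-c)>2$.

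Your proposed extremals, supported on $\{e^{-\varepsilon}\mu_i,y_i\}$ with $y_i$ free, are exactly these shifted long shots. So the threshold and extreme-point analysis you sketch cannot yield $C_\varepsilon$. The additive $e^{-\varepsilon}/2$ improvement you flag as the main obstacle is simply unattainable from the floor alone.

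\textbf{The missing idea is the cap $Y_i\le e^{\varepsilon}\mu_i$, not the floor.} On $[e^{-\varepsilon}\mu,e^{\varepsilon}\mu]$ the extremal law is $\{e^{-\varepsilon}\mu,e^{\varepsilon}\mu\}$ with probabilities $\{p,1-p\}$. For $a\ge\mu$ this gives $\max\EE[\max(a,Y)]=\max(a,p(a+\mu))$, which is precisely the quantity driving the paper's proof through $\phi(x,y)\le p(x+y)$. Even with both constraints, you would still need to actually prove a constrained prophet inequality for the copied mechanisms, and your sketch stops before doing so.

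\textbf{How the paper proceeds instead.} It never copies the prophet's mechanisms. It runs a Hill-type reduction in which the online agent uses its own optimal mechanisms:
\begin{itemize}
\item \emph{Replace $X_1$ by a constant.} Set $\lambda=\Wn(X_{2:n})$. Both $\Wn$ and an upper bound on $\Ln$ shift by the same term $\max_{Q_1}\EE[(\EE[X_1\mid Z_1]-\lambda)_+]$, so this replacement can only lower the ratio.
\item \emph{Collapse the last two variables into a long shot.} The online value depends on $(X_{n-1},X_n)$ only through $\Wn(X_{n-1},X_n)$, so the pair is replaced by a long shot $L_\alpha$ with that mean. The paper shows that $\Ln$ with the original pair is at most $\max_{Q_{2:n-2}}\EE[\max(A,p(A+\Wn(X_{n-1},X_n)))]$. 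It also shows that $\Ln$ with $L_\alpha$ is at least this quantity minus $\alpha\Ln$.
\item \emph{Finish at $n=2$.} Iterating reduces to a constant $\lambda=\EE[X_2]$ followed by one variable. There $\Ln(\lambda,X_2)\le 2p\lambda$, which gives the ratio $1/(2p)=C_\varepsilon$.
\end{itemize}
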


\begin{proof}[Proof sketch (full proof in \Cref{app:w/l})]
We use a strategy similar to that of~\cite{hill1992survey}. 
We first prove \Cref{lem:reduc_w}, which shows that there exists a constant $\lambda$ such that replacing $X_1$ by $\lambda$ decreases the competitive ratio.
\begin{lemma}\label{lem:reduc_w}
For any independent random variables $X_1, \dots, X_n$ over $\RR_+$, let $\lambda = \Wn(X_{2:n})$. It holds that
\[
    \frac{\Wn(X_{1:n})}{\Ln(X_{1:n})} \geq \frac{\Wn(\lambda, X_{2:n})}{\Ln(\lambda, X_{2:n})}
\]
\end{lemma}
To prove \Cref{lem:reduc_w}, we show first that \[
\Wn(X_{1:n}) = \Wn(\lambda, X_{2:n}) + \max_{Q_1 \in \mathcal{Q}_\varepsilon} \EE[ \max(\EE[X_1 | Z_1] - \lambda, 0)].
\]
Then, we show that 
\[
\Ln(X_{1:n}) \leq \Ln(\lambda, X_{2:n}) + \max_{Q_1 \in \mathcal{Q}_\varepsilon} \EE[ \max(\EE[X_1 | Z_1] - \lambda, 0)].
\]
The second step is to prove \Cref{lem:long shot} which shows that replacing the last two variables by a long shot decreases the competitive ratio. 
\begin{lemma}
\label{lem:long shot}
For any random variables $X_1, \dots, X_n$ over $\RR_+$, let $\lambda = \Wn(X_{2:n})$.
For any $\alpha \in (0, \frac12)$, it holds that
\[
    \frac{\Wn(\lambda, X_{2:n})}{\Ln(\lambda, X_{2:n})} \geq \frac{\Wn(\lambda, X_{2:n-2}, L_\alpha)}{\Ln(\lambda, X_{2:n-2}, L_\alpha)}(1 - \alpha)
\]
where
\[
L_\alpha = \begin{cases}
    \frac{\Wn(X_{n-1}, X_n)}{\alpha} & \text{with probability $\alpha$} \\
    0 & \text{otherwise.}
\end{cases}
\]
\end{lemma}

The proof of this Lemma is in two parts.
In the first part, we show the lower bound 
\[
\Ln(\lambda, X_{2:n-2}, L_{\alpha}) \geq \max_{Q_{2:{n-2}}} \EE\bigg[\max\big(A, p (A +  \Wn(X_{n-1}, X_n)) \big) \bigg] - \alpha \Ln(\lambda, X_{2:n})
\]
where $A =\max_{i=2}^{n-2}\big(\lambda, \EE[X_i | Z_i]\big)$. To prove this inequality, the first and main argument is that the value of $L$ optimized over $Q_{n-1}$ is greater than the value obtained when $Q_{n-1}$ is set to a binary randomized response mechanism. The choice for $Q_{n-1}$ is motivated by the fact that $L_{\alpha}$ is binary. The resulting lower bound is then proven after steps that are detailed in \Cref{app:reduc_w}.

In the second part of the proof of \Cref{lem:long shot}, we establish that  
\[
     \Ln(\lambda, X_{2:n-2}, X_{n-1:n}) \leq \max_{Q_{2:n-2}} \EE[\max(A, p(A + \Wn(X_{n-1}, X_n))).
\]
The main technique is to combine Jensen's inequality and \Cref{prop:optimalchannel} multiple times to upper-bound the left-hand side of the above equation by a quantity that does not depend on $Q_{n-1}, Q_n$. The difficulty is that the optimal $Q_{n}$ should, in general, depend on $Q_{1:n-1}$ and no assumptions beyond finite expectation are made on $X_{n-1}, X_n$, so careful manipulations are required to keep the inequalities tight. The detailed proof is available in~\Cref{app:long shot}.

The worst-case instance is therefore with two variables, the first is a constant, and the second is a long shot. The proof is concluded by computing the competitive ratio on this instance.
\end{proof}

\Cref{thm:w/l} yields a tight closed-form expression of the competitive ratio $C_\varepsilon$,
which interpolates between the classical $1/2$ prophet guarantee when $\varepsilon\to\infty$ and the no-loss ratio $1$ when $\varepsilon=0$.
The key intuition is that LDP does not only harm the online decision maker: it also removes most of the prophet’s informational advantage.
Indeed, under LDP neither player ever observes the realizations $X_i$; both only see privatized reports, so even an offline agent cannot reliably identify the maximum ex post.
As $\varepsilon$ decreases, the reports become nearly uninformative and both agents are forced to rely mostly on priors, making their achievable values close. As $\varepsilon$ increases, the prophet gradually recovers an advantage, exactly quantified by $C_\varepsilon$.

% This theorem
% %is the main theorem of the paper. It 
% expresses $C_\varepsilon$ in closed form interpolating between the standard prophet inequality competitive ratio of $1/2$ when no privacy is required, $\varepsilon=\infty$, and the no loss ratio of $1$ when full privacy is required, $\varepsilon=0$. What this result means from a constrained optimization perspective is that adding privacy constraints to the online decision harms the online agent but reduces the expected utility relatively less quickly than that of the prophet. 

\begin{remark}
\label{remark:binary prophet}    
A consequence of \Cref{thm:w/l} is that the performance of $\Wn(\mathbf{X})$ relative to a private prophet that is \emph{restricted to binary mechanisms} is also given by $C_\varepsilon$ in the worst case. Specifically, let $\mathcal{B}_\varepsilon(\mathbf{X})$ denote the performance of an optimal \emph{binary} private prophet on instance $\mathbf{X}$. By \Cref{thm:w/l}, we have that 
$
\inf_{\mathbf{X} \in \mathcal{C}_n} \frac{\Wn(\mathbf{X})}{\mathcal{B}_\varepsilon(\mathbf{X})} \geq \inf_{\mathbf{X} \in \mathcal{C}_n} \frac{\Wn(\mathbf{X})}{\Ln(\mathbf{X})} = C_\varepsilon$.
On the other hand, since in the worst-case instance highlighted in \Cref{thm:w/l}, the private prophet uses a binary mechanism, it holds that 
\[
\inf_{\mathbf{X} \in \mathcal{C}_n} \frac{\Wn(\mathbf{X})}{\mathcal{B}_\varepsilon(\mathbf{X})} = C_\varepsilon.
\]
\end{remark}

\section{Conclusion and Future Work}
 
The study of online stopping under local differential privacy (LDP) addresses a fundamental tension in modern decision-making systems: balancing efficient online decisions with the protection of sensitive user data. Our main structural result, \Cref{th:optimal LDP stopping rule}, fully characterizes the optimal $\varepsilon$-LDP stopping strategy.
The optimal policy has a simple and interpretable \emph{privatize-then-decide} form: at each step $i$, the continuation value $W_{i+1}(\mathbf{X})$ plays the role of a threshold, the arriving agent releases a \emph{binary} report produced by randomized response applied to the event $\{X_i \ge W_{i+1}(\mathbf{X})\}$, and the decision maker stops at the first time the report indicates acceptance.
This yields an implementable algorithm with linear time complexity in the horizon $n$, and makes explicit how the privacy constraint compresses each arrival into a one-bit signal tailored to the stopping objective. A second contribution of this work is the sharp quantification of privacy-utility trade-offs via competitive ratios.
First, we show in \Cref{thm:cr_stopping_ldp} that the optimal $\varepsilon$-LDP online decision maker achieves the tight worst-case ratio
$
e^{\varepsilon}/(n-1+e^{\varepsilon})
$
compared to the best non-private online stopping rule, thereby characterizing the \emph{relative price of privacy}. 
Second, we compute in \Cref{thm:w/l} the worst-case competitive ratio against a privacy-preserving offline benchmark $C_\varepsilon \coloneqq (1+e^{-\varepsilon})/2$. This provides a meaningful target for private stopping algorithms: it isolates the residual advantage of offline adaptivity when both online and offline agents are constrained by the same $\epsilon$-LDP constraint.

% \Cref{th:optimal LDP stopping rule} provides an optimal online stopping algorithm based on a dynamically computed sequence of binary randomized response mechanisms combined with a simple decision rule. This result offers a solution to the problem of online stopping under local privacy that is practically relevant in settings where linear complexity in the number of variables~$n$ is acceptable.

% A key contribution of this work is the quantification of performance guarantees through competitive ratios.
% We first establish in~\Cref{thm:cr_stopping_ldp} that the worst-case competitive ratio of the $\varepsilon$-private optimal online decision-maker relative to the non-private optimal online decision-maker is
% $
% \frac{\exp(\varepsilon)}{\exp(\varepsilon)-1+n}.
% $
% This result quantifies, in the worst case, the fraction of utility lost as the privacy guarantee becomes stronger, thereby characterizing the relative price of privacy. The second major result of this study is the computation of the worst-case competitive ratio $C_\varepsilon$ of the $\varepsilon$-private optimal online decision-maker relative to the $\varepsilon$-private prophet. In~\Cref{thm:w/l}, we show that
% $
% C_\varepsilon = \frac{1+\exp(-\varepsilon)}{2}.
% $
% The competitive ratio relative to the private prophet thus provides a meaningful benchmark for comparing private optimal stopping algorithms.

Several open questions remain.
A first direction is to understand whether one can attain the private-prophet benchmark $C_\varepsilon$ with algorithms of substantially simpler form or reduced dependence on $n$, for instance, policies based on a fixed (or time-homogeneous) threshold or other constant-complexity algorithms. A second direction is to close the remaining gap in the characterization of the worst-case ratio relative to the \emph{non-private} prophet $\Mn$:
\Cref{prop:wagainstm} provides upper and lower bounds, and \Cref{prop: private T / M} gives an efficient threshold-based policy matching the lower bound, but the bounds do not coincide for all $(n,\varepsilon)$.
A third direction concerns the offline private benchmark itself: while \Cref{prop:staircase} shows that optimal staircase mechanisms exist, computing an optimal private prophet efficiently remains open; the binary-private-prophet variant introduced in \Cref{remark:binary prophet} may offer a more tractable surrogate without losing worst-case guarantees.
Finally, our results make precise that LDP can impose a substantial utility cost in the worst case, especially when $n$ is large.
It would be valuable to characterize analogous trade-offs under alternative privacy notions (e.g., $(\epsilon, \delta)$-DP~\cite{dwork2014algorithmic}, $\rho$-zCDP~\citep{ZeroDP}) and different trust models (e.g. central DP, shuffle DP~\citep{cheu2021differential}).

% Several open questions remain for future research. A central question arising from \Cref{th:optimal LDP stopping rule} and \Cref{thm:w/l} is whether there exists a locally private algorithm with complexity independent of~$n$—for example, one based on a fixed threshold—that achieves the competitive ratio~$C_\varepsilon$ relative to the private prophet. Another important direction is the characterization of the competitive ratio relative to the non-private prophet. \Cref{prop:wagainstm} provides upper and lower bounds, and \Cref{prop: private T / M} presents an efficient algorithm matching the lower bound; however, these bounds do not coincide for all values of~$\varepsilon$. A further challenge highlighted by this work is the computation of the optimal private prophet. \Cref{prop:staircase} shows that optimal staircase mechanisms exist, but whether they can be computed efficiently remains an open problem. The binary prophet introduced in \Cref{remark:binary prophet} may offer a more tractable alternative. Finally, while local differential privacy provides extremely strong data protection guarantees, \Cref{thm:cr_stopping_ldp} demonstrates that it comes at a high utility cost. Other privacy notions offer different privacy-utility trade-offs, which should be characterized precisely to understand when, how, and at what cost privacy can be protected.

% In the interest of anonymization, please do not include acknowledgements in your submission.
%
\begin{acks}
Achraf Azize and Vianney Perchet's research was supported in part by the French National Research Agency (ANR) in the framework of the PEPR IA FOUNDRY project (ANR-23-PEIA-0003) and through the grant DOOM ANR-23-CE23-0002. It was also funded by the European Union (ERC, Ocean, 101071601). Mathieu Molina has been funded by the European Research Council (ERC) under the European Union's Horizon Europe Program (FACT, grant agreement No.~101170373).
\end{acks}

% Bibliography
\bibliographystyle{ACM-Reference-Format}
\bibliography{bib}

\newpage 
% Appendix
\appendix
% \section{Supplementary materials}
\section{Missing proofs of \Cref{sec:optimalbackward}}
\subsection{Proof of \Cref{prop:two-points}}
\label{app:two-point}
\begin{proof}
Let $F$ be the distribution of $X$ and $Q$ be an $\varepsilon$-\ac{ldp} mechanism. The joint distribution of $(Z, X)$ is $\pi = Q \otimes F$.

We are going to construct $\tilde{\pi}$, such that $\pi_X=\tilde{\pi}_X = F$ i.e.  the marginal law over $X$ is the same, the conditional law $\tilde{\pi}_{Z \mid X}$ is $\varepsilon$-\ac{ldp}, and such that the support of $Z$ is only two points.\\

Let $\mathcal{Z}^+=\{ z \in \mathcal{Z} : \mathbb{E}[X \mid Z=z] \geq w\}$ and $\mathcal{Z}^-=\{ z \in  \mathcal{Z} : \mathbb{E}[X \mid Z=w] < w\}$ be the set of values of $Z$ where the posterior mean is respectively higher and lower than $w$. We consider the new modified message $\tilde{Z}=\mathds{1}[z \in \mathcal{Z}^+] \in \{0,1\}$, and let us denote by $\tilde{\pi}$ the induced law on $(X,\tilde{Z})$. We immediately have that the support is binary, and by the post-processing property, because $Z$ is $\varepsilon$-\ac{ldp} and $\tilde{Z}$ is a function of $Z$, that $\tilde{Z}$ is also $\varepsilon$-\ac{ldp}. It remains to show that modifying the message did not affect the value of $J_w$, so that $J_w(\pi)=J_w(\tilde{\pi})$.

First, we compute the value of $\pi$,
\begin{align*}
    J_w(\pi)&=\int_{\mathcal{Z}} \max(\mathbb{E}[X \mid  Z=z],w) d\pi_Z(z) \\
    &=\int_{\mathcal{Z^+}} \max(\mathbb{E}[X \mid  Z=z],w) d\pi_Z(z)+\int_{\mathcal{Z^-}} \max(\mathbb{E}[X \mid  Z=z],w) d\pi_Z(z)\\
    &=\int_{\mathcal{Z^+}} \mathbb{E}[X \mid  Z=z] d\pi_Z(z)+\int_{\mathcal{Z^-}} w d\pi_Z(z)\\ 
    &= \int_{\mathcal{Z}^+} \int_{\mathbb{R}_+} x  d\pi_{X \mid Z}(x) d\pi_Z(z) + w \Pr(Z \in \mathcal{Z}^-)\\
    &=\int_{\mathbb{R}_+} \int_{\mathcal{Z}^+}  x d \pi(x,z)+ w \Pr(Z \in \mathcal{Z}^-),
\end{align*}
where we apply Fubini-Tonelli for the last equality.

Then, the value of $\tilde{\pi}$ is, 
\begin{align*}
J_w(\tilde{\pi})&=\max(\mathbb{E}[X \mid \tilde{Z}=1],w) \Pr(\tilde{Z}=1) +\max(\mathbb{E}[X \mid \tilde{Z}=0],w)\Pr(\tilde{Z}=0) \\
&=\max(\mathbb{E}[X \mid z \in  \mathcal{Z}^+],w)\Pr(z \in  \mathcal{Z}^+) + \max(\mathbb{E}[X \mid z \in  \mathcal{Z}^-],w)\Pr(z \in  \mathcal{Z}^-) \\
&= \mathbb{E}[X \mid z \in  \mathcal{Z}^+] \Pr(z \in  \mathcal{Z}^+)+ w \Pr(z \in  \mathcal{Z}^-)\\
&= \frac{\mathbb{E}[X \mathds{1}[z \in  \mathcal{Z}^+]]}{\Pr(z \in  \mathcal{Z}^+)}\Pr(z \in  \mathcal{Z}^+)+w \Pr(z \in  \mathcal{Z}^-)\\
&= \int_{\mathbb{R}^+} \int_{\mathcal{Z}^+} x d\pi(x,z)+w\Pr(z \in  \mathcal{Z}^-) = J_w(\pi).
\end{align*}
which concludes the proof.
\end{proof}

\subsection{Proof of \Cref{prop:optimalchannel}}
\label{app:optimalchannel}

\begin{proof} By \Cref{prop:two-points}, we can restrict ourselves to two-point mechanisms, which are entirely described by the conditional law $\varphi(x)=\Pr(Z=1 \mid X=x)$ since $\Pr(Z=0 \mid X=x)=1-\varphi(x) $.\\

We do a case disjunction on domains of $\pi$, depending on how its posterior means relate to $w$. 
Suppose that $\pi$ is such that $\mathbb{E}[X \mid Z=1] \leq w$ and $\mathbb{E}[X \mid Z=0] \leq w$. Then $J_{w}=w$. Suppose now that $\mathbb{E}[X \mid Z=1] \geq w$ and $\mathbb{E}[X \mid Z=0] \geq w$. Then $J_{w}=\mathbb{E}[X]$.\\

Suppose now that $\mathbb{E}[X \mid Z=1] \geq w \geq \mathbb{E}[X \mid Z=0]$ (without loss of generality, we can flip the labels of $Z$). Denoting $F$ the law of $X$,

% For any joint distribution $\pi$, we have 
% \begin{equation*}
% J_{w}(\pi)=\mathbb{E}_{(X,Z) \sim \pi}[ \max(\mathbb{E}[X \mid Z],w)] \geq \max( \mathbb{E}[ \mathbb{E}[X \mid Z]],\mathbb{E}[w])=\max(\mathbb{E}[X],w).
% \end{equation*}
% % Suppose that $\pi$ is such that $\mathbb{E}[X \mid Z=1] \leq w$ and $\mathbb{E}[X \mid Z=0] \leq w$. 
% Then $J_{w}(\pi)=w \leq \max(\mathbb{E}[X],w)\leq J_{w}$, and is thus weakly sub-optimal. Similarly, suppose that both conditional expectations are greater than $w$. Then by the tower property of the conditional expectation, $J_{w}(\pi)=\mathbb{E}[\mathbb{E}[X \mid Z]]=\mathbb{E}[X] \leq \max(\mathbb{E}[X],w)\leq J_{w}$ so $\pi$ is also weakly suboptimal.\\ 

% Suppose now that we consider all $\pi$ such that $\mathbb{E}[X \mid Z=1] \geq w \geq \mathbb{E}[X \mid Z=0]$, or with labels of $Z$ flipped (they may not exist). 

% Therefore (without loss of generality by a simple relabelling), we focus on joint law $\pi$ such that $\mathbb{E}[X \mid Z=1] \geq w \geq \mathbb{E}[X \mid Z=0]$, assuming that at least one mechanism can induce such a $Z$. The case where no such mechanism exists is treated at the end.\\ 

\begin{align*}
J_{w}(\pi)&=\Pr(Z=1) \mathbb{E}[X \mid Z=1] +  w \cdot \Pr(Z=0)  \\
&= \left( \int_{\mathbb{R}_+} \varphi(x) dF(x) \right) \left(\int_{\mathbb{R}_+} x \frac{\varphi(x)}{\int_{\mathbb{R}_+} \varphi(x) dF(x)} dF(x) \right) + w \int_{\mathbb{R}_+} (1-\varphi(x)) dF(x)\\
&= w \int_{\mathbb{R}_+} dF(x) + \int_{\mathbb{R}_+} (x-w) \varphi(x) dF(x) \\
&=w + \int_{\mathbb{R}_+} (x-w) \varphi(x) dF(x).
\end{align*}
This is a linear objective in $\varphi$. We now drop the posterior mean condition, and simply maximize this linear program in $\varphi$. This yields a weakly larger objective than with the posterior mean constraint.

If the final optimal $\pi^*$ violates the posterior mean constraint, then either $J_{w}(\pi^*)=\mathbb{E}[X]$ or $J_{w}(\pi^*)=w$, which is weakly smaller than the values obtained in the previous cases. 
\begin{comment}
\hr{We should write this explicitely. Calling $J(\varphi) = w + \int_{\mathbb{R}_+} (x-w) \varphi(x) dF(x)$ we remark that (1) if $\varphi$ is such that $\EE[X | Z=0] \geq w$ and $\EE[X | Z=1] \geq w$, we have $J(\varphi) \leq \EE[X]$ and (2) of $\varphi$ is such that $\EE[X | Z=0] \leq w$ and $\EE[X | Z=1] \leq w$ then $J(\varphi) \leq w$. Therefore, denoting $\mathcal{M}$ the set of $\phi$ that satisfy $\varepsilon-LDP$ constraints, it holds that
\[
\max(\max_{\varphi \in \mathcal{M}} J(\varphi), \EE[X], w) = \max(\max_{\varphi \in \mathcal{M}: \EE[X | Z =1] \geq w \geq \EE[X | Z = 0]} J(\varphi), \EE[X], w).
\]
So discarding the constraint $\EE[X | Z =1] \geq w \geq \EE[X | Z = 0]$ when maximizing $J$ does not impact the original objective.
}
\end{comment}
Hence, using the value obtained by maximizing this linear program can never hurt the original objective.

% For any joint distribution $\pi$, we have 
% \begin{equation*}
% J_{w}(\pi)=\mathbb{E}_{(X,Z) \sim \pi}[ \max(\mathbb{E}[X \mid Z],w)] \geq \max( \mathbb{E}[ \mathbb{E}[X \mid Z]],\mathbb{E}[w])=\max(\mathbb{E}[X],w).
% \end{equation*}
% So, if the $\pi^*$ that maximizes the above linear program does not satisfy the posterior mean condition, then either $\mathbb{E}[X \mid Z=1]$ and $\mathbb{E}[X \mid Z=0]$ are both lower than $w$, then $J_{w}(\pi^*)=w$
% We will see that the resulting optimal $\varphi$ either satisfies the posterior mean condition, or it does not but simply does not contribute to the maximum.

The $\varepsilon$-\ac{ldp} constraints on $Q$ are equivalent to the following constraints on $\varphi$:
\begin{equation*}
\varphi(x)\leq e^{\varepsilon} \varphi(x') \quad \text{and} \quad (1-\varphi(x))\leq e^{\varepsilon} (1-\varphi(x')), \forall x,x'\in \mathbb{R}_+.
\end{equation*}
For $a= \inf \varphi$ and $b= \sup \varphi$, the privacy constraints are equivalent to the constraints $0\leq a \leq \varphi(x) \leq b \leq 1$, $ (1-a) \leq e^{\varepsilon}(1-b)$ and $b \leq e^{\varepsilon} a$. % \hr{show this}. 
Given that the objective is linear, this immediately implies for the optimal $\varphi$ that $\varphi(x)=a$ for $x< w$ and $\varphi(x)=b$ for $x>w$ (for $x=w$ this is indifferent). \\

The system of inequalities in $a$ and $b$ has $3$ extreme points, either $(a,b)=(0,0)$, $(a,b)=(1,1)$, or the intermediary extreme point
\begin{equation*}
a=\inf \varphi= \frac{1}{1+e^{\varepsilon}}, \quad b=\sup \varphi= \frac{e^{\varepsilon}}{1+e^{\varepsilon}}.
\end{equation*}
So the maximum is reached for one of these three points, where we remark that choosing an extreme point with either $a=b=0$ or $a=b=1$ recovers the non-informative mechanisms $\mathbb{E}[X \mid Z]=\mathbb{E}[X]$, yielding a value of $J_{w}(\pi)=\max(\mathbb{E}[X],w)$ obtained previously. The intermediary mechanism corresponds to the binary randomized response  $Z \sim \mathrm{Ber}(\frac{e^{\varepsilon}}{1+e^{\varepsilon}}\mathds{1}[X \geq w]+ \frac{1}{1+e^{\varepsilon}} \mathds{1}[X<w])$.\\

From there, letting $p=e^{\varepsilon}/(1+e^{\varepsilon})$,  assuming that $\mathbb{E}[X \mid Z=1] \geq w \geq \mathbb{E}[X \mid Z=0]$ 
%\hr{we don't need this assumption since we already said that we can maximize $J(\phi)$ without constraints}
(once again, this cannot hurt the objective), using $(x-w)^+=\max(x,w)-w$ and $(w-x)^+=w-\min(x,w)$, 
\begin{align*}
J_{w}(\pi^*)&=w+\mathbb{E}[ p (X-w)  \mathds{1}[X \geq w] + (1-p)(X-w) \mathds{1}[X<w]]\\
&=w+ p \mathbb{E}[(X-w)^+]-(1-p)\mathbb{E}[(w-X)^+]\\
&=w+p\mathbb{E}[\max(X,w)]-pw -( (1-p) (w - \mathbb{E}[\min(X,w)]))\\
&=p \mathbb{E}[\max(X,w)] + (1-p) \mathbb{E}[\min(X,w)]\\
&=\mathbb{E}[\phi(X,w)].
\end{align*}

Overall, this binary randomized response mechanism is optimal if and only if (up to ties) $\mathbb{E}[\phi(X,w)] \geq \max(\mathbb{E}[X],w)$. Through backward programming for partial observations \citep{Assaf1998ASV}, we select $X$ whenever $\mathbb{E}[X\mid Z] \geq w$ which happens if and only if $Z=1$. Otherwise, the indifferent mechanism, i.e. a constant message, is optimal. For convenience, we can simply choose a constant message $Z=1$ when $\mathbb{E}[X]>w$, so that he message $Z$ being equal to $1$ is equivalent to accepting $X$ in all situations.
\end{proof}

\subsection{Proof of \Cref{prop:interactive}}
\label{app:interactive}
\begin{proof}

Fix $(Q_i)_{i \in [n]}$ a set of interactive privacy mechanisms and $(Z_i)_{i \in [n]}$ the associated random variables.  Call $Y_i$ the expected value given by by the optimal decision rule given $Z_{1:i}$ for any $i \in [n+1]$. We have $Y_{n+1} = 0$. For $i \in [n]$, either the optimal decision is to stop at time $i$ in which case $Y_i = \EE[X_i \mid Z_{1:i}]$ otherwise, $Y_i = \EE[ Y_{i+1} \mid Z_{1:i}]$.
Therefore $(N_i= \EE[Y_i \mid Z_{1:i-1}])_{i \ in [n]}$ can be computed recursively from the recursion
$N_{n+1} = 0$ and for $1 \leq k \leq n$, 
\[
N_{k} =\EE[\max(\EE[X_k | Z_1, \dots, Z_k], N_{k+1}) | Z_1, \dots, Z_{k-1}].
\]
The expected value attained by the optimal stopping rule is given by $N_1$.
Then, consider $(W_k)_{k \in [n]}$ as defined in \Cref{th:optimal LDP stopping rule}. We show by induction that $N_k \leq W_k$ for any $1 \leq k \leq n+1$.
First, $N_{n+1} = W_{n+1} = 0$. Then for $1 \leq k \leq n$, assume $N_{k+1} \leq W_{k+1}$ holds, and denote $\EE_{k-1}[ \cdot ] = \EE[ \cdot \mid Z_1, \dots, Z_{k-1}]$.
We can write
\begin{align*}
N_k &= \EE_{k-1}[\max(\EE_{k-1}[X_k | Z_k], N_{k+1})] \\
&\leq \EE_{k-1}[\max(\EE_{k-1}[X_k | Z_k], W_{k+1})] && \text{(by recursion hypothesis)} \\
&= \EE_{k-1}[\max(\EE[X_k | Z_k], W_{k+1})] && \text{($X_k$ and $Z_{1:k-1}$ independent)} \\
&\leq \max(\EE_{k-1}[X_k], \EE_{k-1}[\phi(X_k, W_{k+1})], W_{k+1}) && \text{(By \Cref{prop:optimalchannel})} \\
&= \max(\EE[X_k], \EE[\phi(X_k, W_{k+1})], W_{k+1}) && \text{(since $X_k$ and $Z_1, \dots, Z_{k-1}$ independent)} \\
&= W_k
\end{align*}
which concludes the proof since $W_1$ is reached by the optimal non-interactive mechanism.

\end{proof}

\section{Missing proofs of \Cref{sec:competitive}}
\subsection{Proof of \Cref{prop:lb-w/v}}
\label{app:lb-w/v}
\begin{proof}

We begin the proof by showing a technical Lemma.
\begin{lemma}\label{lem:pointwise_bound}
Let $p\in(0,1)$ and $r\in[0,1]$. For
\begin{equation*}
\alpha \coloneqq \frac{p\,r}{\,p+(1-p)r\,},
\qquad
\lambda \coloneqq 1-\frac{\alpha}{p}
= 1-\frac{r}{\,p+(1-p)r\,},
\end{equation*}
we have for all $y\ge 0$ the following inequality,
\begin{equation}\label{eq:pm-ineq}
\lambda\,r \;+\; (1-\lambda)\Bigl(p\max\{y,r\}+(1-p)\min\{y,r\}\Bigr)
\;\ge\; \alpha\,\max\{y,1\}.
\end{equation}
\end{lemma}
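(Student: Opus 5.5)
The plan is to treat both sides of \eqref{eq:pm-ineq} as piecewise-linear functions of $y\ge 0$ and check the inequality on each linear piece. I would start by simplifying the constants. Write $D \coloneqq p+(1-p)r$, which satisfies $D\ge p>0$ and $D\ge r$. Then
\[
1-\lambda=\frac{r}{D},\qquad \lambda=\frac{D-r}{D}=\frac{p(1-r)}{D},\qquad \alpha=\frac{pr}{D}=p(1-\lambda).
\]
In particular $\lambda\in[0,1]$. The fact I expect to use repeatedly is $\alpha\le r$, which is equivalent to $p\le D$. The breakpoints of the two sides are $y=r$ on the left and $y=1$ on the right, and $r\le 1$. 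This gives three regions: $[0,r]$, $[r,1]$ and $[1,\infty)$.

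\emph{Region $y\in[0,r]$.} Here $\max\{y,1\}=1$, so the right-hand side equals $\alpha$. The left-hand side equals $\lambda r+(1-\lambda)(pr+(1-p)y)$, which is non-decreasing in $y$. It therefore suffices to check $y=0$. There the left-hand side is
\[
\frac{p(1-r)r}{D}+\frac{r}{D}\,pr=\frac{pr}{D}=\alpha,
\]
so the inequality holds, with equality at $y=0$. This is the tight point and the reason for this particular choice of $\alpha$ and $\lambda$.

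\emph{Region $y\in[r,1]$.} The right-hand side is still $\alpha$. The left-hand side is $\lambda r+(1-\lambda)(py+(1-p)r)$, which is again non-decreasing in $y$. At $y=r$ it equals $r$, and $r\ge\alpha$, so the inequality holds on the whole region.

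\emph{Region $y\ge 1$.} Both sides are affine in $y$. The left-hand side has slope $(1-\lambda)p=\alpha$, which matches the slope of the right-hand side $\alpha y$. It therefore suffices to compare the two sides at $y=1$. There the left-hand side is
\[
\lambda r+\frac{r}{D}\bigl(p+(1-p)r\bigr)=(1+\lambda)r,
\]
while the right-hand side is $\alpha$. Since $(1+\lambda)r\ge r\ge\alpha$, the inequality holds for all $y\ge 1$.

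None of these steps should be a real obstacle. The only points needing care are the algebraic identities $1-\lambda=r/D$ and $\alpha=p(1-\lambda)$, which make the slopes match for $y\ge1$ and give exact equality at $y=0$. The degenerate case $r=0$ gives $\alpha=0$ and $\lambda=1$, and it is covered by the same computation.
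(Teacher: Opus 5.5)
Your proposal is correct and follows essentially the same route as the paper. Both use the split into $[0,r]$, $[r,1]$, $[1,\infty)$, monotonicity on the first two pieces (equality $\alpha$ at $y=0$, value $r\ge\alpha$ at $y=r$), and the slope identity $(1-\lambda)p=\alpha$ on the last piece; the only cosmetic difference is that for $y\ge 1$ you compare at $y=1$ instead of writing the left side directly as $\alpha y + r\bigl(\lambda+(1-\lambda)(1-p)\bigr)\ge \alpha y$, which is equivalent.
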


\begin{proof}[Proof of \Cref{lem:pointwise_bound}]
Note that $0\le \alpha\le r$ (since the denominator in $\alpha$ is at least $p$), and $\alpha \leq p$ because $r\leq 1$, and thus $0\leq\lambda\leq1$.
Define
\begin{equation*}
 L(y)\coloneqq \lambda r + (1-\lambda)\bigl(p\max\{y,r\}+(1-p)\min\{y,r\}\bigr),
\qquad
R(y)\coloneqq \alpha\,\max\{y,1\}.   
\end{equation*}
We will show $L(y)\ge R(y)$ case by case.

\smallskip\noindent
\emph{Case 1: $0\le y\le r\le 1$.}
Here $\max\{y,r\}=r$, $\min\{y,r\}=y$, and $\max\{y,1\}=1$, so
\begin{equation*}
L(y)=\lambda r+(1-\lambda)\bigl(pr+(1-p)y\bigr)
= r\bigl(\lambda+p(1-\lambda)\bigr) + (1-\lambda)(1-p)\,y.
\end{equation*}
This is affine and nondecreasing in $y$, hence minimized at $y=0$:
\begin{align*}
L(0)&= r\bigl(\lambda+p(1-\lambda)\bigr)
= r\Bigl(1-(1-\lambda)(1-p)\Bigr)
= r\Bigl(1-\frac{(1-p)r}{\,p+(1-p)r\,}\Bigr)\\
&= \frac{pr}{\,p+(1-p)r\,}
= \alpha
= R(y).
\end{align*}
Thus $L(y)\ge R(y)$ on $[0,r]$.

\smallskip\noindent
\emph{Case 2: $r\le y\le 1$.}
Now $\max\{y,r\}=y$, $\min\{y,r\}=r$, and $\max\{y,1\}=1$, hence
\begin{equation*}
L(y)=\lambda r + (1-\lambda)\bigl(py+(1-p)r\bigr)
= (1-\lambda)p\,y + r\bigl(\lambda+(1-\lambda)(1-p)\bigr).
\end{equation*}
This is non-decreasing in $y$, so its minimum on $[r,1]$ occurs at $y=r$:
\begin{equation*}
L(r)=\lambda r + (1-\lambda)\bigl(pr+(1-p)r\bigr) = r.
\end{equation*}
Since $\displaystyle \alpha=\frac{pr}{p+(1-p)r}\le r$, we have $L(y)\ge r\ge \alpha=R(y)$.

\smallskip\noindent
\emph{Case 3: $y\ge 1$.}
Again $\max\{y,r\}=y$, $\min\{y,r\}=r$, while $\max\{y,1\}=y$. Using $\lambda=1-\alpha/p$,
\begin{equation*}
L(y)=(1-\lambda)p\,y + r\bigl(\lambda+(1-\lambda)(1-p)\bigr)
= \alpha\,y + r\bigl(\lambda+(1-\lambda)(1-p)\bigr)
\ge \alpha\,y
= R(y).
\end{equation*}

All cases give $L(y)\ge R(y)$, proving \eqref{eq:pm-ineq}.
\end{proof}

We now prove \Cref{prop:lb-w/v}.
For $i \in [n]$, let $Y=X_i/V_{i+1}$, $r_{i+1}=W_{i+1}/V_{i+1}\leq1$, $\alpha = \frac{pr_{i+1}}{p + (1 - p) r_{i+1}}$ where $p = \frac{e^{\varepsilon}}{1 + e^{\varepsilon}}$ and fix $\lambda \in [0,1]$. From the recursion~\Cref{eq:wphi}, it holds that
\begin{align*}
W_i&=\max\left(W_{i+1},\mathbb{E}[X_i],p\mathbb{E}[\max(X_i,W_{i+1})]+(1-p)\mathbb{E}[\min(X_i,W_{i+1})]\right) \\
&\geq \max(W_{i+1},p\mathbb{E}[\max(X_i,W_{i+1})]+(1-p)\mathbb{E}[\min(X_i,W_{i+1})])\\
&\overset{(1)}{\geq} \lambda W_{i+1}+(1-\lambda)\left( p\mathbb{E}[\max(X_i,W_{i+1})]+(1-p)\mathbb{E}[\min(X_i,W_{i+1})]\right)\\
&=V_{i+1}\mathbb{E}[\lambda r_{i+1}+(1-\lambda)\left( p\max(Y,r_{i+1})+(1-p)\min(Y,r_{i+1})\right)]\\
&\overset{(2)}{\geq} V_{i+1} \alpha \mathbb{E}[\max(Y,1)]\\
&= \alpha \mathbb{E}[\max(X_i,V_{i+1})]\\
&= \frac{p r_{i+1}}{p+(1-p)r_{i+1}} V_i,
\end{align*}
where $(1)$ uses that the maximum of two points is above any convex combination, $(2)$ is by \Cref{lem:pointwise_bound} applied pointwise and then taking the expectation.\\

Therefore, $r_i \geq (pr_{i+1})/(p+(1-p)r_{i+1})$ or in other words $1/r_i \leq 1/r_{i+1}+(1-p)/p$. By plugging-in the definition of $r_i$, we get
\begin{equation}
\label{eq:w/v}
\frac{V_i}{W_i} \leq \frac{V_{i+1}}{W_{i+1}} + \frac{1-p}{p} 
\end{equation}
Iterating this inequality with the initialization $\frac{V_n}{W_n}=1$, yields the desired inequality 
\[
\frac{V_1}{W_1} \leq 1+ (n-1) (1-p)/p=1+(n-1)e^{-\varepsilon}.
\]
\end{proof}

\label{app:competitive}

\subsection{Proof of \Cref{prop:ub-w/v}}
\label{app:ub-w/v}
\begin{proof}
We first compute $\mathbb{E}[X_i], \mathbb{E}[X_i \mathds{1}[ Z_i=1]]$ and $\mathbb{E}[X_i \mathds{1}[Z_i=0]]$ for $R_{W_{i+1},\varepsilon}$ the optimal mechanism at step $i$ with respect to a future value of $W_{i+1}$, denoting $p=e^{\varepsilon}/(1+e^{\varepsilon})$.

\begin{align*}
\mathbb{E}[X_i]&=q+e^{-\varepsilon}(1-q) \leq q+(1-q)=1,\\
\mathbb{E}[X_i \mathds{1}[Z_i=1]]&=\mathbb{E}[X_i \Pr(Z_i=1 \mid X_i)]= (1+e^{-\varepsilon}\frac{1-q}{q}) \cdot \frac{e^{\varepsilon}}{1+e^{\varepsilon} } \cdot q= p \mathbb{E}[X_i]\\
\mathbb{E}[X_i \mathds{1}[Z_i=0]]&=(1+e^{-\varepsilon}\frac{1-q}{q}) \cdot \frac{1}{1+e^{\varepsilon} } \cdot q= (1-p) \mathbb{E}[X_i]
\end{align*}

We will now show the following property by recursion: $W_i=1$ for all $i \in [n]$. We already have $W_n=\mathbb{E}[X_n]=1$. Now assuming that $W_{i+1}=1$ and using the recursion of \Cref{th:optimal LDP stopping rule}, we will prove this holds for $W_i$. 

We can compute $W_i$ as the maximum of three terms. Because $\mathbb{E}[X_i]\leq 1 = W_{i+1}$, we only need to compare $W_{i+1}$ with the last term. 
We first compute each part:
\begin{align*}
&\mathbb{E}[\max(\mathbb{E}[X_i \mid Z_i],W_{i+1})]= \Pr(Z_i=1) \max(\mathbb{E}[X_i \mid Z_i=1],1)+\Pr(Z_i=0) \max(\mathbb{E}[X_i \mid Z_i=0],1)\\
&= \max(\mathbb{E}[X_i \mathds{1}[Z_i=1]],\Pr(Z_i=1))+ \max(\mathbb{E}[X_i \mathds{1}[Z_i=0]],\Pr(Z_i=0))\\
&=\max(p\mathbb{E}[X_i],pq+(1-p)(1-q))+\max((1-p)\mathbb{E}[X_i],p(1-q)+q(1-p)).
\end{align*}
But, $p \mathbb{E}[X_i]= pq + (1-p)(1-q)$, and $(1 - p) \mathbb{E}[X_i] = (1-p)q + (1-p)^2/p(1-q) \leq p(1-q)+q(1-p)$ given that $(1-p)^2/p \leq p$ as $p \geq 1/2$. Therefore

\begin{equation*}
 \mathbb{E}[\max(\mathbb{E}[X_i \mid Z_i],W_{i+1})]\leq  \Pr(Z_i=1)+\Pr(Z_i=0)=1.
\end{equation*}
Overall,
\begin{equation*}
p\mathbb{E}[\max(\mathbb{E}[X_i \mid Z_i],W_{i+1})]+ (1-p) \mathbb{E}[\min(\mathbb{E}[X_i \mid Z_i],W_{i+1})] \leq p\cdot 1+ (1-p) \cdot 1=1
\end{equation*}

Hence, we directly have the desired recurrence 
\begin{align*}
W_i&=\max(\mathbb{E}[X_i],W_{i+1},p\mathbb{E}[\max(\mathbb{E}[X_i \mid Z_i],W_{i+1})]+ (1-p)\mathbb{E}[\min(\mathbb{E}[X_i \mid Z_i],W_{i+1})])\\
&=W_{i+1}=1.
\end{align*}

In particular, $W_{\varepsilon}^{(n)}(\mathbf{X}^q)=W_1=1$. \\

Let us now compute the value of the standard, backward dynamic programming $V_i$. We have, for $q$ small enough such that $1+e^{-\varepsilon}(1-q)/q \geq V_{i+1}$, then
\begin{comment}
\hr{$V_{i+1}$ also depends on $q$ so the argument does not hold directly.
Fix $q \in (0, 1)$, it holds that $\forall i \in [n], V_i \leq 1 + \exp(-\varepsilon)(1 - q) / q$.
Indeed, $V_n = 1 \leq 1 + \exp(-\varepsilon)(1 - q) / q$ and $V_{i+1} \leq 1 + \exp(-\varepsilon)(1 - q) / q$ implies $V_i = \EE[\max(X_i, V_{i+1})] \leq 1 + \exp(-\varepsilon)(1 - q) / q$.
}
\end{comment}
\begin{equation*}
V_i=\mathbb{E}[\max(X_i,V_{i+1})]=(q+e^{-\varepsilon}(1-q))+(1-q)V_{i+1}.
\end{equation*}
Using that $V_n=1$, and taking $q \rightarrow 0$, we have $V_1= (n-1)e^{-\varepsilon}+1$.

This yields $W_1/V_1=1/((n-1)e^{-\varepsilon}+1)$ which is exactly the desired inequality.
\end{proof}

\subsection{Proof of \Cref{cor:distinct privacy}}
\label{app:distinct privacy}
\begin{proof}
    The lower bound is proved by adapting the proof of \Cref{prop:lb-w/v}. We can obtain the inequality 
    \[
    \frac{V_i}{W_i} \leq \frac{V_{i+1}}{W_{i+1}} + \frac{1 - p_i}{p_i}
    \]
    where $p_i = \frac{\exp(\varepsilon_i)}{1 + \exp(\varepsilon_i)}$, by replacing $p = \frac{\exp(\varepsilon)}{1 + \exp(\varepsilon)}$  in the proof of \Cref{eq:w/v} by $p_i$ and applying the same steps.
    Then iterating and using that $V_n / W_n = 1$, we get
    \[
    \frac{V_1}{W_1} \leq 1 + \sum_{i=1}^{n-1} \exp(-\varepsilon_i).
    \]

    The upper bound is proved by adapting the proof of \Cref{prop:ub-w/v}. Consider the following sequence: $X^q_n=1$ a.s. and for $i \in [n-1]$,
\begin{equation*}
X^q_i= \begin{cases}
    1+e^{-\varepsilon_i}\frac{1-q}{q}, \quad \text{with probability $q$},\\
    0, \quad \text{with probability $1-q$}.
\end{cases}
\end{equation*}
The same computation apply, replacing $p$ by $p_i$ when computing $\EE[X_i], \EE[X_i \ind{Z_i=1}], \EE[X_i \ind{Z_i=0}]$ and when using the recursion of \Cref{th:optimal LDP stopping rule}. We get $W_i = 1$ for any $i \in [n]$ and 
\begin{equation*}
V_i=\mathbb{E}[\max(X_i,V_{i+1})]=(q+e^{-\varepsilon_i}(1-q))+(1-q)V_{i+1}.
\end{equation*}
Using $V_n = 1$ and taking $q \rightarrow 0$, we get $V_1 =  1 + \sum_{i=1}^{n-1} \exp(-\varepsilon_i)$ and therefore
\[
\frac{W_1}{V_1} = \frac1{1 + \sum_{i=1}^{n-1} \exp(-\varepsilon_i)}.
\]

\end{proof}

\subsection{Proof of \Cref{prop:wagainstm}}
\label{app:wagainstm}
\begin{proof}
First note that by using the $0$-\ac{ldp} deterministic stopping rule $\tau=\argmax_{i \in [n]} \mathbb{E}[X_i]$, as $\mathbb{E}[\max_{i \in [n]} X_i] \leq \sum_{i \in [n]} \mathbb{E}[X_i]\leq n\max_{i \in [n]}\mathbb{E}[X_i]$ we can always achieve at least a $1/n$ competitive ratio against the non-private prophet. Hence, the lower bound in $\frac1{n}$. 
The lower bound $\frac{e^{\varepsilon}}{2e^{\varepsilon} - 2 + 2n}$ is immediate from \Cref{thm:cr_stopping_ldp} and the classical prophet lower bound $V/\mathbb{E}[\max_i X_i]\geq 1/2$, by writing out $W/\mathbb{E}[\max_i X_i]=(W/V)\cdot (V/\mathbb{E}[\max_i X_i])$.

For the upper bound, we will use a similar counter-example as in \Cref{thm:cr_stopping_ldp}. Use $X_i \sim X^q$ for $i=2,\dots,n-1$, $X_n=1/c$ with probability $c \in [0,1]$ and $X_n=0$ with probability $1-c$, and $X_1=1$ a.s.. Because $\mathbb{E}[X_n]=1$, the same reasonning can be applied for $W_1$ and $q$ small enough. Similarly, adding the constant of $1$ in front of the sequence does not increase the value of the optimal \ac{ldp} stopping, hence $W_1=1$.

We then take $c$ small enough so that $1/c> 1+e^{-\varepsilon}(1-q)/q$. There are three distincts relevant events: either $X_n=1/c$ so that $\max_i X_i=1/c$ which happens with probability $c$, $X_n=0$ and one of the $X^q$ is non-zero which happens with probability $(1-c)(1-(1-q)^{n-2})$, or all $X_i=0$ for $i>1$ which happens with probability $(1-c)(1-q)^{n-2}$. Overall,
\begin{align*}
\mathbb{E}[\max_i X_i]&=c \cdot \frac{1}{c}+(1-c)(1-(1-q)^{n-2})\cdot (1+e^{-\varepsilon}(\frac{1}{q}-1))+ (1-c)(1-q)^{n-2} \cdot 1 \\
&\xrightarrow[c \rightarrow 0]{} 1+(1-(1-q)^{n-2}) \cdot (1+e^{-\varepsilon}(\frac{1}{q}-1))+(1-q)^{n-2} \\
&=1+((n-2)q+o(q)) \cdot (1+e^{-\varepsilon}(\frac{1}{q}-1))+1-(n-2)q+o(q) \\
&= 2 + e^{-\varepsilon} (n-2)+  (1-e^{-\varepsilon})(n-2)q -(n-2)q +o(q)+o(1) \\
& \xrightarrow[q \rightarrow 0]{} 2+e^{-\varepsilon}(n-2),
\end{align*}
which immediately yields the desired upper bound. 
\end{proof}

\subsection{Proof of \Cref{prop: private T / M}}
\label{app:private T / M}

\begin{proof}
Let $i \leq n - 1$, $\alpha_i=\frac{1}{n-i+e^{\varepsilon}}$ and $\beta_i=\frac{e^{\varepsilon}}{n-i+e^{\varepsilon}}$. We have that 
$\frac{\beta_i}{\alpha_i} = e^\varepsilon$, and 
\begin{align*}
    \frac{1 - \alpha_i}{1 - \beta_i} &= 1 + \frac{e^\varepsilon - 1}{n - i}\\
    & \leq 1 + (e^\varepsilon - 1) = e^\varepsilon
\end{align*}
since $n - i \geq 1$. This means that $Q_i \in \mech_\varepsilon$.

Then we have 
\begin{align*}
\mathbb{E}[X_{\tau_{\varepsilon}}]&=\sum_{i \in [n]} \mathbb{E}[X_i \mathds{1}[\tau_{\varepsilon}=i]]\\ 
&=\sum_{i \in [n]} \mathbb{E}[X_i \mathbb{E}[\mathds{1}[\tau_{\varepsilon}=i] \mid X_1,\dots,X_n]].
\end{align*}

We then lower bound $\mathbb{E}[\mathds{1}[\tau_{\varepsilon}=i] \mid X_1,\dots,X_n]$. It holds that
\begin{align*}
&\mathbb{E}[\mathds{1}[\tau_{\varepsilon}=i] \mid X_1, \dots, X_n]\\
&=\mathbb{E}\bigg[\mathds{1}\bigg[Z_i = 1 \wedge \bigwedge_{j=1}^{i-1} Z_j = 0\bigg] \mid X_1, \dots, X_n\bigg]\\
&=\Pr(Z_i=1 \mid X_1,\dots,X_n)\prod_{j=1}^{i-1} \Pr(Z_j=0 \mid X_1,\dots,X_n)\\
&=\Pr(Z_i=1 \mid X_i)\prod_{j=1}^{i-1} \Pr(Z_j=0 \mid X_j)\\
&=(\mathds{1}[X_i\geq T] \beta_i + \mathds{1}[X_i <T] \alpha_i) \prod_{j=1}^{i-1} \left( \mathds{1}[X_j\geq T] (1-\beta_j) + \mathds{1}[X_j <T] (1-\alpha_j) \right)\\
&\geq \mathds{1}[X_i\geq T] \beta_i \prod_{j=1}^{i-1} \mathds{1}[X_j <T] (1-\alpha_j) \\
&= \left( \beta_i \prod_{j=1}^{i-1} (1-\alpha_j) \right) \left(  \mathds{1}[X_i \geq T] \prod_{j=1}^{i-1} \mathds{1}[X_j < T] \right) \\
& =\left( \beta_i \prod_{j=1}^{i-1} (1-\alpha_j) \right) \mathds{1}[\tau(T)=i]. 
\end{align*}

The important part of the stopping rule, the mechanisms $Q_i$, were chosen in order to equalize the product of the $\alpha_i$ and $\beta_i$, which is equal to 
\begin{equation*}
 \beta_i \prod_{j=1}^{i-1} (1-\alpha_j) = \frac{e^{\varepsilon}}{n-i+e^{\varepsilon}} \prod_{j=1}^{i-1} \left(1- \frac{1}{n-j+e^{\varepsilon}}\right)=\frac{e^{\varepsilon}}{n-i+e^{\varepsilon}} \prod_{j=1}^{i-1} \frac{n-(j+1)+e^{\varepsilon}}{n-j+e^{\varepsilon}}= \frac{e^{\varepsilon}}{n-1+e^{\varepsilon}}.
\end{equation*}
Hence, 
\begin{align*}
\mathbb{E}[X_{\tau_{\varepsilon}}]&=\sum_{i \in [n]} \mathbb{E}[X_i \mathds{1}[\tau_{\varepsilon}=i]]\\ 
&=\sum_{i \in [n]} \mathbb{E}[X_i \mathbb{E}[\mathds{1}[\tau_{\varepsilon}=i] \mid X_1,\dots,X_n]] \\
& \geq \sum_{i \in [n]} \mathbb{E}[X_i  \frac{e^{\varepsilon}}{n-1+e^{\varepsilon}}\mathds{1}[\tau(T)=i ]]\\
&= \frac{e^{\varepsilon}}{n-1+e^{\varepsilon}} \mathbb{E}[X_{\tau(T)}]. \qedhere
\end{align*}
\end{proof}

\section{Missing proofs of \Cref{sec:competitive ratio}}

\subsection{Proof of \Cref{lemma:sublinear}}
\label{app:sublinear}
\begin{proof}
Let $V\coloneqq (v_1,\dots,v_m)$, for $i \in [n]$ denote by $P^{(i)}=(\Pr(X_i=v_1),\dots,\Pr(X_i=v_m))$ the probability vector of $X_i$, and define $U^{(i)}=(v_1 \Pr(X_i=v_1),\dots,v_m \Pr(X_i=v_m))$ the coordinate wise product between $P^{(i)}$ and $V$. 
Let $Q_1,\dots,Q_n \in \mathcal{Q}_{\varepsilon}^{n}$ and $\pi^{i}$ be the corresponding joint law on $(X_i,Z_i)$. For $z \in \mathcal{Z}$, we define by $Q_{i,z}=(Q_i(z \mid v_1),\dots,Q_i(z \mid v_m))$. For $j \in [n]$, we can explicit the posterior mean and the probability of a given $Z_i$:
\begin{align*}
&\Pr(Z_i=z)=\sum_{j=1}^m \Pr(Z_i=z \mid X_i=v_j) \Pr(X_i=v_j)= \langle Q_{i,z} ,P^{i}  \rangle,\\
& \mathbb{E}[X_i \mid Z_i=z]= \sum_{j=1}^m v_j \frac{\Pr(X_i=v_j) \Pr(Z_i=z \mid X_i=v_j)}{\Pr(Z_i=z)}=\frac{\langle Q_{i,z}, U^{(i)}\rangle}{\Pr(Z_i=z)}.
\end{align*}
Using the above formulas, we have
\begin{align*}
L(Q_1,\dots,Q_n)&=\sum_{z_1,\dots,z_n} \left( \prod_{i=1}^{n} \Pr(Z_i=z_i)\right) \max_{i \in [n]} \mathbb{E}[X_i \mid Z_i=z_i]  \\
&=\sum_{z_1,\dots,z_n} \left( \prod_{i=1}^{n} \Pr(Z_i=z_i)\right) \max_{i \in [n]} \frac{\langle Q_{i,z_i}, U^{(i)}\rangle}{\Pr(Z_i=z_i)} \\
&= \sum_{z_1,\dots,z_n} \max_{i \in [n]} \left \{ \langle Q_{i,z_i}, U^{(i)}\rangle  \prod_{j \in [n]\setminus \{i\}} ^{n} \Pr(Z_j=z_j) \right\}\\
&= \sum_{z_1,\dots,z_n} \max_{i \in [n]} \left \{ \langle Q_{i,z_i}, U^{(i)}\rangle  \prod_{j \in [n]\setminus \{i\}} ^{n} \langle Q_{j,z_j}, P^{(j)} \rangle \right\}.
\end{align*}
Now, fixing all other $Q_j$ for $j\neq i$, and fixing a single vector $(z_1,\dots,z_n)$, the maximum term is $\max_j( \langle Q_{i,z}, t_j\rangle  c_j)$ for some vector $t_j$ and coefficient $c_j$. It is homogenous in $Q_{i,z}$ by linearity of the scalar product, and it is sub-additive as 
\begin{equation*}
\max((x_1+x_2)y_1,(x_1+x_2)y_2) =\max(x_1y_1+x_2y_1,x_1y_2+x_2y_2) \leq  \max(x_1y_1,x_1y_2)+\max(x_2y_1,x_2y_2).
\end{equation*}
A sum of homogeneous and sub-additive functions is still homogeneous and sub-additive, hence so is $Q_i \mapsto L(Q_1,\dots,Q_n)$.
\end{proof}

\subsection{Proof of \Cref{thm:w/l}}
\label{app:w/l}
\begin{proof}
We use a strategy similar to that of~\cite{hill1992survey}. In \Cref{lem:reduc_w}, we show that there exists a constant $\lambda$ such that replacing $X_1$ by $\lambda$ decreases the competitive ratio, and in \Cref{lem:long shot}, that replacing the last two variables by a long-shot also decreases the competitive ratio. The worst-case instance is therefore with two variables, the first is a constant, and the second is a long shot. The proof concludes by computing the competitive ratio on this instance. The proof of \Cref{lem:reduc_w} and \Cref{lem:long shot} is given in \Cref{app:long shot} and \Cref{app:reduc_w}.

We now use the two previous Lemmas to finish the proof of \Cref{thm:w/l}. 
Fix any random variables $X_{1:n}$ over $\RR_+$, let $\lambda = \Wn(X_{2:n})$.

Then,
\begin{align*}
    \frac{\Wn(X_{1:n})}{\Ln(X_{1:n})} &\geq \frac{\Wn(\lambda, X_{2:n})}{\Ln(\lambda, X_{2:n})} && \text{(By \Cref{lem:reduc_w})}
\end{align*}

When $n=1$, the competitive ratio is $\Wn(X_1) / \Ln(X_1) = \frac{\EE[X_1]}{\EE[X_1]} = 1$. We consider separately the cases $n=2$ and $n > 2$.

\paragraph{Case $n=2$} If $n=2$, $\lambda = \EE[X_2]$

\begin{align*}
    \Wn(\lambda, X_2) = \EE[X_2]
\end{align*}
and
\begin{align*}
    \Ln(\lambda, X_2) &= \max_{Q_2} \EE[\max(\EE[X_2], \EE[X_2 \mid Z_2])] \\
    &= \max(\EE[X_2], \EE[\phi(X_2, \EE[X_2])]) \\
    &\leq 2p(\EE[X_2]).
\end{align*}
So that for $n=2$
\[
\frac{\Wn(X_{1:n})}{\Ln(X_{1:n})} \geq \frac1{2p}.
\]

\paragraph{Case $n > 2$}
Fix $\alpha < \frac12$, it holds
\begin{align*}
    \frac{\Wn(\lambda, X_{2:n})}{\Ln(\lambda, X_{2:n})} &\geq \frac{\Wn(\lambda, X_{2:n-2}, L_\alpha)}{\Ln(\lambda, X_{2:n-2}, L_\alpha)}(1 - \alpha) && \text{(By \Cref{lem:long shot})} \\
    &\geq \frac{\Wn(\lambda, L_\alpha)}{\Ln(\lambda, L_\alpha)}(1 - \alpha)^{n-2} && \text{(By \Cref{lem:long shot})} \\
    &\geq \frac{1}{2p}(1 - \alpha)^{n-2}
\end{align*}
Taking the limit $\alpha \rightarrow 0$ gives 
\[
\forall n \geq 1, \frac{\Wn(X_{1:n})}{\Ln(X_{1:n})} \geq \frac{1}{2p}.
\]
Then, taking the infimum on the left we obtain
\[
\inf_{\Dcal_1, \dots, \Dcal_n \in \Delta(\RR_+)} \frac{\Wn(X_{1:n})}{\Ln(X_{1:n})} \geq \frac1{2} \frac{\exp(\varepsilon) + 1}{\exp(\varepsilon)}
\]

\paragraph{Equality case} For the equality case, take $n=2$ and consider $X_1 = 1$ and for $\alpha < 1$
\begin{align*}
    X_2 = \begin{cases}
        \frac1{\alpha} & \text{with probability $\alpha$} \\
        0 & \text{with probability $1 - \alpha$}
    \end{cases}
\end{align*}
We have following the case $n=2$ studied above $\Wn(X_1, X_2) = 1$ and
\begin{align*}
    \Ln(X_1, X_2) &= \EE[\phi(X_2, \EE[X_2])] \\
    &= \alpha (p \frac1{\alpha} + (1 - p)) + (1 - \alpha) p \\
    &= p + (1 - p)\alpha + (1 - \alpha)p \\
    &= 2p - (2p - 1) \alpha
\end{align*} 
Therefore
\begin{align*}
\inf_{\alpha \in (0, \frac12)} \frac{\Wn(X_1, X_2)}{\Ln(X_1, X_2)} &= \inf_{\alpha \in (0, \frac12)} \frac{1}{2p - (2p - 1) \alpha} \\
&= \frac1{2p}
\end{align*}
which concludes the proof.
\end{proof}

\subsubsection{Proof of \Cref{lem:reduc_w}}
\label{app:long shot}

\begin{proof}[Proof of \Cref{lem:reduc_w}]
    From \Cref{eq:recursionwi}, it holds that $\Wn(\lambda, X_{2:n}) = \Wn(X_{2:n}) = \lambda$.

    Then, again from \Cref{eq:recursionwi} and using the notation $(a)_+ = \max(a, 0)$ we have
    \begin{align*}
        \Wn(X_{1:n}) 
        &= \max_{Q_1 \in \mech_\varepsilon} \EE[\max(\Wn(X_{2:n}), \EE[X_1 | Z_1])] \\
        &= \max_{Q_1 \in \mech_\varepsilon} \EE[\Wn(X_{2:n}) + \big(\EE[X_1 | Z_1] - \Wn(X_{2:n}) \big)_+] \\
        &= \underbrace{\Wn(X_{2:n})}_{=\Wn(\lambda, X_{2:n})} + \max_{Q_1 \in \mech_\varepsilon} \EE[\big(\EE[X_1 | Z_1] - \underbrace{\Wn(X_{2:n})}_{=\lambda} \big)_+].
    \end{align*}
    We therefore obtain
    \begin{equation}
    \label{eq:W_1 lambda}
        \Wn(X_{1:n}) = \Wn(\lambda, X_{2:n}) + \max_{Q_1 \in \mech_\varepsilon} \EE[\big(\EE[X_1 | Z_1] - \lambda \big)_+].
    \end{equation}

    Let us now upper-bound the value of the prophet
    \begin{align*}
        \Ln(X_{1:n}) &\leq \Ln(\lambda, X_{1:n}) \\
        &= \max_{Q_{1}, \dots, Q_n \in \mech_\varepsilon} \EE[\max(\lambda, \EE[X_1 | Z_1], \dots, \EE[X_n | Z_n])] \\
        &= \max_{Q_{1}, \dots, Q_n \in \mech_\varepsilon} \EE[\max_{i \in \{2, \dots, n\}}(\lambda, \EE[X_i | Z_i])  \\ &~~+ \big(\EE[X_1 | Z_1] - \underbrace{\max_{i \in \{2, \dots, n\}}(\lambda, \EE[X_i | Z_i])}_{\geq \lambda}) \big)_+] \\
        &\leq \max_{Q_{2:n} \in \mech_\varepsilon} \EE[\max_{i \in \{2, \dots, n\}}(\lambda, \EE[X_i | Z_i])] + \max_{Q_1 \in \mech_\varepsilon}\EE[\big(\EE[X_1 | Z_1] - \lambda) \big)_+] \\
         &= \Ln(\lambda, X_{2:n}) + \max_{Q_1 \in \mech_\varepsilon}\EE[\big(\EE[X_1 | Z_1] - \lambda) \big)_+] \\
    \end{align*}
    We have shown
    \begin{equation}
        \label{eq:ldp lambda}
        \Ln(X_{1:n}) \leq \Ln(\lambda, X_{2:n}) + \max_{Q_1 \in \mech_\varepsilon}\EE[\big(\EE[X_1 | Z_1] - \lambda) \big)_+]. \\
    \end{equation}
    From \Cref{eq:W_1 lambda,eq:ldp lambda}, we get
    \begin{align*}
    \frac{\Wn(X_{1:n})}{\Ln(X_{1:n})} &\geq \frac{\Wn(\lambda, X_{2:n}) + \max_{Q_1 \in \mech_\varepsilon} \EE_{\substack{Z_1 \sim Q_1(\cdot | X_1) \\ X_1 \sim \Dcal_1}}[\big(\EE[X_1 | Z_1] - \lambda \big)_+]}{\Ln(\lambda, X_{2:n}) + \max_{Q_1 \in \mech_\varepsilon} \EE_{\substack{Z_1 \sim Q_1(\cdot | X_1) \\ X_1 \sim \Dcal_1}}[\big(\EE[X_1 | Z_1] - \lambda \big)_+]} \\
    &\geq \frac{\Wn(\lambda, X_{2:n})}{\Ln(\lambda, X_{2:n})}
    \end{align*}
    which concludes the proof.
\end{proof}

\subsubsection{Proof of \Cref{lem:long shot}}
\label{app:reduc_w}

\begin{proof}[Proof of \Cref{lem:long shot}]
    From \Cref{eq:recursionwi}, we have on one hand
    \begin{align*}
        \Wn(X_{n-2}, X_{n-1}, X_n) 
        &= \max_{Q_{n-2} \in \mech_\varepsilon}\EE[\max(\Wn(X_{n-1}, X_{n}), \EE[X_{n-2} | Z_{n-2}])] \\
        &= \Wn(X_{n-2}, \Wn(X_{n-1}, X_n)).
    \end{align*}
    On the other hand, fix $\alpha > 0$, we can write
    \begin{align*}
        \Wn(X_{n-2}, L_\alpha) &= \EE_{Q_{n-2} \in \mech_\varepsilon}[\max(\Wn(L_\alpha), \EE[X_{n-2} | Z_{n-2}]]\\
        &= \Wn(X_{n-2}, \Wn(X_{n-1}, X_n))
    \end{align*}
     where the last inequality follows since $\Wn(L_\alpha) = \EE[L_\alpha] = \Wn(X_{n-1}, X_n)$. 
     Then by recurrence, we show that for any $i \in [n-2]$, $\Wn(X_{i:n}) = \Wn(X_{i:n-2}, L_\alpha)$. The case $i=n-2$ is done above, and the induction follows from \Cref{eq:recursionwi} which implies $\Wn(X_i, X_{i+1:n}) = \Wn(X_i, \Wn(X_{i+1:n}))$. 
     
     Taking, $i=1$, we obtain
     \begin{equation}
     \label{eq:long shot W}
         \forall \alpha > 0, \Wn(\lambda, X_{2:n}) = \Wn(\lambda, X_{2:n-2}, L_\alpha).
     \end{equation}

     We now show that $\Ln(\lambda, X_{2:n})(1 - \alpha) \leq \Ln(\lambda, X_{2:n-2}, L_\alpha)$ which with \Cref{eq:long shot W} will conclude the proof.

     First, fix $\alpha < \frac12$ and  let us lower bound
     \begin{align*}
         &\Ln(\lambda, X_{2:n-2}, L_{\alpha}) \\
         &= \max_{Q_{2:{n-1}} \in \mech_\varepsilon} \EE[\max \bigg(\underbrace{\max_{i=2}^{n-2}\big(\lambda, \EE[X_i | Z_i]\big)}_{:=A}, \EE[L_\alpha | Z_{n-1}] \bigg)].
     \end{align*}
     
     Let us pick $Q_{n-1} = Q \in \mech_\varepsilon$ where
     \[
     Q: \ell \in \RR_+ \mapsto Q(Z | \ell) \in \Delta(\{0, 1\})
     \]
    is defined by
     \begin{align*}
         &\forall \ell > 0, Q(Z = 1 | \ell) = \frac{\exp(\varepsilon)}{\exp(\varepsilon) + 1} := p \\
         &\forall \ell > 0, Q(Z = 0 | \ell) = 1 - p \\
         &Q(Z = 1 | \ell=0) = 1 - p \\
         &Q(Z = 0 | \ell=0) = p.
     \end{align*}

     Call $Z$ the $\varepsilon$-LDP view of $L_\alpha$ through $Q$ (meaning $Z | L \sim Q(\cdot | L))$). We have
     \begin{align*}
         &\Ln(\lambda, X_{2:n-2}, L_{\alpha}) \\
         &\geq \max_{Q_{2:{n-2}} \in \mech_\varepsilon} \EE[\max (A, \EE[L_\alpha | Z])] \\
         &= \max_{Q_{2:{n-2}}} \sum_{i=0}^1 \PP(Z = i)\EE[\max (A, \EE[L_\alpha | Z=i])] \\
         &= \max_{Q_{2:{n-2}}} \sum_{i=0}^1 \EE[\max (\PP(Z = i) A, \EE[L_\alpha \ind{Z=i}])] \\
         &\overset{(1)}{=} \max_{Q_{2:{n-2}}} \EE\bigg[\max\big(A, \EE[L_\alpha], \PP(Z=0) A +  \EE[L_\alpha \mathds{1}[Z=1]], \\&~~\PP(Z=1) A +  \EE[L_\alpha \mathds{1}[Z=0]]\big) \bigg] \\
         &\overset{(2)}{=} \max_{Q_{2:{n-2}}} \EE\bigg[\max\big(A, \PP(Z=0) A +  \EE[L_\alpha \mathds{1}[Z=1]], \\&~~\PP(Z=1) A +  \EE[L_\alpha \mathds{1}[Z=0]]\big) \bigg]
     \end{align*}
     where $(1)$ is by $\max(a, b) + \max(c, d) = \max(a + c, a+d, b+c, b+d)$ and $(2)$ follows from $A \geq \Wn(X_{n-1}, X_n) = \EE[L_\alpha]$.

     From the definition of $Q$ and $L_\alpha$, we get
     \begin{align*}
      &\PP(Z = 1) = p \alpha + (1 - p)(1 - \alpha) \\
      &\PP(Z = 0) = (1 - p) \alpha + p(1 - \alpha) \\
      &\EE[L_\alpha \mathds{1}[Z=1]] = \Wn(X_{n-1}, X_n) p \\
      &\EE[L_\alpha \mathds{1}[Z=0]] = \Wn(X_{n-1}, X_n) (1 - p).
     \end{align*}
     Notice that $\PP(Z = 1) < \PP(Z=0)$ and $\EE[L_\alpha \mathds{1}[Z=0]] < \EE[L_\alpha \mathds{1}[Z=0]] $ since $\alpha < \frac12$.

     After substitution, we get:
     \begin{align*}
         &\Ln(\lambda, X_{2:n-2}, L_{\alpha}) \\
         &\geq \max_{Q_{2:{n-2}}} \EE\bigg[\max\big(A, \PP(Z=0) A +  \EE[L_\alpha \mathds{1}[Z=1]] \big) \bigg] \\
         &= \max_{Q_{2:{n-2}}} \EE\bigg[\max\big(A, (p - \alpha(2p - 1)) A +  p \Wn(X_{n-1}, X_n) \big) \bigg] \\
         &\geq \max_{Q_{2:{n-2}}} \EE\bigg[\max\big(A - \alpha A, pA - \alpha A +  p \Wn(X_{n-1}, X_n) \big) \bigg] \\
         &\geq \max_{Q_{2:{n-2}}} \EE\bigg[\max\big(A, p (A +  \Wn(X_{n-1}, X_n)) \big) \bigg] - \max_{Q_{2:{n-2}}} \alpha \EE[A] \\
         &\geq \max_{Q_{2:{n-2}}} \EE\bigg[\max\big(A, p (A +  \Wn(X_{n-1}, X_n)) \big) \bigg] - \alpha \Ln(\lambda, X_{2:n})
     \end{align*}

     To conclude the proof, we need to show that
     \[
     \Ln(\lambda, X_{2:n-2}, X_{n-1:n}) \leq \EE[\max(A, p(A + \Wn(X_{n-1}, X_n)))].
     \]

     We have
     \begin{align*}
         &\Ln(\lambda, X_{2:n-2}, X_{n-1:n}) \\
         &= \max_{Q_{2:n} \in \mech_\varepsilon} \EE[\max \bigg(A, \underbrace{\EE[X_{n-1} \mid Z_{n-1}]}_{:=Y_{n-1}}, \underbrace{\EE[X_{n} \mid Z_{n}]}_{:=Y_n}\bigg)]
     \end{align*}
     Fix $Q_{2:n-1}$, call $\pi_A, \pi_{Y_n}, \pi_{Y_{n-1}}$ the distribution of $A, Y_n, Y_{n-1}$ respectively and denote $p = \exp(\varepsilon) / (1 + \exp(\varepsilon))$. We have
     \begin{align*}
         &\max_{Q_n} \EE[\max (A, Y_{n-1}, Y_n)] \\
         &\overset{(1)}{=} \max_{Q_n} \int \EE[\max(a, y, Y_n )] d \pi_A(a) d \pi_{Y_{n-1}}(y) \\
         &\overset{(2)}{\leq} \int \max_{Q_n} \EE[\max(a, y, Y_n )] d\pi_A(a) d \pi_{Y_{n-1}}(y)\\
         &\overset{(3)}{=} \int \max\Big(a, y, \EE[X_n], \EE[\phi(X_n, \max(a, y))]\Big) d\pi_A(a) d \pi_{Y_{n-1}}(y) \\
         &\overset{(4)}{=} \int \max\Big(a, y, \EE[\phi(X_n, \max(a, y))]\Big) d\pi_A(a) d \pi_{Y_{n-1}}(y) \\
         &\overset{(5)}{\leq}\int \max\Big(a, y, \EE[p (X_n + \max(a, y))]\Big) d\pi_A(a) d \pi_{Y_{n-1}}(y) \\
         &\overset{(6)}{=} \int (p \max(a, y) + \max((1-p) \max(a, y), p \EE[X_n])) d\pi_A(a) d \pi_{Y_{n-1}}(y) \\
         &= \int \EE\Big[ p \max(a, Y_{n-1}) + (1 - p)\max(Y_{n-1}, \max(a, \frac{p}{1 - p} \EE[X_n])) \Big]d \pi_{A}(a)
     \end{align*}
     where $(1)$ is by mutual independence of $A, Y_n, Y_{n-1}$, $(2)$ is by Jensen's inequality, $(3)$ is by \Cref{prop:optimalchannel}, $(4)$ is by $A \geq \EE[X_n]$, $(5)$ is by $\phi(x, y) \leq p(x+ y)$, $(6)$ is by $\max(a + c, b + c) = \max(a, b) + c$. 
     
     Taking the maximum with respect to $Q_{n-1}$ on both sides, we get
     \begin{align*}
         &\max_{Q_{n-1:n}} \EE[\max(A, Y_{n-1}, Y_n)] \\ 
         &\leq \max_{Q_{n-1}} \int \EE\Big[ p \max(a, Y_{n-1}) + (1 - p)\max(Y_{n-1}, \max(a, \frac{p}{1 - p} \EE[X_n]))\Big]d \pi_{A}(a) \\
         &\overset{(1)}{\leq} \int \max_{Q_{n-1}} \EE\Big[ p \max(a, Y_{n-1}) + (1 - p) \max(Y_{n-1}, \max(a, \frac{p}{1 - p} \EE[X_n]))\Big]d \pi_{A}(a) \\
         &\overset{(2)}{\leq} \int \bigg(p \max_{Q_{n-1}} \EE\Big[ \max(a, Y_{n-1})\Big] \\& + (1 - p)\max_{Q_{n-1}} \EE\Big[\max(Y_{n-1}, \max(a, \frac{p}{1 - p} \EE[X_n]))\Big]\bigg) d\pi_{A}(a) \\
         &\overset{(3)}{=} \int \bigg( p \max(a, \EE[X_{n-1}], \EE[\phi(X_{n-1}, a)]) \\& + (1 - p) \max(a, \frac{p}{1 - p} \EE[X_n], \EE[X_{n-1}], \EE[\phi(X_{n-1}, \max(a, \frac{p}{1 - p} \EE[X_n]))]) \bigg) d\pi_{A}(a) \\
         &\overset{(4)}{\leq} \int \bigg( p \max(a, \EE[\phi(X_{n-1}, a)]) \\& + (1 - p) \max(a, \frac{p}{1 - p} \EE[X_n], \EE[\phi(X_{n-1}, \max(a, \frac{p}{1 - p} \EE[X_n]))]) \bigg) d\pi_{A}(a) \\
         &= \int \bigg( \ind{a \geq \frac{p}{1 - p} \EE[X_n]} \bigg[ p \max(a, \EE[\phi(X_{n-1}, a)]) + (1 - p) \max(a, \EE[\phi(X_{n-1}, a)] \bigg] + \\
         &~\ind{a < \frac{p}{1 - p} \EE[X_n]} \bigg[ p \max(a, \EE[\phi(X_{n-1}, a)]) \\& + (1 - p) \max(\frac{p}{1 - p} \EE[X_n], \EE[\phi(X_{n-1}, \frac{p}{1 - p} \EE[X_n])]) \bigg] \bigg) d\pi_{A}(a) \\
         &\overset{(5)}{=} \int \bigg( \ind{a \geq \frac{p}{1 - p} \EE[X_n]} \bigg[ \max(a, \EE[\phi(X_{n-1}, a)]) \bigg] + \\
         &~\ind{a < \frac{p}{1 - p} \EE[X_n]} \bigg[ p \max(a, \EE[\phi(X_{n-1}, a)]) \\& + \max(p \EE[X_n], \EE[\phi((1 - p) X_{n-1}, p \EE[X_n])]) \bigg] \bigg) d\pi_{A}(a) \\
         &\overset{(6)}{\leq} \int \bigg(  \max(a, \EE[\phi(X_{n-1}, a)], \\
         &~ \underbrace{p \max(a, \EE[\phi(X_{n-1}, a)]) + \max(p \EE[X_n], \EE[\phi((1 - p) X_{n-1}, p \EE[X_n])])}_{(i)}) \bigg)d\pi_{A}(a) \\
         &\overset{(7)}{\leq} \int \bigg(  \max(a, p(a + \EE[X_{n-1}]), (i) ) \bigg)d\pi_{A}(a) \\
         &\overset{(8)}{\leq} \int \bigg(  \max(a, p(a + \Wn(X_{n-1}, X_n)), (i) ) \bigg)d\pi_{A}(a)
     \end{align*}
     where $(1)$ is from Jensen inequality, $(2)$ is from \\
     $\max_Q (f(Q) + g(Q)) \leq \max_Q f(Q) + \max_Q g(Q)$, $(3)$ is from \Cref{prop:optimalchannel}, $(4)$ is by $A \geq \EE[X_n]$ and $A \geq \EE[X_{n-1}]$, $(5)$ is by $c \phi(x, y) = \phi(cx, cy)$, $(6)$ is by $\ind{E} x + \ind{\overline{E}} y \leq \max(x, y)$ where $E$ and $\overline{E}$ are complementary event, $(7)$ is by $\phi(x, y) \leq p(x + y)$ and $(8)$ is by $\EE[X_{n-1}] \leq \Wn(X_{n-1}, X_n)$.

     We will now show that $(i) \leq p(a + \Wn(X_{n-1}, X_n))$. We have 
     \[
     (i) = \max( (a), (b))
     \]
     where
     \begin{align*}
         (a) &= p (\max(a, \EE[\phi(X_{n-1}, a)]) + \EE[X_n]) \\
         (b) &= p \max(a, \EE[\phi(X_{n-1}, a)]) + \EE[\phi((1 - p) X_{n-1}, p \EE[X_n])] \\
         &= \max\big(\underbrace{pa + \EE[\phi((1 - p) X_{n-1}, p \EE[X_n])]}_{(b1)},\\
         &\underbrace{\EE[p \phi(X_{n-1}, a) + \phi((1 - p) X_{n-1}, p \EE[X_n])]}_{(b2)} \big).
     \end{align*}

     First, we have
     \begin{align*}
         (a) &= p (\max(a, \EE[\phi(X_{n-1}, a)]) + \EE[X_n]) \\
         &= p( a + \max(0, \EE[\phi(X_{n-1}, a)] - a) + \EE[X_n]) \\
         &= p( a + \max(0, \EE[\phi(X_{n-1} -a, 0)] ) + \EE[X_n]) && \text{($\phi(x, y) -c =\phi(x - c, y -c) $)} \\
         &\leq p( a + \max(0, \EE[\phi(X_{n-1} -\EE[X_n], 0)] ) + \EE[X_n]) && \text{($\EE[X_n] \leq a$)} \\
         &= p( a + \max(\EE[X_n], \EE[\phi(X_{n-1}, \EE[X_n])])) \\
         &\leq p(a + \Wn(X_{n-1}, X_{n})).
     \end{align*}

     Then, it holds
     \begin{align*}
         (b1) &= pa + \EE[\phi((1 - p) X_{n-1}, p \EE[X_n])] \\
         &\leq p(a + (1 - p) \EE[X_{n-1}] + p\EE[X_n]) && \text{(by $\phi(x,y) \leq p(x + y)$)} \\
         &\leq p(a + \Wn(X_{n-1}, X_n))
     \end{align*}
     where the last inequality follows from $\EE[X_{n-1}] \leq \Wn(X_{n-1}, X_n)$ and $\EE[X_{n}] \leq \Wn(X_{n-1}, X_n)$.
     
     Lastly, noticing that
     \begin{equation}
     \phi(x, y) = \phi(y, x) = px + (1 - p) y + (2p -1) (y - x)_+,
     \label{eq:phi diff}
     \end{equation}
     we write
     \begin{align*}
         (b2) &= \EE[p \phi(X_{n-1}, a) + \phi((1 - p) X_{n-1}, p \EE[X_n])] \\
         &\overset{(1)}{\leq} \EE[p \phi(X_{n-1}, a)] + p((1-p)\EE[X_{n-1}] + p \EE[X_n]) \\
         &\overset{(2)}{=} p(pa + (1 - p) \EE[X_{n-1}] + (2p - 1) \EE[(X_{n-1} - a)_+]) \\ 
         &~~ + p((1-p)\EE[X_{n-1}] + p \EE[X_n]) \\
         &= p(p \EE[X_n] + (1 - p) \EE[X_{n-1}] + (2p-1) \EE[(X_{n-1} - a)_+]) \\
         &~~+ p((1 - p) \EE[X_{n-1}] + pa) \\
         &\overset{(3)}{\leq} p(p \EE[X_n] + (1 - p) \EE[X_{n-1}] + (2p-1) \EE[(X_{n-1} - \EE[X_n])_+]) + pa \\
         &\overset{(4)}{=} p(\EE[\phi(X_{n-1}, \EE[X_n])]) + pa \\
         &= p(a + W(X_{n-1}, X_n)),
     \end{align*}
     where $(1)$ is from $\phi(x, y) \leq p(x + y)$, $(2)$ is by \Cref{eq:phi diff}, $(3)$ uses $a \geq \EE[X_n]$ and $a \geq \EE[X_{n-1}]$ and $(4)$ is by \Cref{eq:phi diff}.

     We have therefore shown that
     \[
     (i) \leq p(a + \Wn(X_{n-1}, X_n))
     \]
     and hence we have
     \begin{align*}
         &\max_{Q_{n-1:n}} \EE[\max(A, Y_{n-1}, Y_n)] \\
         &\leq \int \bigg(  \max(a, p(a + \Wn(X_{n-1}, X_n)), (i) ) \bigg)d\pi_{A} \\
         &= \int \bigg(  \max(a, p(a + \Wn(X_{n-1}, X_n))) \bigg)d\pi_{A} \\
         &= \EE[\max(A, p(A + \Wn(X_{n-1}, X_n)))].
     \end{align*}
     which concludes the proof.
\end{proof}

\end{document}